\theoremstyle{plain}
\newtheorem{theorem}{Theorem}[section]
\newtheorem{lemma}[theorem]{Lemma}
\newtheorem{corollary}[theorem]{Corollary}
\theoremstyle{definition}
\theoremstyle{remark}
\newtheorem{remark}[theorem]{Remark}
\numberwithin{equation}{section}
\newcommand{\dps}{\displaystyle}
\newcommand{\ii}{\infty}
\newcommand\R{{\ensuremath {\mathbb R} }}
\newcommand\C{{\ensuremath {\mathbb C} }}
\newcommand\N{{\ensuremath {\mathbb N} }}
\newcommand\Z{{\ensuremath {\mathbb Z} }}
\newcommand\1{{\ensuremath {\mathds 1} }}
\newcommand\bP{{\ensuremath {\mathds P} }}
\newcommand\bQ{{\ensuremath {\mathds Q} }}
\renewcommand\phi{\varphi}
\newcommand{\gS}{\mathfrak{S}}
\newcommand{\cE}{\mathcal{E}}
\newcommand{\eps}{\epsilon}
\renewcommand{\epsilon}{\varepsilon}
\newcommand\pscal[1]{{\ensuremath{\left\langle #1 \right\rangle}}}
\newcommand{\norm}[1]{ \left| \! \left| #1 \right| \! \right| }
\newcommand{\tr}{{\rm Tr}\,}
\renewcommand{\geq}{\geqslant}
\renewcommand{\leq}{\leqslant}
\renewcommand{\tilde}{\widetilde}
\newcommand{\nn}{\nonumber}
\newcommand{\bDelta}{\mathbf{\Delta}}
\title{Statistical Mechanics of the Uniform Electron Gas}
\author[M. Lewin]{Mathieu Lewin}
\address{CNRS \& CEREMADE, Universit\'e Paris-Dauphine, PSL Research University, F-75016 Paris, France} 
\email{mathieu.lewin@math.cnrs.fr}
\author[E.H. Lieb]{Elliott H. Lieb}
\address{Departments of Mathematics and Physics, Jadwin Hall, Princeton University, Washington Rd., Princeton, NJ 08544, USA}
\email{lieb@princeton.edu}
\author[R. Seiringer]{Robert Seiringer}
\address{IST Austria (Institute of Science and Technology Austria), Am Campus 1, 3400 Klosterneuburg, Austria}
\email{robert.seiringer@ist.ac.at}
\date{November 14, 2017}
\keywords{Uniform electron gas, Density Functional Theory, thermodynamic limit, statistical mechanics, mean-field limit, optimal transport}
\begin{document}

\maketitle

\begin{abstract}
In this paper we define and study the classical Uniform Electron Gas (UEG), a system of infinitely many electrons whose density is constant everywhere in space. The UEG is defined differently from Jellium, which has a positive constant background but no constraint on the density. We prove that the UEG arises in Density Functional Theory in the limit of a slowly varying density, minimizing the indirect Coulomb energy. We also construct the quantum UEG and compare it to the classical UEG at low density. 

\bigskip

\noindent \sl \copyright~2017 by the authors. Final version to appear in \emph{J.~\'Ec.~polytech.~Math.} This paper may be reproduced, in its entirety, for non-commercial purposes.
\end{abstract}


\tableofcontents

\section{Introduction}
The \emph{Uniform Electron Gas} (UEG) is a cornerstone of Density Functional Theory (DFT)~\cite[Sec.~1.5]{PerKur-03}. This system appears naturally in the regime of slowly varying densities and it is used in the \emph{Local Density Approximation} of DFT~\cite{HohKoh-64,KohSha-65,PerKur-03}. In addition, it is a reference system for most of the empirical functionals used today in DFT, which are often exact for constant densities~\cite{Perdew-91,PerWan-92,Becke-93,PerBurErn-96,SunPerRuz-15,Perdew_etal-16}. 

In this paper, we define the UEG by the property that it minimizes the many-particle Coulomb energy, and satisfies the additional constraint that its electronic density is exactly constant over the whole space. In the literature the UEG is often identified with \emph{Jellium} (or \emph{one-component plasma}) which is defined differently. Jellium has an external constant background, introduced to compensate the repulsion between the particles, and no particular constraint on the density. The Jellium ground state minimizes an energy which incorporates the external potential of the background, in addition to the many-particle Coulomb energy. This ground state usually does \emph{not} have a constant density since it is believed to form a Wigner crystal. But one can always average over the position of this crystal (a state sometimes called the \emph{floating crystal}~\cite{BisLuh-82,MikZie-02,DruRadTraTowNee-04}) and get a constant density, hence the confusion. 

In a recent paper~\cite{LewLie-15}, the first two authors of this article have questioned the identification of the UEG with Jellium in the Coulomb case. They have shown on an example that the averaging does not commute with the thermodynamic limit: the indirect energy per unit volume of a floating crystal can be much higher than its Jellium energy. Hence it is not clear if the floating crystal is a minimizer at constant density. These pecularities are specific to the Coulomb case and they have been discussed before in several works~\cite{NavJamFei-80,ChoFavGru-80,BorBorShaZuc-88,BorBorSha-89,BorBorStr-14}.

It is not the purpose of this paper to answer the important question of whether Jellium and the UEG are the same or not. Our goal here is, rather, to properly define the UEG using tools from statistical mechanics and to provide some of its properties. Although there are many rigorous results on the statistical mechanics of Jellium-like systems (see, e.g.~\cite{Kunz-74,LieNar-75,BraLie-75,AizMar-80,FroPar-78,ChoFavGru-80,GruLugMar-80,GruMar-81,Imbrie-82,BryMar-99,SanSer-15,PetSer-15,RouSer-16,LebSer-17}), our work seems to be the first mathematical discussion of the UEG.

In this paper, we concentrate much of our attention on the \emph{classical} UEG, which is often called \emph{strongly} or \emph{strictly correlated} since it appears in a regime where the interaction dominates the kinetic energy, that is, at low density. The classical UEG has been the object of many recent numerical works, based on methods from optimal transportation~\cite{Seidl-99,SeiPerLev-99,SeiGorSav-07,GorSei-10,SeiMarGerNenGieGor-17}. In addition to providing interesting properties of DFT at low density, the classical UEG has been used to get numerical bounds on the best constant in the Lieb-Oxford inequality~\cite{Lieb-79,LieOxf-80,LieSei-09}. This universal lower bound on the Coulomb energy for finite and infinite systems is also used in the construction of some DFT functionals~\cite{Perdew-91,PerBurErn-96,SunPerRuz-15,SunRuzPer-15,Perdew_etal-16}.

We now give a short description of our results. The \emph{indirect Coulomb energy} of a given density $\rho(x)\geq0$ with $\int_{\R^3}\rho(x)\,dx=N$ is defined by 
\begin{multline}
E(\rho):=\inf_{\substack{\bP \text{\ $N$-particle}\\ 
\text{probability on $\R^{3N}$}\\\rho_\bP=\rho}}\bigg\{\int_{\R^{3N}}\sum_{1\leq j<k\leq N}\frac{1}{|x_j-x_k|}d\bP(x_1,...,x_N)\bigg\}\\-\frac12\int_{\R^3}\int_{\R^3}\frac{\rho(x)\rho(y)}{|x-y|}dx\,dy,
\label{eq:def_E_intro}
\end{multline}
where $\rho_\bP$ is by definition the sum of the one-particle marginals of $\bP$. Note that the infimum takes the form of a multi-marginal optimal transportation problem~\cite{CotFriKlu-13,CotFriPas-15,MarGerNen-17,SeiMarGerNenGieGor-17}. 
The (classical) UEG ground state energy is obtained by imposing the constraint that $\rho$ is constant over a set $\Omega_N\subset\R^3$ with $|\Omega_N|=N$ and taking the thermodynamic limit
\begin{equation}
 e_{\rm UEG}=\lim_{\Omega_N\nearrow\,\R^3}\frac{E(\1_{\Omega_N})}{N}.
 \label{eq:def_limit_thermo_intro}
\end{equation}
After taking the limit we obtain a density which is constant in the whole space, here equal to $\rho_\ii\equiv1$. By scaling, the energy at constant $\rho_\ii$ is given by $e_{\rm UEG}\,\rho_\ii^{4/3}$. Using well-known tools from statistical mechanics~\cite{Ruelle,LieSei-09}, we show below in Section~\ref{sec:limit_thermo} that the limit~\eqref{eq:def_limit_thermo_intro} exists and is independent of the chosen sequence $\Omega_N$, provided the latter has a sufficiently regular boundary. The reader can just think of $\Omega_N$ being a sequence of balls or cubes, or any scaled convex set. Our argument relies on the subadditivity of the classical indirect energy~\eqref{eq:def_E_intro}, that is,
\begin{equation}
 E(\rho_1+\rho_2)\leq E(\rho_1)+E(\rho_2)
 \label{eq:subadditivity_intro}
\end{equation}
for all densities $\rho_1,\rho_2\geq0$ (see Lemma~\ref{lem:subadditivity}).

After having properly defined the UEG energy~\eqref{eq:def_limit_thermo_intro}, we prove in Section~\ref{sec:limit_slowly_varying_rho} that it arises in the limit of slowly varying densities. Namely, we show in Theorem~\ref{thm:rho_scaled} below that 
\begin{equation}
\lim_{N\to\ii} \frac{E\big(\rho(\cdot/N^{1/3})\big)}{N}= \lim_{N\to\ii} \frac{E\big(N\rho\big)}{N^{4/3}}=e_{\rm UEG}\int_{\R^3}\rho(x)^{4/3}\,dx,
\label{eq:limit_slowly_varying_rho_intro}
\end{equation}
for any fixed density $\rho(x)$ with $\int_{\R^3}\rho(x)\,dx=1$.
This limit has been the object of recent numerical works~\cite{RasSeiGor-11,SeiVucGor-16}. That $E\big(N\rho\big)=O(N^{4/3})$ follows immediately from the Lieb-Oxford inequality, which we will recall below, as was already remarked in~\cite[Rem.~1.5]{CotFriPas-15}. Based on the limit~\eqref{eq:limit_slowly_varying_rho_intro}, one can use any density $\rho$ in order to compute an approximation of $e_{\rm UEG}$. 
In~\cite{SeiVucGor-16} it was observed that the limit~\eqref{eq:limit_slowly_varying_rho_intro} seems to be attained faster for smoother densities than it is for a characteristic function $\1_{\Omega_N}$ appearing in~\eqref{eq:def_limit_thermo_intro}. 

The interpretation of~\eqref{eq:limit_slowly_varying_rho_intro} is the following. If we think of splitting the space $\R^3$ using a tiling made of cubes of side length $1\ll \ell \ll N^{1/3}$, we see that $\rho(x/N^{1/3})$ is essentially constant in each of these large cubes. The local energy can therefore be replaced by $e_{\rm UEG}(\rho_k)^{4/3}$ where $\rho_k$ is the average value of $\rho$ in the $k$th cube. The energy $E$ is however not local and there are interactions between the different cubes. Proving~\eqref{eq:limit_slowly_varying_rho_intro} demands to show that these interaction energies do not appear at the leading order.

Our proof of~\eqref{eq:limit_slowly_varying_rho_intro} requires us to extend the definition~\eqref{eq:def_limit_thermo_intro} of the UEG energy to grand canonical states, that is, to let the particle number $N$ fluctuate. The reason for this is simple. In spite of the fact that the total particle number $N$ is fixed, the number of particles in a set $A\subset \R^3$ (for instance a cube of side length $\ell$ as before) is not known exactly. This number can fluctuate around its average value $\int_A\rho$, and these fluctuations influence the interactions between the cubes. In Section~\ref{sec:grand-canonical}, we therefore give a proper definition of the grand-canonical UEG and prove that its thermodynamic limit is the same as in~\eqref{eq:def_limit_thermo_intro}.

Like for Jellium~\cite{GruLugMar-78,GruLugMar-80,MarYal-80,GruLebMar-81}, it is to be expected that the long range nature of the Coulomb potential will reduce the fluctuations, due to screening. Following previous works for Coulomb systems in~\cite{HaiLewSol_1-09,HaiLewSol_2-09,BlaLew-12}, we use the Graf-Schenker inequality~\cite{GraSch-95} to exhibit this effect and conclude the proof of~\eqref{eq:limit_slowly_varying_rho_intro}.

In Section~\ref{sec:quantum} we finally look at the quantum case. Proving the existence of the thermodynamic limit similar to~\eqref{eq:def_limit_thermo_intro} in the quantum case is much more difficult since the quantum energy does not satisfy the subadditivity property~\eqref{eq:subadditivity_intro}. Our proof follows the method introduced in~\cite{HaiLewSol_1-09,HaiLewSol_2-09}, which is also based on the Graf-Schenker inequality. For completeness, we also prove that the classical UEG is obtained in the low-density limit $\rho\to0$ (or equivalently the semi-classical limit $\hbar\to0$). This seems open so far for finite systems, except when $N=2,3$~\cite{CotFriKlu-13,BinPas-17}. At high density, we use a result by Graf-Solovej~\cite{GraSol-94} to deduce that the quantum energy behaves as 
$$c_{\rm TF}\,\rho^{5/3}-c_{\rm D}\rho^{4/3}+o(\rho^{4/3})_{\rho\to\ii}$$
where $c_{\rm TF}$ and $c_{\rm D}$ are, respectively, the Thomas-Fermi and Dirac constants.

Many of our results are valid in a more general setting. For completeness we properly define the classical UEG for general Riesz-type potentials
$$V(x)=\frac{1}{|x|^s}$$
in $\R^d$, with $0<s<d$, although we are more interested in the physical case $s=1$ in dimension $d=3$. Several of our results (the limit~\eqref{eq:limit_slowly_varying_rho_intro} as well as the quantum problem) actually only hold in the case $s=1$ and $d=3$. Extending our findings to other values of $s$ and other dimensions is an interesting question which could shed light on the specific properties of the Coulomb potential, in particular with regards to screening effects.

\subsection*{Acknowledgement} We thank Paola Gori Giorgi who has first drawn our attention to this problem, as well as Codina Cotar, Simone Di Marino and Mircea Petrache for useful discussions. We also thank the \emph{Institut Henri Poincar\'e} in Paris for its hospitality. This project has received funding from the European Research Council (ERC) under the European Union's Horizon 2020 research and innovation programme (grant agreement 694227 for R.S. and MDFT 725528 for M.L.). Financial support by the Austrian Science Fund (FWF), project No P 27533-N27 (R.S.) and by the US National Science Foundation, grant No PHY12-1265118 (E.H.L.) are gratefully acknowledged.

\section{Definition of the Uniform Electron Gas}

\subsection{The indirect energy and the Lieb-Oxford inequality}

Everywhere in the paper, we deal with the Riesz interaction potential
$$V(x)=\frac{1}{|x|^s}$$
in $\R^d$, except when explicitly mentioned. Several of our results will only hold for $s=1$ and $d=3$, which is the physical Coulomb case. We always assume that 
$$0<s<d$$
such that $V$ is locally integrable in $\R^d$. The $d$-dimensional Coulomb case corresponds to $s=d-2$ for $d\geq2$. 

In dimension $d=2$ the case $s=0$ is formally obtained by expanding $V$ as $s\to0^+$, leading to the potential $V(x)=-\log|x|$. In dimension $d=1$ one can go down to $-1\leq s<0$ with $V(x)=-|x|^{|s|}$ where $s=-1$ is the Coulomb case. In these situations the potential $V$ diverges to $-\ii$ at large distances. For simplicity we will not consider these cases in detail and will only make some short comments without proofs.

Let $\rho\geq0$ be a non-negative function on $\R^d$, with $\int_{\R^d}\rho=N$ (an integer) and $\rho\in L^{\frac{2d}{2d-s}}(\R^d)$. The indirect energy of $\rho$ is by definition the lowest classical exchange-correlation energy that can be reached using $N$-particle probability densities having this density $\rho$. In other words, 
\begin{multline}
E(\rho):=\inf_{\substack{\bP \text{\ $N$-particle}\\ 
\text{probability on $\R^{dN}$}\\\rho_\bP=\rho}}\bigg\{\int_{\R^{dN}}\sum_{1\leq j<k\leq N}\frac{1}{|x_j-x_k|^s}d\bP(x_1,...,x_N)\bigg\}\\-\frac12\int_{\R^d}\int_{\R^d}\frac{\rho(x)\rho(y)}{|x-y|^s}dx\,dy.
\label{eq:def_E}
\end{multline}
The density of the $N$-particle probability $\bP$ is defined by 
\begin{multline*}
\rho_\bP(y)=\int_{\R^{d(N-1)}}d\bP(y,x_2,...,x_N)+\int_{\R^{d(N-1)}}d\bP(x_1,y,...,x_N)+\cdots\\\cdots +\int_{\R^{d(N-1)}}d\bP(x_1,x_2,...,y). 
\end{multline*}
The condition that $\rho\in L^{\frac{2d}{2d-s}}(\R^d)$ guarantees that 
$$\int_{\R^d}\int_{\R^d}\frac{\rho(x)\rho(y)}{|x-y|^s}dx\,dy<\ii$$
by the Hardy-Littlewood-Sobolev inequality~\cite{LieLos-01}. Then $E(\rho)$ is well defined and finite. We will soon assume that $\rho\in L^{1+\frac{s}{d}}(\R^3)$, which is stronger by Hölder's inequality.

Since the many-particle interaction is symmetric with respect to permutations of the variables $x_j$, the corresponding energy is unchanged when $\bP$ is replaced by the symmetrized probability 
$$\tilde\bP(x_1,...,x_N)=\frac1{N!}\sum_{\sigma\in\gS_N}\bP(x_{\sigma(1)},...,x_{\sigma(N)}).$$ 
Since $\rho_{\tilde\bP}=\rho_\bP$ it is clear that we can restrict ourselves to symmetric probabilities $\bP$, without changing the value of the infimum in~\eqref{eq:def_E}. For a symmetric probability we simply have
$$\rho_\bP(y)=N\int_{\R^{d(N-1)}}d\bP(y,x_2,...,x_N).$$
It will simplify some arguments to be able to consider non-symmetric probabilities $\bP$.

In the following, we use the notation
$$C(\bP):=\int_{\R^{dN}}\sum_{1\leq j<k\leq N}\frac{1}{|x_j-x_k|^s}d\bP(x_1,...,x_N)=\pscal{\sum_{1\leq j<k\leq N}\frac{1}{|x_j-x_k|^s}}_\bP$$
for the many-particle energy and
$$D(f,g):=\frac12\int_{\R^d}\int_{\R^d}\frac{\overline{f(x)}g(y)}{|x-y|^s}dx\,dy$$
for the direct term. We recall that 
$$D(f,f)=c_{d,s}\int_{\R^d}\frac{|\widehat{f}(k)|^2}{|k|^{d-s}}\,dk\geq0 \qquad\text{with}\qquad c_{d,s}=\frac{2^{d-1-s}\pi^{\frac{d}2}\,\Gamma\!\left(\frac{d-s}{2}\right)}{\Gamma\!\left(\frac{s}{2}\right)}$$
defines an inner product.

Taking $\bP=(\rho/N)^{\otimes N}$ as trial state, we find the simple upper bound 
$$E(\rho)\leq -\frac{1}{N}D(\rho,\rho)<0.$$
On the other hand, the Lieb-Oxford inequality~\cite{Lieb-79,LieOxf-80,LieSei-09} gives a useful lower bound on $E(\rho)$, under the additional assumption that $\rho\in L^{1+\frac{s}{d}}(\R^d)$.

\begin{theorem}[Lieb-Oxford inequality~\cite{Lieb-79,LieOxf-80,LieSei-09,Bach-92,GraSol-94,LieSolYng-95,LunNamPor-16}]
Assume that $0<s<d$ in dimension $d\geq1$. Then there exists a universal constant $C_{\rm LO}(s,d)>0$ such that 
\begin{equation}
 E(\rho)\geq -C_{\rm LO}(s,d)\int_{\R^d}\rho(x)^{1+\frac{s}{d}}\,dx,
 \label{eq:LO}
\end{equation}
for every $\rho\in L^{1+\frac{s}{d}}(\R^d)$.
\end{theorem}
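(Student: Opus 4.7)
I will follow the classical Onsager smeared-charge argument introduced by Lieb and Oxford for the Coulomb case, in the density-dependent form that produces the sharp $\rho^{1+s/d}$ scaling; the extension to general Riesz exponents $0<s<d$ has been carried out in the Lundholm--Nam--Portmann reference cited just above. Fix once and for all a smooth radial probability density $\chi$ supported in the closed unit ball of $\R^d$, and for $r>0$, $x\in\R^d$ set $\chi^r_x(y):=r^{-d}\chi\big((y-x)/r\big)$. The smearing radius will be chosen to depend on the local density,
$$r(x):=\lambda\,\rho(x)^{-1/d},$$
where $\lambda>0$ is a parameter to be optimized at the end.

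The core pointwise estimate is the ``Newton--Onsager'' lower bound
$$\sum_{1\leq j<k\leq N}\frac{1}{|x_j-x_k|^s}\geq\sum_{j=1}^N\int_{\R^d}\chi^{r(x_j)}_{x_j}(y)(V*\rho)(y)\,dy-D(\rho,\rho)-\sum_{j=1}^N D(\chi^{r(x_j)},\chi^{r(x_j)}),$$
valid for every configuration $(x_1,\dots,x_N)\in\R^{dN}$. It combines two ingredients: (i) a pointwise Newton-type Riesz inequality comparing $|x-x'|^{-s}$ to $2D(\chi^r_x,\chi^{r'}_{x'})$---reducing to Newton's theorem in the Coulomb case $s=d-2$, and supplied in general by the pointwise Riesz estimates of Lundholm--Nam--Portmann---and (ii) the Onsager expansion
$$0\leq D\!\left(\sum_{j=1}^N\chi^{r(x_j)}_{x_j}-\rho,\,\sum_{j=1}^N\chi^{r(x_j)}_{x_j}-\rho\right),$$
which is legitimate because the Fourier representation of $D$ recalled just above makes it a positive-definite bilinear form on compactly supported signed measures. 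Expanding (ii) and using (i) on each off-diagonal pair $(j,k)$ isolates exactly the right-hand side of the displayed inequality.

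Now integrate against an arbitrary $N$-particle probability $\bP$ with density $\rho$ and subtract $D(\rho,\rho)$. Using $\int\sum_j f(x_j)\,d\bP=\int f\rho$ for symmetric $\bP$ together with the identity $\int\rho(V*\rho)=2D(\rho,\rho)$, one arrives at
$$E(\rho)\geq\int_{\R^d}\rho(x)\big[(\chi^{r(x)}*V*\rho)(x)-(V*\rho)(x)\big]dx-\int_{\R^d}\rho(x)\,D(\chi^{r(x)},\chi^{r(x)})\,dx.$$
The self-energy integral is evaluated exactly from the Riesz scaling $D(\chi^r,\chi^r)=r^{-s}D(\chi,\chi)$ and the definition of $r(x)$, producing the clean contribution $-\lambda^{-s}D(\chi,\chi)\int\rho^{1+s/d}$. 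The first integral is a convolution error whose integrand is supported inside the ball $B(x,r(x))$, and one bounds it below by $-C(\chi,s,d)\,\lambda^{d-s}\int\rho^{1+s/d}$. Optimizing the parameter $\lambda>0$, and finally optimizing over the profile $\chi$, yields a finite universal constant $C_{\rm LO}(s,d)>0$.

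The main obstacle is controlling the convolution error $(\chi^{r(x)}*V-V)*\rho$ uniformly in $\rho$, since pointwise it behaves like $V$ near the origin while its support in $y$ shrinks to a ball whose size itself depends on the local density. In the Coulomb case $s=d-2$, Newton's theorem gives $\chi^r*V=V$ outside $B(0,r)$, so only $\rho$ restricted to a small ball contributes and the estimate is elementary; for general $s$ this exact cancellation fails and one must invoke the full pointwise Riesz estimates of Lundholm--Nam--Portmann, both in the Newton-type input (i) and in the convolution error term. Once these two technical inputs are in hand, the rest reduces to dimensional analysis and the one-parameter optimization in $\lambda$.
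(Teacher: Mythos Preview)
Your outline follows the right skeleton---the Onsager smearing with density-dependent radius $r(x)=\lambda\rho(x)^{-1/d}$---and this is indeed the approach the paper points to (it does not give its own proof, only citing \cite{LunNamPor-16} and remarking that ``the proof involves the Hardy--Littlewood estimate for the maximal function $M_\rho$''). But your sketch has a genuine gap at exactly the step the paper singles out.

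The bound you assert for the convolution-error term, namely $-C(\chi,s,d)\lambda^{d-s}\int\rho^{1+s/d}$, is not obtained by a pointwise estimate in terms of $\rho(x)$. Even in the Coulomb case where the integrand really is supported in $B(x,r(x))$, a dyadic decomposition of $\int_{|z|\le r(x)}|z|^{-s}\rho(x-z)\,dz$ gives a bound $\lesssim r(x)^{d-s}M_\rho(x)$ involving the Hardy--Littlewood maximal function. One then needs H\"older together with $\|M_\rho\|_{L^{1+s/d}}\le C_{\rm HL}(s,d)\|\rho\|_{L^{1+s/d}}$ to close the argument. You never mention $M_\rho$, yet this is the main analytic input---and the reason the best known $C_{\rm LO}(s,d)$ inherits the unknown constant $C_{\rm HL}(s,d)$, as the paper states.

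Two further issues. First, your sentence ``the integrand is supported inside the ball $B(x,r(x))$'' contradicts your own later remark that exact cancellation fails for $s\neq d-2$; the earlier claim should be dropped. Second, the Newton-type inequality~(i), $|x-x'|^{-s}\ge 2D(\chi^r_x,\chi^{r'}_{x'})$, relies on $|x|^{-s}$ being superharmonic, which holds only for $s\le d-2$; for $d-2<s<d$ the kernel is subharmonic away from the origin and smearing \emph{increases} the interaction at large separation, so~(i) is simply false as stated. The argument in \cite{LunNamPor-16} does not proceed through this doubly-smeared Newton bound but through a different pointwise decomposition of the Riesz kernel. Your phrase ``pointwise Riesz estimates of Lundholm--Nam--Portmann'' is thus doing much heavier lifting than a reader would guess: it must supply both a substitute for step~(i) in the range $s>d-2$ and the maximal-function machinery for the error term.
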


From now on we always call $C_{\rm LO}(s,d)$ the \emph{smallest} constant for which the inequality~\eqref{eq:LO} is valid. 

Although only the case $s=1$ and $d=3$ was considered in the original papers~\cite{Lieb-79,LieOxf-80}, the proof for $s=1$ and $d=2$ given in~\cite{Bach-92,GraSol-94,LieSolYng-95} extends to any $0<s<d$ in any dimension, see~\cite[Lemma~16]{LunNamPor-16}. The proof involves the Hardy-Littlewood estimate for the maximal function $M_\rho$,
$$\norm{M_\rho}_{L^{1+s/d}(\R^d)}\leq C_{\rm HL}(s,d)\norm{\rho}_{L^{1+s/d}(\R^d)}$$
and, consequently,  the best known estimate on $C_{\rm LO}(s,d)$ involves the unknown constant $C_{\rm HL}(s,d)$.

In the 3D Coulomb case, $d=3$ and $s=1$, the best estimate known so far is
$$C_{\rm LO}(1,3)\leq 1.64.$$
The constant was equal to $1.68$ in~\cite{LieOxf-80} and later improved to $1.64$ in~\cite{ChaHan-99}. It remains an important challenge to find the optimal constant in~\eqref{eq:LO}. Several of the most prominent functionals used in Density Functional Theory are actually based on the Lieb-Oxford bound~\cite{Perdew-91,PerBurErn-96,SunPerRuz-15,SunRuzPer-15,Perdew_etal-16}. 

The best rigorous lower bound on $C_{\rm LO}(1,3)$ was proved already in~\cite{LieOxf-80} for $N=\int_{\R^3}\rho=2$:
$$C_{\rm LO}(1,3)\geq 1.23,$$
whereas the latest numerical simulations in~\cite{SeiVucGor-16} give the estimate
$$C_{\rm LO}(1,3) \gtrsim1.41.$$
From the definition of the UEG energy given later in~\eqref{eq:thermodynamic_limit} it will be clear that
\begin{equation}
 C_{\rm LO}(s,d)\geq -e_{\rm UEG}.
 \label{eq:compare_LO_UEG}
\end{equation}
It has indeed been conjectured in~\cite{OdaCap-07,RasPitCapPro-09} that the best Lieb-Oxford constant is attained for the \emph{Uniform Electron Gas} (UEG), that is, there is equality in~\eqref{eq:compare_LO_UEG}. 

Next we turn to some remarks about the Lieb-Oxford inequality for $s\leq0$.

\begin{remark}[2D Coulomb case]
In~\cite[Prop.~3.8]{LewNamSerSol-15}, the following Lieb-Oxford-type inequality was shown in dimension $d=2$:
\begin{multline}
 -\int_{\R^{2N}}\sum_{1\leq j<k\leq N}\log|x_j-x_k|\,d\bP(x_1,...,x_N)\\+\frac12\int_{\R^2}\int_{\R^2}\rho_\bP(x)\rho_\bP(y)\log|x-y|\,dx\,dy\\
 \geq-\frac{1}{4}N\log N-\frac{c}{\eps}N-\frac{c}{\eps}N^{-\eps}\int_{\R^2}\rho(x)^{1+\eps}\,dx,
 \label{eq:LO_2D}
\end{multline}
for $\int_{\R^2}\rho=N$, any $\eps>0$ and some constant $c>0$. This inequality can be used to deal with the case $s=0$ and $d=2$. For shortness, we do not elaborate more on the 2D Coulomb case.
\end{remark}

\begin{remark}[1D case]
In dimension $d=1$, we have for $-1\leq s<0$ the Lieb-Oxford inequality with $C_{\rm LO}(s,d)=0$,
\begin{multline}
-\int_{\R^{N}}\sum_{1\leq j<k\leq N}|x_j-x_k|^{|s|}\,d\bP(x_1,...,x_N)\\+\frac12\int_{\R}\int_{\R}\rho_\bP(x)\rho_\bP(y)|x-y|^{|s|}\,dx\,dy\geq0.
\label{eq:LO_1D}
\end{multline}
Indeed, 
$$-\int_{\R}\int_{\R}|x-y|^{|s|}\,d\nu(x)\,d\nu(y)=c\int_{\R}\frac{|\widehat{\nu}(k)|^2}{|k|^{1+|s|}}\,dk\geq0$$
for every bounded measure $\nu$ which decays sufficiently fast at infinity and satisfies $\nu(\R)=0$. After taking $\nu=\sum_{j=1}^N\delta_{x_j}-\rho_\bP$ we find the pointwise bound
\begin{multline*}
-\sum_{1\leq j<k\leq N}|x_j-x_k|^{|s|}+\sum_{j=1}^N\int_\R |x_j-y|^{|s|}\rho_\bP(y)\,dy\\
-\frac12\int_{\R}\int_{\R}\rho_\bP(x)\rho_\bP(y)|x-y|^{|s|}\,dx\,dy \geq0
\end{multline*}
in $\R^N$. Integrating against $\bP(x_1,...,x_N)$ gives~\eqref{eq:LO_1D}.
\end{remark}

\begin{remark}[1D multi-marginal optimal transport]
For every fixed $\rho\in L^1(\R)\cap L^{1+s}(\R)$ with $\int_\R\rho=N$ and $0<s<1$, the minimization problem~\eqref{eq:def_E} has been proved in~\cite{ColPasMar-15} to have a unique symmetric minimizer $\bP$ of the form
\begin{equation}
d\bP(x_1,...x_N)=\frac{1}{N!}\sum_{\sigma\in\gS_N}\int_\R \delta_y(x_{\sigma(1)})\delta_{Ty}(x_{\sigma(2)})\cdots \delta_{T^{N-1}y}(x_{\sigma(N)})\,\rho(y)\,dy.
\label{eq:Monge_1D}
\end{equation}
Here $T:\R\to\R$ is the unique increasing function such that $T\#(\rho\1_{(a_{k-1},a_k)})=\rho\1_{(a_{k},a_{k+1})}$, where the numbers $a_k$ are defined by $a_0=-\ii$ and $\int_{a_k}^{a_{k+1}}\rho=1$. 
For consequences with regards to the Lieb-Oxford inequality and the uniform electron gas, we refer to~\cite{DiMarino-17}.
\end{remark}

\subsection{Subadditivity}
Before turning to the special case of constant density, we state and prove an important property of the indirect energy $E$, which will be used throughout the paper. 

\begin{lemma}[Subadditivity of the indirect energy]\label{lem:subadditivity}
Let $\rho_1,\rho_2\in L^1(\R^d)\cap L^{1+s/d}(\R^d)$ be two positive densities with $\int_{\R^d}\rho_1=N_1$ and $\int_{\R^d}\rho_2=N_2$ (two integers). Then
\begin{equation}
E(\rho_1+\rho_2)\leq E(\rho_1)+E(\rho_2).
\label{eq:subadditivity}
\end{equation}
\end{lemma}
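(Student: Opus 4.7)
The natural plan is to build a trial $(N_1+N_2)$-particle probability for the left-hand side by taking the tensor product of near-optimal probabilities for the two pieces on the right. This is where the earlier remark that one may drop the symmetry constraint without changing the infimum becomes crucial: the product of two symmetric measures is in general not symmetric under the full symmetric group $\gS_{N_1+N_2}$, but it still has the correct total density and gives a valid trial state.

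Concretely, fix $\varepsilon>0$ and choose $\bP_1$ on $\R^{dN_1}$ with $\rho_{\bP_1}=\rho_1$ and $\bP_2$ on $\R^{dN_2}$ with $\rho_{\bP_2}=\rho_2$ such that $C(\bP_i)-D(\rho_i,\rho_i)\leq E(\rho_i)+\varepsilon$. Set $\bP:=\bP_1\otimes \bP_2$ as an $(N_1{+}N_2)$-particle probability. An easy computation of the one-particle marginals shows $\rho_{\bP}=\rho_1+\rho_2$, so $\bP$ is admissible for $E(\rho_1+\rho_2)$.

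Next I would split the pair interactions in $C(\bP)$ into pairs inside the first block, pairs inside the second block, and the cross pairs. The intra-block sums reproduce $C(\bP_1)$ and $C(\bP_2)$, while the cross term, being evaluated against a product measure whose marginals sum to $\rho_1$ and $\rho_2$, equals
\begin{equation*}
\sum_{j=1}^{N_1}\sum_{k=1}^{N_2}\int_{\R^{2d}}\frac{d\bP_1^{(j)}(x)\,d\bP_2^{(k)}(y)}{|x-y|^s}=\int_{\R^{2d}}\frac{\rho_1(x)\rho_2(y)}{|x-y|^s}\,dx\,dy=2D(\rho_1,\rho_2).
\end{equation*}
Expanding $D(\rho_1+\rho_2,\rho_1+\rho_2)=D(\rho_1,\rho_1)+2D(\rho_1,\rho_2)+D(\rho_2,\rho_2)$, the cross term cancels exactly and one obtains
\begin{equation*}
E(\rho_1+\rho_2)\leq C(\bP)-D(\rho_1+\rho_2,\rho_1+\rho_2)=\sum_{i=1,2}\bigl(C(\bP_i)-D(\rho_i,\rho_i)\bigr)\leq E(\rho_1)+E(\rho_2)+2\varepsilon.
\end{equation*}
Sending $\varepsilon\to 0$ gives the claim.

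There is essentially no obstacle: the argument is a one-line algebraic identity once the right trial state is chosen. The only point worth emphasizing is \emph{why} the subtraction of the classical direct energy $D(\rho,\rho)$ in the definition of $E$ is exactly what makes $E$ subadditive; without it, one would be left with the long-range positive cross term $2D(\rho_1,\rho_2)$, which for Coulomb-like $V$ is typically infinite when $\rho_1,\rho_2$ have infinite mass and certainly does not vanish in general. The role of the indirect subtraction is precisely to absorb this cross interaction.
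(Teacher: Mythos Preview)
Your proof is correct and follows essentially the same approach as the paper: take the tensor product $\bP=\bP_1\otimes\bP_2$ as trial state, split $C(\bP)$ into the two intra-block contributions plus the cross term $2D(\rho_1,\rho_2)$, and observe that the latter cancels exactly against the cross term in $D(\rho_1+\rho_2,\rho_1+\rho_2)$. The only cosmetic difference is that the paper optimizes directly over $\bP_1,\bP_2$ at the end rather than introducing an $\varepsilon$ and sending it to zero.
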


\begin{proof}
Let $\bP_1$ and $\bP_2$ be two $N_1$-- and $N_2$--particle probabilities of densities $\rho_1$ and $\rho_2$. We use as trial state the uncorrelated probability $\bP=\bP_1\otimes\bP_2$ defined by
\begin{align*}
\bP(x_1,...,x_{N_1+N_2})&=\bP_1\otimes\bP_2(x_1,...,x_{N_1+N_2})\\
&=\bP_1(x_1,...,x_{N_1})\bP_2(x_{N_1+1},...,x_{N_1+N_2}). 
\end{align*}
Even if $\bP_1$ and $\bP_2$ are symmetric $\bP$ is not necessarily symmetric, but it can be symmetrized without changing anything if the reader feels more comfortable with symmetric states. The density of this trial state is computed to be $\rho_\bP=\rho_{1}+\rho_{2}$ and the many-particle energy is
\begin{align*}
C(\bP)=&\pscal{\sum_{1\leq j<k\leq N_1}\frac{1}{|x_j-x_k|^s}}_{\bP_1\otimes\bP_2}+\pscal{\sum_{N_1+1\leq j<k\leq N_2}\frac{1}{|x_j-x_k|^s}}_{\bP_1\otimes\bP_2}\\
&+\pscal{\sum_{j=1}^{N_1}\sum_{k=N_1+1}^{N_1+N_2}\frac{1}{|x_j-x_k|^s}}_{\bP_1\otimes\bP_2}\\
=&C(\bP_1)+C(\bP_2)+2D(\rho_1,\rho_2),
\end{align*}
and hence
$$E(\rho_1+\rho_2)\leq C(\bP)-D(\rho_1+\rho_2,\rho_1+\rho_2)=C(\bP_1)-D(\rho_1,\rho_1)+C(\bP_2)-D(\rho_2,\rho_2).$$
Optimizing over $\bP_1$ and $\bP_2$ gives the result. 
\end{proof}

\subsection{The Classical Uniform Electron Gas}\label{sec:limit_thermo}

Next we define the (classical) \emph{Uniform Electron Gas} (UEG) corresponding to taking $\rho=\rho_0\1_{\Omega}$ (the characteristic function of a domain $\Omega$) and then the limit when $\Omega$ covers the whole of $\R^3$. For this it is useful to discuss regularity of sets. Following Fisher~\cite{Fisher-64}, we say that a set $\Omega$ has an $\eta$--regular boundary when
\begin{equation}
 \forall 0\leq t\leq t_0,\qquad \Big|\big\{{\rm d}(x,\partial\Omega)\leq |\Omega|^{1/d}t\big\}\Big|\leq |\Omega|\,\eta(t). 
 \label{eq:Fisher}
\end{equation}
Here $t_0>0$ and $\eta$ is a continuous function $\eta:[0,t_0)\to\R^+$ with $\eta(0)=0$. Note that the definition is invariant under scaling. If $\Omega$ has an $\eta$--regular boundary, then the dilated set $\lambda\Omega$ does as well for all $\lambda>0$. The concept of $\eta$--regularity allows to make sure that the area of the boundary of $\Omega$ is negligible compared to $|\Omega|$. Note that any open convex domain (e.g. a ball or a cube) has an $\eta$--regular boundary with $\eta(t)=Ct$, see~\cite[Lemma~1]{HaiLewSol_1-09}.

\begin{theorem}[Uniform Electron Gas]\label{thm:UEG_thermo_limit}
Let $\rho_0>0$ and $\{\Omega_N\}\subset\R^d$ be a sequence of bounded connected domains such that 
\begin{itemize}
 \item $\rho_0|\Omega_N|$ is an integer for all $N$;
 \item $|\Omega_N|\to\ii$;
 \item $\Omega_N$ has an $\eta$--regular boundary for all $N$, for some $\eta$ which is independent of $N$.
\end{itemize}
Then the following thermodynamic limit exists
\begin{equation}
 \lim_{N\to\ii}\frac{E(\rho_0\1_{\Omega_N})}{|\Omega_N|}=\rho_0^{1+\frac{s}{d}}\,e_{\rm UEG} 
 \label{eq:thermodynamic_limit}
\end{equation}
where $e_{\rm UEG}$ is independent of the sequence $\{\Omega_N\}$ and of $\rho_0$.
\end{theorem}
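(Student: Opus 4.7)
My plan is to proceed in three stages: (i) reduce to unit density by scaling, (ii) establish the limit along cubes of integer side via a Fekete-type argument based on subadditivity, and (iii) extend to arbitrary Fisher-regular sequences by tiling and packing. For (i), the change of variables $y = \rho_0^{1/d}x$ and the induced push-forward on $N$-particle probabilities give $E(\rho_0 \1_\Omega) = \rho_0^{s/d}\, E(\1_{\rho_0^{1/d}\Omega})$ with $|\rho_0^{1/d}\Omega| = \rho_0 |\Omega|$, so dividing by $|\Omega|$ produces the announced prefactor $\rho_0^{1+s/d}$. Since~\eqref{eq:Fisher} is scale invariant it is enough to treat the case $\rho_0 = 1$.

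For (ii), let $C_\ell$ denote the cube of side $\ell\in\N$ and set $e(\ell) := E(\1_{C_\ell})/\ell^d$. Partitioning $C_{k\ell}$ into $k^d$ translates of $C_\ell$ and using Lemma~\ref{lem:subadditivity} gives $e(k\ell) \leq e(\ell)$; the Lieb--Oxford inequality~\eqref{eq:LO} gives $e(\ell) \geq -C_{\rm LO}(s,d)$; and the product trial state $\bP = (\1_{C_\ell}/\ell^d)^{\otimes \ell^d}$ gives $e(\ell) \leq 0$. For any fixed $\ell_0$ and any large integer $L$, inscribe $\lfloor L/\ell_0 \rfloor^d$ disjoint translates of $C_{\ell_0}$ in $C_L$; the integer-volume remainder $R$ satisfies $E(\1_R) \leq 0$, so subadditivity yields
\[ e(L) \leq \left(\frac{\lfloor L/\ell_0\rfloor\, \ell_0}{L}\right)^{\!d} e(\ell_0). \]
Sending $L\to\infty$ with $\ell_0$ fixed gives $\limsup_L e(L) \leq e(\ell_0)$, and taking the infimum over $\ell_0$ produces $\lim_\ell e(\ell) = \inf_\ell e(\ell) =: e_{\rm UEG} \in [-C_{\rm LO}(s,d), 0]$.

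For (iii), the $\limsup$ bound for $\Omega_N$ is obtained by the same inside tiling applied to $\Omega_N$ in place of $C_L$: by $\eta$-regularity the uncovered volume is at most $|\Omega_N|\,\eta(c\ell_0/|\Omega_N|^{1/d}) = o(|\Omega_N|)$, and subadditivity gives $\limsup_N E(\1_{\Omega_N})/|\Omega_N| \leq e(\ell_0) \to e_{\rm UEG}$. For the matching $\liminf$, fix $N$ and pack a large cube $C_L$ with $M_L$ disjoint translates of $\Omega_N$, fill the gaps with $K_L$ translates of $C_{\ell_0}$, and leave an integer-volume remainder $R_L$ with relative volume $|R_L|/L^d \to 0$. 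Subadditivity in $C_L$ gives
\[ e(L) \leq \frac{M_L|\Omega_N|}{L^d}\,\frac{E(\1_{\Omega_N})}{|\Omega_N|} + \frac{K_L \ell_0^d}{L^d}\,e(\ell_0) + \frac{E(\1_{R_L})}{L^d}. \]
As $L\to\infty$ the first coefficient tends to the packing density $\rho_{\rm pack}\in(0,1]$ of $\Omega_N$, the second to $1-\rho_{\rm pack}$, and the last term vanishes (non-positive with vanishing volume fraction); then letting $\ell_0\to\infty$ turns both $e(L)$ and $e(\ell_0)$ into $e_{\rm UEG}$, and rearranging gives $E(\1_{\Omega_N})/|\Omega_N| \geq e_{\rm UEG}$.

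The main technical obstacle is the lower bound in (iii): one must know that an $\eta$-regular set admits a packing density bounded below by a positive constant depending only on $\eta$, so that $\rho_{\rm pack}$ can be cancelled from both sides of the inequality. This follows from a geometric observation that Fisher regularity prevents $\Omega_N$ from being too thin or too elongated, forcing $\Omega_N$ to contain a ball of radius comparable to $|\Omega_N|^{1/d}$, which is enough. A secondary bookkeeping matter is to ensure that all remainder pieces appearing along the way have integer volume, which can always be arranged by $O(1)$ adjustments that do not affect the asymptotics.
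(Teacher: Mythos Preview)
Your strategy matches the paper's --- reduce to $\rho_0=1$ by scaling, obtain the cube limit from subadditivity and Lieb--Oxford, then treat general domains by inner and outer tiling --- except for the lower bound in (iii), where you take a different construction. The paper does not pack many copies of $\Omega_N$ into a huge cube $C_L$; instead it uses the diameter bound for connected $\eta$-regular sets (Fisher's Lemma~1) to enclose a \emph{single} $\Omega_N$ in one dyadic cube $C'_N$ with $|C'_N|\le c\,|\Omega_N|$, tiles $C'_N\setminus\Omega_N$ with small cubes $C_n$, and combines subadditivity with the monotonicity $E(C'_N)/|C'_N|\ge e_{\rm UEG}$ already established for dyadic cubes. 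No packing density appears, and only the two limits $N\to\infty$ then $n\to\infty$ are needed. (Incidentally, the geometric input you need for positive packing density is this diameter bound, not the inscribed-ball fact you cite; a large inscribed ball does not by itself prevent elongation.)

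Your packing variant can be made to work, but the order of limits as written is wrong. The claim that $|R_L|/L^d\to 0$ as $L\to\infty$ with $N$ and $\ell_0$ fixed is false: each of the $M_L\sim\rho_{\rm pack}L^d/|\Omega_N|$ packed copies of $\Omega_N$ contributes a boundary layer of volume $\sim|\Omega_N|\,\eta\big(\ell_0\sqrt d\,|\Omega_N|^{-1/d}\big)$ that the $C_{\ell_0}$ cubes cannot reach, so the remainder fraction tends to the \emph{positive} constant $\rho_{\rm pack}\,\eta\big(\ell_0\sqrt d\,|\Omega_N|^{-1/d}\big)$. You may drop $E(\1_{R_L})\le 0$, but then the coefficient of $e(\ell_0)$ falls short of $1-\rho_{\rm pack}$; sending $\ell_0\to\infty$ with $N$ fixed now yields only $E(\1_{\Omega_N})/|\Omega_N|\ge e_{\rm UEG}/\rho_{\rm pack}$, which is too weak since $e_{\rm UEG}<0$ and $\rho_{\rm pack}<1$. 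The remedy is to pass to $\liminf_{N\to\infty}$ \emph{before} $\ell_0\to\infty$, keeping $\ell_0\ll|\Omega_N|^{1/d}$ so that the $\eta$ error vanishes; then the cancellation of $\rho_{\rm pack}$ goes through and you recover $\liminf_N E(\1_{\Omega_N})/|\Omega_N|\ge e_{\rm UEG}$ as required.
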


The limit~\eqref{eq:thermodynamic_limit} is our definition of the Uniform Electron Gas energy $e_{\rm UEG}$. Since by the Lieb-Oxford inequality~\eqref{eq:LO} we have
\begin{equation}
 E(\rho_0\1_{\Omega})\geq -C_{\rm LO}(s,d)(\rho_0)^{1+\frac{s}{d}}|\Omega| 
 \label{eq:LO_UEG}
\end{equation}
for any domain $\Omega$, it is clear from the definition~\eqref{eq:thermodynamic_limit} that
$$e_{\rm UEG}\geq -C_{\rm LO}(s,d),$$
as we have mentioned before.

Our proof of Theorem~\ref{thm:UEG_thermo_limit} follows classical methods in statistical mechanics~\cite{Ruelle,LieSei-09}, based on the subadditivity property~\eqref{eq:subadditivity}. 

\begin{proof}
Everywhere we use the shorthand notation $E(\Omega)=E(\1_\Omega)$. 

\medskip

\noindent\textbf{Step 1. Scaling out $\rho_0$.}
By scaling we have $E(\rho_0\1_{\Omega_N})=\rho_0^{s/d} E(\1_{\Omega'_N})$ where $\Omega_N'=\rho_0^{1/d}\Omega_N$, which is also regular in the sense of~\eqref{eq:Fisher}. Therefore, it suffices to prove the theorem for $\rho_0=1$. 

\medskip

\noindent\textbf{Step 2. Limit for cubes.} Let $C$ be the unit cube and $C_N=2^NC$. Since $C_{N}$ is the union of $2^d$ disjoint copies of the cube $C_{N-1}$, we have by subadditivity
$$E(C_N)\leq 2^dE(C_{N-1})$$
and therefore
$$\frac{E(C_{N})}{|C_{N}|}\leq \frac{E(C_{N-1})}{|C_{N-1}|}.$$
The sequence $E(C_{N})/|C_{N}|$ is decreasing and bounded from below due to the Lieb-Oxford inequality~\eqref{eq:LO_UEG}. Hence it converges to a limit $e_{\rm UEG}$. 

\medskip

The proof that the limit is the same for a general sequence $\Omega_N$ satisfying the assumptions of the theorem is very classical. The idea is to approximate $\Omega_N$ from inside by the union of smaller cubes of side length $\ell\ll|\Omega_N|^{1/d}$, which gives an upper bound by subadditivity. For the lower bound, one uses a large cube $C'_N$ containing $\Omega_N$, of comparable volume, with the space $C'_N\setminus \Omega_N$ filled with small cubes, see Figure~\ref{fig:proof_thermo_limit}. We start with the upper bound.

\medskip

\noindent\textbf{Step 3. Upper bound for any domain $\Omega_N$.} 
For any fixed $n$, we look at the tiling of $\R^d=\cup_j D_j$ with cubes $D_j$ of volume $2^{dn}$ which are all translates of the cube $C_{n}$ considered in the previous step. For simplicity we let $\ell=2^n$ be the side length of this cube. The set $\tilde\Omega_N=\cup_{D_j\subset\Omega_N} D_j$ is an inner approximation of $\Omega_N$ which satisfies
\begin{align*}
|\tilde\Omega_N|&=|\Omega_N|-\sum_{D_j\cap\Omega_N^c\neq\emptyset}|D_j\cap\Omega_N|\\
&\geq |\Omega_N|- \Big|\big\{{\rm d}(x,\partial\Omega_N)\leq \ell\sqrt{d} \big\}\Big|\\
&\geq |\Omega_N|\left(1-\eta(\ell\sqrt{d} |\Omega_N|^{-1/d})\right)
\end{align*}
since the cubes intersecting the boundary only contain points which are at a distance $\leq \ell\sqrt{d}$ to $\partial\Omega_N$. 
By subadditivity we have
$$E(\tilde \Omega_N)\leq \frac{|\tilde\Omega_N|}{|C_{n}|}E(C_{n})$$
where ${|\tilde\Omega_N|}/{|C_{n}|}$ is the number of cubes in $\tilde\Omega_N$. Since $E(\Omega_N\setminus\tilde \Omega_N)<0$, we have
$$\frac{E(\Omega_N)}{|\Omega_N|}\leq \frac{E(\tilde \Omega_N)+E(\Omega_N\setminus\tilde \Omega_N)}{|\Omega_N|}\leq \frac{E(C_{n})}{|C_{n}|}\left(1-\eta(\ell\sqrt{d} |\Omega_N|^{-1/d})\right).$$
Passing to the limit first $N\to\ii$, using $\eta(t)\to0$ when $t\to0$, and then $n\to\ii$ (or taking the joint limit with $\ell\ll |\Omega_N|^{1/d}$) gives the upper bound
$$\limsup_{N\to\ii}\frac{E(\Omega_N)}{|\Omega_N|}\leq e_{\rm UEG}.$$

\medskip

\noindent\textbf{Step 4. Lower bound for any domain $\Omega_N$.} Since we have assumed that our sets are connected, by~\cite[Lemma~1]{Fisher-64} we know that the diameter of $\Omega_N$ is of the order $|\Omega_N|^{1/d}$. Hence $\Omega_N$ is included in a large cube $C'_N$ of side length proportional to $|\Omega_N|^{1/d}$. Increasing this cube if necessary and after a space translation, we can assume that $C'_N=2^kC$ which we have used before, and which is the union of $2^{d(k-n)}$ small cubes $D_j$. Let then $A_N$ be the union of all the small cubes $D_j$ which are contained in $C'_N\setminus \Omega_N$. We write
$$C'_N=A_N\cup \Omega_N\cup R_N$$
where $R_N$ is the missing space (the union of the sets $D_j\cap(C'_N\setminus\Omega_N)$ for all the cubes that intersect the boundary $\partial\Omega_N$, see Figure~\ref{fig:proof_thermo_limit}). Since all these cubes only contain points which are at a distance $\leq \sqrt{d} \ell$ from the boundary of $\Omega_N$, we have as before
$$|R_N|\leq |\Omega_N|\eta(\ell\sqrt{d} |\Omega_N|^{-1/d}).$$
The subadditivity and monotonicity of the energy per unit volume for cubes give that
$$e_{\rm UEG}|C'_N|\leq E(C'_N)\leq E(\Omega_N)+\frac{|A_N|}{|C_{n}|}E(C_{n})+\underbrace{E(R_N)}_{\leq0}$$
and thus
\begin{align*}
\frac{E(\Omega_N)}{|\Omega_N|}\geq e_{\rm UEG}+\frac{|A_N|}{|\Omega_N|}\left(e_{\rm UEG}-\frac{E(C_{n})}{|C_{n}|}\right)+e_{\rm UEG}\,\eta(\ell\sqrt{d} |\Omega_N|^{-1/d}).
\end{align*}
Using that $|A_N|\leq |C'_N|=O(|\Omega_N|)$ and passing to the limit $N\to\ii$ then $n\to\ii$ gives
$$\liminf_{N\to\ii}\frac{E(\Omega_N)}{|\Omega_N|}\geq e_{\rm UEG}$$
as we wanted.
\end{proof}

\begin{figure}[h]
\includegraphics{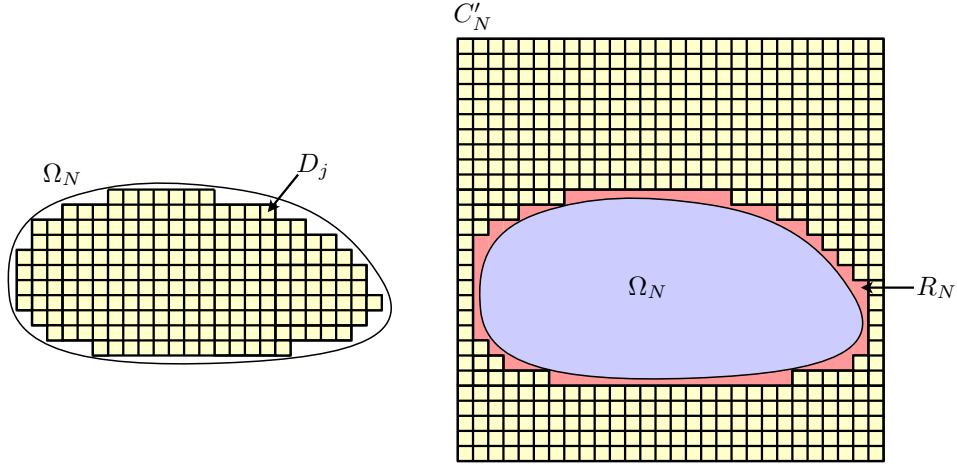}

\caption{Method to prove that the limit for a general domain $\Omega_N$ is the same as for the cubes $C_n=2^nC$, using the subadditivity property~\eqref{eq:subadditivity}. For the upper bound (left), one uses the union $\tilde\Omega_N$ of all the cubes $D_j$ which are inside $\Omega_N$. For the lower bound (right), one uses a big cube $C'_N$ containing $\Omega_N$ with the space between the two filled with smaller cubes.\label{fig:proof_thermo_limit}}
\end{figure}

In the physical case $s=1$ and $d=3$, we have the following lower bound.

\begin{theorem}[Lower bound on $e_{\rm UEG}$~\cite{LieNar-75}]
Assume that $s=1$ and $d=3$. Then we have
\begin{equation}
e_{\rm UEG}\geq -\frac35 \left(\frac{9\pi}2\right)^{1/3}\simeq -1.4508.
\label{eq:lower_bd_UEG}
\end{equation}
\end{theorem}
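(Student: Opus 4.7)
The plan is to prove the stronger pointwise Wigner--Seitz-type bound
\begin{equation*}
H_\Omega(x_1,\ldots,x_N) := \sum_{1\leq i<j\leq N}\frac{1}{|x_i-x_j|} - \sum_{j=1}^N W_\Omega(x_j) + D(\1_\Omega,\1_\Omega) \;\geq\; -\tfrac{3N}{5}(9\pi/2)^{1/3}
\end{equation*}
for every bounded $\Omega\subset\R^3$ with $|\Omega|=N\in\N$ and every $(x_1,\ldots,x_N)\in\R^{3N}$, where $W_\Omega(x):=\int_\Omega|x-y|^{-1}\,dy$. Integrating this against any $\bP$ with $\rho_\bP=\1_\Omega$ and using the identity $\int\sum_j W_\Omega(x_j)\,d\bP=\int_\Omega W_\Omega=2D(\1_\Omega,\1_\Omega)$ would give $C(\bP)-D(\1_\Omega,\1_\Omega)\geq -\tfrac{3N}{5}(9\pi/2)^{1/3}$; infimizing over $\bP$, dividing by $|\Omega|=N$, and applying Theorem~\ref{thm:UEG_thermo_limit} along e.g.\ a sequence of cubes would yield the claim.

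\medskip

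To prove the pointwise bound I would use the classical \emph{Onsager lemma} with the Wigner--Seitz smearing radius $r_s:=(3/(4\pi))^{1/3}$, chosen so that a uniformly charged ball of radius $r_s$ and unit total mass has exactly the ambient density $\rho_0=1$. For each $j$ let $\chi_j$ be the uniform probability density on $B(x_j,r_s)$ and set $\mu:=\sum_j\chi_j$. Newton's theorem gives $\phi_{\chi_j}(y)\leq|x_j-y|^{-1}$ pointwise (with equality outside the ball); applying this twice yields $2D(\chi_i,\chi_j)\leq|x_i-x_j|^{-1}$. Summing over pairs and combining with $D(\mu,\mu)=\sum_j D(\chi_j,\chi_j)+2\sum_{i<j}D(\chi_i,\chi_j)$ produces
\begin{equation*}
\sum_{1\leq i<j\leq N}\frac{1}{|x_i-x_j|}\;\geq\; D(\mu,\mu)-\sum_j D(\chi_j,\chi_j)\;=\; D(\mu,\mu)-\tfrac{3N}{5r_s},
\end{equation*}
the last equality using the self-energy $D(\chi_j,\chi_j)=3/(5r_s)$ of the uniform unit-mass ball. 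The background is then brought in via the positivity $D(\mu-\1_\Omega,\mu-\1_\Omega)\geq 0$, i.e.\ $D(\mu,\mu)\geq 2D(\mu,\1_\Omega)-D(\1_\Omega,\1_\Omega)$, which after rearrangement recasts the full Hamiltonian as
\begin{equation*}
H_\Omega \;\geq\; -\tfrac{3N}{5r_s} + \sum_{j=1}^N\int_\Omega\bigl[\phi_{\chi_j}(y)-|x_j-y|^{-1}\bigr]\,dy.
\end{equation*}

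\medskip

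Each integrand above vanishes outside $B(x_j,r_s)$ by Newton's theorem and is pointwise non-positive everywhere, so enlarging the integration domain from $\Omega\cap B(x_j,r_s)$ to the entire ball $B(x_j,r_s)$ only decreases it. An elementary radial integral then gives the sharp, \emph{geometry-independent} bound $\int_\Omega[\phi_{\chi_j}(y)-|x_j-y|^{-1}]\,dy\geq -2\pi r_s^2/5$ per particle; this is the key step because it eliminates the need for any boundary correction near $\partial\Omega$. Combining yields the pointwise inequality $H_\Omega\geq-N\bigl(3/(5r_s)+2\pi r_s^2/5\bigr)$, and with $r_s=(3/(4\pi))^{1/3}$ the elementary algebraic identity $3/(5r_s)+2\pi r_s^2/5=\tfrac{3}{5}(9\pi/2)^{1/3}$ closes the argument. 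The only substantive ingredient is Onsager's lemma, which reduces entirely to the one-dimensional Newton estimate $\phi_{\chi_j}(y)\leq|x_j-y|^{-1}$; the rest is spherical electrostatics, and the value $r_s=(3/(4\pi))^{1/3}$ is precisely the radius that optimally balances the self-energy $3/(5r_s)$ against the averaging correction $2\pi r_s^2/5$.
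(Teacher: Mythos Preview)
Your proof is correct and follows essentially the same route as the paper: both observe that $C(\bP)-D(\1_\Omega,\1_\Omega)$ is bounded below by the Jellium functional $\big\langle\sum_{j<k}|x_j-x_k|^{-1}-\sum_j W_\Omega(x_j)\big\rangle_\bP+D(\1_\Omega,\1_\Omega)$, and then invoke the Lieb--Narnhofer bound $-\tfrac35(9\pi/2)^{1/3}|\Omega|$ on the latter. The only difference is that the paper simply cites~\cite{LieNar-75} for this bound, whereas you reproduce the underlying Onsager smearing argument in full; your write-up is thus more self-contained but not a different approach.
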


\begin{proof}
Let $\bP$ be any probability distribution such that $\rho_\bP=\1_\Omega$. Then we can write
\begin{multline*}
\pscal{\sum_{1\leq j<k\leq N}\frac{1}{|x_j-x_k|}}_\bP-D(\1_\Omega,\1_\Omega)\\=\pscal{\sum_{1\leq j<k\leq N}\frac{1}{|x_j-x_k|}-\sum_{j=1}^N\1_\Omega\ast\frac{1}{|\cdot|}(x_j)}_\bP+D(\1_\Omega,\1_\Omega).
\end{multline*}
If we minimize the right side over all probability densities $\bP$, removing the constraint that $\rho_\bP=\1_\Omega$, we get the \emph{Jellium} energy that was studied in~\cite{LieNar-75}. There it is shown that it is $\geq -(3/5) ({9\pi}/2)^{1/3}|\Omega|$ for any set $\Omega$, hence~\eqref{eq:lower_bd_UEG} follows. 
\end{proof}

For getting upper bounds one can use test functions and numerical calculations. In~\cite{SeiVucGor-16}, a numerical trial state was constructed, giving the numerical upper bound 
\begin{equation}
\frac{E(B)}{|B|}\leq -1.3427 \qquad\text{ for $s=1$ and $d=3$,}
\label{eq:numerics_ball}
\end{equation}
for a ball $B$ of volume $|B|=60$, using tools from optimal transport. It seems reasonable to expect that for a fixed domain $\Omega$ with $|\Omega|=1$, $N\mapsto E(N^{1/3}\Omega)/N$ is decreasing. This is so far an open problem. If valid, then~\eqref{eq:numerics_ball} would imply $e_{\rm UEG}\leq -1.3427$. 

\medskip

For domains $\Omega$ which can be used to tile the space $\R^d$, we can prove like for cubes that $E(\Omega)/|\Omega|$ is always an upper bound to its limit $e_{\rm UEG}$. 

\begin{theorem}[Bound for specific sets]\label{thm:lower_bound_sets}
Let $\Omega$ be a parallelepiped, a tetrahedron or any other convex polyhedron that generates a tiling of $\R^d$. Assume also that $|\Omega|$ is an integer. Then 
$$\frac{E(\Omega)}{|\Omega|}\geq e_{\rm UEG}.$$
\end{theorem}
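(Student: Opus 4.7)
The plan is to imitate Step~2 of the proof of Theorem~\ref{thm:UEG_thermo_limit} (the cube case), replacing the reference cube $C$ by the tile $\Omega$ itself. Since $\Omega$ generates a tiling $\R^d=\bigcup_{i\in I}\Omega_i$ by congruent copies, I will construct a sequence of large domains built out of many such copies, bound their indirect energy from above by $N$ times $E(\Omega)$ via subadditivity, and then pass to the thermodynamic limit through Theorem~\ref{thm:UEG_thermo_limit}.

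Concretely, for each $R>0$ set
$$\Omega_R:=\bigcup_{\Omega_i\subset B_R}\Omega_i,$$
the union of all tiles entirely contained in the ball $B_R$ centered at the origin, and let $N_R$ be their number. Then $|\Omega_R|=N_R|\Omega|$ is an integer, and since every tile has the same diameter $D$, the set $B_R\setminus\Omega_R$ lies in a tube of thickness $D$ around $\partial B_R$; this gives $|\Omega_R|/|B_R|\to 1$ as $R\to\ii$, and $\partial\Omega_R$ has total surface measure $O(R^{d-1})$. Consequently
$$\big|\{x\in\Omega_R:{\rm d}(x,\partial\Omega_R)\leq|\Omega_R|^{1/d}t\}\big|\leq C\,|\Omega_R|\,t$$
for small $t$, uniformly in $R$, so $\Omega_R$ has an $\eta$-regular boundary with $\eta(t)=Ct$ independent of $R$. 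The inclusion $B_{R-D}\subset\Omega_R$ moreover ensures that $\Omega_R$ is connected for $R$ large. Thus $\{\Omega_R\}$ fulfils the hypotheses of Theorem~\ref{thm:UEG_thermo_limit} with $\rho_0=1$.

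Because the Riesz potential $|x|^{-s}$ is invariant under rigid motions, $E$ depends only on the congruence class of its density argument, hence $E(\Omega_i)=E(\Omega)$ for every $i\in I$. Iterating the subadditivity bound of Lemma~\ref{lem:subadditivity} across the $N_R$ disjoint pieces of $\Omega_R$ yields
$$E(\Omega_R)\leq N_R\,E(\Omega)=\frac{|\Omega_R|}{|\Omega|}\,E(\Omega).$$
Dividing by $|\Omega_R|$ and letting $R\to\ii$, the left-hand side converges to $e_{\rm UEG}$ by Theorem~\ref{thm:UEG_thermo_limit}, which delivers the desired inequality $e_{\rm UEG}\leq E(\Omega)/|\Omega|$.

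The whole proof is elementary once the comparison sequence $\Omega_R$ is in place. The only mildly delicate point is the uniform geometric regularity of $\Omega_R$ (connectedness and $\eta$-regular boundary), which reduces to controlling the "pixelated" shell $B_R\setminus\Omega_R$; since all tiles share the common diameter $D$, this shell has thickness bounded by $D$ and the required estimates follow. For the special case of a parallelepiped one can even short-circuit these verifications by taking $\Omega_R=k\Omega$ with integer $k\to\ii$, since $k\Omega$ is convex, is tiled by exactly $k^d$ translates of $\Omega$, and has $\eta(t)=Ct$ boundary automatically.
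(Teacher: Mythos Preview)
Your proof is correct and essentially identical to the paper's: both build a large domain out of many congruent copies of the tile $\Omega$ approximating a ball, invoke subadditivity (Lemma~\ref{lem:subadditivity}) to get $E(\text{union})/|\text{union}|\leq E(\Omega)/|\Omega|$, and pass to the limit via Theorem~\ref{thm:UEG_thermo_limit}. The only cosmetic differences are that the paper takes the \emph{outer} approximation $A_N=\bigcup_{g\Omega\cap B(0,N^{1/d})\neq\emptyset}g\cdot\overline\Omega$ (which automatically contains the ball and is thus connected) and simply cites \cite[Prop.~2]{HaiLewSol_1-09} for the Fisher regularity, whereas you take the inner approximation and sketch the regularity by hand; note that with your inner version a stray tile could in principle sit entirely in the annulus $B_R\setminus B_{R-D}$, so connectedness is not quite as automatic as you claim, but this is harmless since the diameter bound $\Omega_R\subset B_R$ is all that Step~4 of Theorem~\ref{thm:UEG_thermo_limit} actually uses.
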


\begin{proof}
Let $\Omega$ be an open convex set that generates a tiling of $\R^d$. That is, we assume that there exists a discrete subgroup $\Gamma$ of the group of translations and rotations $\R^d\rtimes SO(d)$ such that $\R^d=\cup_{g\in\Gamma} \,g\cdot \overline{\Omega}$ and $g\cdot \Omega\cap g'\cdot\Omega=\emptyset$ for $g\neq  g'$. Note that such a domain $\Omega$ is necessarily a polyhedron. Let then 
$$A_N:=\bigcup_{g\Omega\cap B(0,N^{1/d})\neq \emptyset} g\cdot \overline{\Omega}$$
be the union of all the tiles that intersects the large ball $B(0,N^{1/d})$. By~\cite[Prop.~2]{HaiLewSol_1-09}, $A_N$ satisfies the Fisher regularity condition. By subadditivity, we have
$$\frac{E(A_N)}{|A_N|}\leq \frac{E(\Omega)}{|\Omega|}.$$
Taking $N\to\ii$ gives the result.
\end{proof}

\section{The Grand-Canonical Uniform Electron Gas}\label{sec:grand-canonical}

It will be useful to let the number of particles $N$ fluctuate, and in particular to allow sets $\Omega$ which have a non-integer volume. In this section we define a grand-canonical version of the Uniform Electron Gas and prove that it has the same thermodynamic limit $e_{\rm UEG}$. 

A grand-canonical probability is for us a collection $(\bP_n)_{n\geq0}$ of symmetric probability measures on $(\R^d)^n$ and coefficients $\lambda_n\geq0$ such that $\sum_{n\geq0}\lambda_n=1$. Each $\lambda_n$ gives the probability to have $n$ particles whereas $\bP_n$ gives the precise probability distribution of these $n$ particles. 
The interaction energy of the grand-canonical probability $\bP=\oplus_{n\geq0}\lambda_n\bP_n$ is just the sum of the energies of each component:
$$C(\bP)=\sum_{n=2}^\ii\lambda_n C(\bP_n).$$
Similarly, its density is, by definition,
$$\rho_\bP=\sum_{n=1}^\ii \lambda_n\,\rho_{\bP_n}.$$
In particular, the average number of particles in the system is
$$\int_{\R^d}\rho_\bP=\sum_{n=1}^\ii n\,\lambda_n.$$
We may then define the grand-canonical indirect energy by
\begin{equation}
E_{\rm GC}(\rho):=\inf_{\substack{\bP=\oplus_{n\geq0}\lambda_n\bP_n\\ 
\text{grand-canonical}\\ \text{probability}\\\rho_\bP=\rho}}\bigg\{\sum_{n\geq2}\lambda_n C(\bP_n)\bigg\} - D(\rho,\rho).
\label{eq:def_E_GC}
\end{equation}
When $\int_{\R^d}\rho=N$ is an integer, we have $E_{\rm GC}(\rho)\leq E(\rho)$ since we can restrict the infimum to canonical $N$-particle probabilities $\bP_N$. 

The grand-canonical problem has a very similar structure to the canonical one and we will not give all the arguments again. Of particular importance is the subadditivity
\begin{equation}
 E_{\rm GC}(\rho+\rho')\leq E_{\rm GC}(\rho)+E_{\rm GC}(\rho')
 \label{eq:subadditivity_GC}
\end{equation}
which is now valid for every integrable $\rho,\rho'\geq0$ such that $\rho,\rho'\in L^1(\R^d)\cap L^{1+s/d}(\R^d)$. This inequality can be proved by using the `grand-canonical' tensor product
$$\bP\otimes \bP':=\bigoplus_{n\geq0}\left(\sum_{k=0}^n\lambda_k\lambda'_{n-k}\bP_k\otimes\bP_{n-k}'\right)$$
which has the density $\rho_{\bP\otimes \bP'}=\rho_\bP+\rho_{\bP'}$. In addition, we remark that the Lieb-Oxford inequality 
\begin{equation}
E_{\rm GC}(\rho)\geq -C'_{\rm LO}(s,d)\int_{\R^d}\rho(x)^{1+\frac{s}{d}} \, dx
 \label{eq:Lieb-Oxford_GC}
\end{equation}
is valid in the grand-canonical setting (possibly with a different constant), as can be verified from the proof in~\cite{LieOxf-80,LieSei-09,Bach-92,GraSol-94,LieSolYng-95}. 

Now we look at the grand canonical Uniform Electron Gas. The next result says that the thermodynamic limit is exactly the same as in the canonical case.

\begin{theorem}[Grand Canonical Uniform Electron Gas]\label{thm:UEG_thermo_limit_GC}
Let $\rho_0>0$ and $\{\Omega_N\}\subset\R^d$ be a sequence of bounded connected domains such that 
\begin{itemize}
 \item $|\Omega_N|\to\ii$;
 \item $\Omega_N$ has an $\eta$--regular boundary for all $N$, for some $\eta$ which is independent of $N$.
\end{itemize}
Then
\begin{equation}
 \lim_{N\to\ii}\frac{E_{\rm GC}(\rho_0\1_{\Omega_N})}{|\Omega_N|}=\rho_0^{1+\frac{s}{d}}\,e_{\rm UEG} 
 \label{eq:thermodynamic_limit_GC}
\end{equation}
where $e_{\rm UEG}$ is the same constant as in Theorem~\ref{thm:UEG_thermo_limit}.
\end{theorem}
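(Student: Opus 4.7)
The plan is to follow the same blueprint as Theorem~\ref{thm:UEG_thermo_limit}, replacing the canonical tools by their grand-canonical analogs, and then to identify the resulting constant with $e_{\rm UEG}$. The first three steps are essentially mechanical transcriptions; the fourth is the real content.

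\medskip

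\emph{Step 1 (Scaling).} By rescaling each $n$-particle sector of a grand-canonical probability $\bP=\oplus_n\lambda_n\bP_n$, the identity $E_{\rm GC}(\rho_0\1_{\Omega})=\rho_0^{s/d}\,E_{\rm GC}(\1_{\rho_0^{1/d}\Omega})$ holds exactly as in the canonical case. Since $\rho_0^{1/d}\Omega_N$ is again Fisher-regular, it suffices to treat $\rho_0=1$.

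\medskip

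\emph{Step 2 (Cube limit).} The grand-canonical subadditivity~\eqref{eq:subadditivity_GC} is valid for any non-negative densities (with no integer constraint), so on the doubling cubes $C_N=2^NC$ one has $E_{\rm GC}(C_N)\le 2^d\,E_{\rm GC}(C_{N-1})$. The sequence $E_{\rm GC}(C_N)/|C_N|$ is therefore decreasing, and bounded below by the grand-canonical Lieb-Oxford inequality~\eqref{eq:Lieb-Oxford_GC}. Hence it converges to a limit, which I call $e_{\rm UEG}^{\rm gc}$.

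\medskip

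\emph{Step 3 (General Fisher-regular domains).} I would reproduce Steps 3 and 4 of the proof of Theorem~\ref{thm:UEG_thermo_limit} verbatim, with $E$ replaced by $E_{\rm GC}$ and the tiling/enclosing cube construction of Figure~\ref{fig:proof_thermo_limit} unchanged. The only additional input needed is that $E_{\rm GC}(\rho)\le 0$ for every nonzero $\rho\ge0$: this follows from the two-sector trial state $\lambda_N(\rho_N/N)^{\otimes N}\oplus\lambda_{N+1}(\rho_{N+1}/(N+1))^{\otimes (N+1)}$ with $N=\lfloor\int\rho\rfloor$ and $\rho_N,\rho_{N+1}$ chosen proportional to $\rho$, exactly as in the canonical remark $E(\rho)\le-D(\rho,\rho)/N$ preceding the Lieb-Oxford inequality. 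The conclusion is that for every Fisher-regular $\Omega_N$, $E_{\rm GC}(\1_{\Omega_N})/|\Omega_N|\to e_{\rm UEG}^{\rm gc}$, independently of the chosen sequence.

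\medskip

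\emph{Step 4 (Identification $e_{\rm UEG}^{\rm gc}=e_{\rm UEG}$).} The inequality $e_{\rm UEG}^{\rm gc}\le e_{\rm UEG}$ is trivial: for integer-mass cubes $C_N$, any canonical probability is a valid grand-canonical one, so $E_{\rm GC}(C_N)\le E(C_N)$, and passing to the limit gives the bound.

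The reverse inequality is the main obstacle and the only non-routine part of the proof. My plan is an equivalence-of-ensembles argument via multiple independent copies. Fix $N$ and a near-optimal grand-canonical state $\bP$ on $C_N$ with $\rho_\bP=\1_{C_N}$; by the grand-canonical Lieb-Oxford inequality one may restrict attention to states whose particle-number distribution has finite variance $\sigma^2$ (otherwise the energy is already above $e_{\rm UEG}|C_N|$). For $K\ge1$, place $K$ disjoint translates $C_N^{(1)},\dots,C_N^{(K)}$ of $C_N$ far apart and let $\bP^{\otimes K}$ be the associated product state: its density is $\sum_k\1_{C_N^{(k)}}$ and a direct computation (using that the cross terms in $C$ and $D$ cancel exactly) gives $C(\bP^{\otimes K})-D(\sum_k\1_{C_N^{(k)}},\sum_k\1_{C_N^{(k)}})=K\bigl(C(\bP)-D(\1_{C_N},\1_{C_N})\bigr)$. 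The total number of particles under $\bP^{\otimes K}$ is a sum of i.i.d.\ variables with mean $K|C_N|$ and variance $K\sigma^2$. I would then condition on the event that this sum equals the integer $K|C_N|$, which by a local central limit theorem has probability $\gtrsim K^{-1/2}$. The conditional state is canonical on $\bigcup_k C_N^{(k)}$ with exactly $K|C_N|$ particles and, by symmetry across copies, still has density $\sum_k\1_{C_N^{(k)}}$. Applying Theorem~\ref{thm:UEG_thermo_limit} to the Fisher-regular set $\bigcup_k C_N^{(k)}$ (whose volume is $K|C_N|$) bounds the conditional energy from below by $(e_{\rm UEG}-o(1))K|C_N|$; on the other hand the conditioning procedure changes the expected energy by $o(K|C_N|)$, because the conditioning event lies at the mean of the particle-number distribution and the conditional expectation differs from the unconditional one only through a second-order correction that can be estimated using the variance bound. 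Dividing by $K|C_N|$ and letting $K\to\infty$ with $N$ fixed gives $E_{\rm GC}(\1_{C_N})/|C_N|\ge e_{\rm UEG}-o(1)$, and finally $N\to\infty$ gives $e_{\rm UEG}^{\rm gc}\ge e_{\rm UEG}$.

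\medskip

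The hardest step by far is controlling the conditioning correction in Step 4; making the local-CLT heuristic rigorous is non-trivial because the energy is an unbounded observable (though bounded below by Lieb-Oxford), and because the particle-number variance of the near-optimizer must first be controlled a priori. An alternative, which I would pursue in parallel, is to work directly from the structural identity $E_{\rm GC}(\1_{C_N})\ge\inf_{\{\lambda_n,\rho_n\}}\{\sum_n\lambda_n E(\rho_n)+\sum_n\lambda_n D(\rho_n-\1_{C_N},\rho_n-\1_{C_N})\}$ (which follows from $C(\bP_n)\ge E(\rho_n)+D(\rho_n,\rho_n)$ and the decomposition $D(\rho_n,\rho_n)-D(\1_{C_N},\1_{C_N})=D(\delta_n,\delta_n)$ using $\sum\lambda_n\delta_n=0$), and to exploit the positive $D$-penalty on $\delta_n=\rho_n-\1_{C_N}$ to show that the infimum is asymptotically realized by $\lambda_n$ concentrated at $n=|C_N|$ and $\rho_n\approx\1_{C_N}$, in which regime Theorem~\ref{thm:UEG_thermo_limit} applies and yields the bound $e_{\rm UEG}|C_N|-o(|C_N|)$.
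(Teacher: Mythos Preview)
Your Steps 1--3 are correct and coincide with the paper's argument. The crux is Step~4, and there your conditioning approach has a genuine gap.

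The claim that the conditioned state ``by symmetry across copies, still has density $\sum_k\1_{C_N^{(k)}}$'' is false. Symmetry only guarantees that the density is the same function in each copy and integrates to $|C_N|$ there; it does \emph{not} guarantee that it equals $\1_{C_N}$ pointwise. Concretely, the density in cube~$1$ of the conditioned state is $\sum_m \tilde\lambda_m\,\rho_{\bP_m}$, where $\tilde\lambda_m=\Pr\bigl(n_1=m\,\big|\,\sum_i n_i=K|C_N|\bigr)$ is the conditional law of a single coordinate, and this differs from $\lambda_m$ for finite $K$ (try $\lambda_0=\tfrac14,\lambda_1=\tfrac12,\lambda_2=\tfrac14$ with $K=2$). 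Since $\sum_m\lambda_m\rho_{\bP_m}=\1_{C_N}$ is the only relation tying the individual $\rho_{\bP_m}$ to $\1_{C_N}$, any change of weights destroys the constant-density property, and you can no longer invoke Theorem~\ref{thm:UEG_thermo_limit} for the conditioned state. One can argue that $\tilde\lambda_m\to\lambda_m$ as $K\to\infty$, but the canonical theorem requires the density to be \emph{exactly} $\1$, so a further density-correction step with controlled energetic cost would be needed, and this is not addressed.

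The paper bypasses conditioning entirely. It first reduces to rational weights $\lambda_n=p_n/q$ by an approximation argument, and then, writing $\bP=q^{-1}\sum_{j=1}^q\bQ_j$ with each $\bQ_j$ equal to one of the $\bP_n$ (repeated $p_n$ times), places the $q$ states $\bQ_1,\dots,\bQ_q$ into $q$ disjoint copies of $C$, one per cube, averaged over all permutations $\sigma\in\gS_q$. This deterministic construction has \emph{exactly} $\sum_j n_j=q|C|$ particles (no conditioning needed) and, because each cube sees the uniform average $q^{-1}\sum_j\rho_{\bQ_j}=\sum_n\lambda_n\rho_{\bP_n}=\1_C$, has \emph{exactly} the right density. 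A direct computation of the Coulomb energy of this canonical trial state yields
\[
\frac{C(\bP)-D(\1_C,\1_C)}{|C|}\ \ge\ \frac{E(C')}{|C'|}-\frac{D(\1_{C'},\1_{C'})}{(q-1)|C'|},
\]
and letting the number of copies tend to infinity (multiply $p_n,q$ by a common integer) gives the lower bound $e_{\rm UEG}$. This is your ``multiple copies'' idea, but executed so that both the particle number and the density are exact rather than merely correct in expectation.

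Your alternative route via $\sum_n\lambda_n D(\rho_n,\rho_n)-D(\1_{C_N},\1_{C_N})=\sum_n\lambda_n D(\delta_n,\delta_n)$ is a correct decomposition, but to conclude you would need a lower bound on $E(\rho_n)$ for non-constant $\rho_n$ that matches $e_{\rm UEG}|C_N|$ up to the $D$-penalty; no such bound is available, and Lieb--Oxford alone is too weak (it gives $-C_{\rm LO}\int\rho_n^{4/3}$, which can be far below $e_{\rm UEG}\int\rho_n$ when $\rho_n$ is concentrated).
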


\begin{proof}
We split our proof into several steps

\medskip

\noindent\textbf{Step 1. Thermodynamic limit of the grand canonical UEG.}
By following step by step the arguments given in the canonical case, we can prove using subadditivity that the thermodynamic limit exists and does not depend on the sequence of domains. In addition, the limit $e_{\rm UEG}^{\rm GC}$ is clearly lower than $e_{\rm UEG}$, as can be verified using sequences for which $\rho_0|\Omega_N|$ is an integer. So the only thing that we have to do is to show that $e^{\rm GC}_{\rm UEG}\geq e_{\rm UEG}$. Our argument is general and could be of independent interest.

\medskip

Our goal will now be to show that 
\begin{equation}
\frac{E_{\rm GC}(C)}{|C|}\geq e_{\rm UEG} 
\label{eq:cube_GC_C}
\end{equation}
for any cube $C$. Using the thermodynamic limit for cubes we will immediately obtain the claimed inequality $e^{\rm GC}_{\rm UEG}\geq e_{\rm UEG}$. To this end we start by giving a lower bound for grand-canonical states which have rational coefficients $\lambda_n$. 

\medskip

\noindent\textbf{Step 2. Construction of a canonical state and a lower bound.} Our main result is the following lemma

\begin{lemma}[Comparing the grand-canonical and canonical energies]\label{lem:compare_GC_C}
Let $\bP=\oplus_{n\geq0}\frac{p_n}{q} \bP_n$ be a grand-canonical probability such that $\rho_{\bP}=\1_C$, where $|C|$, $p_n$ and $q$ are all integers. Let $C'=\cup_{k=1}^qC_k$ be the union of $q$ disjoint copies $C_k$ of $C$. Then 
\begin{equation}
 \frac{C(\bP)-D(\1_C,\1_C)}{|C|}\geq \frac{E(C')}{|C'|}-\frac{D\big(\1_{C'},\1_{C'}\big)}{(q-1)|C'|}
 \label{eq:estim_large_cube_GC}
\end{equation}
where $E(C')$ is the canonical indirect energy defined in~\eqref{eq:def_E}.
\end{lemma}

\begin{proof}[Proof of Lemma~\ref{lem:compare_GC_C}]
It is more convenient to write $\bP=q^{-1}\sum_{j=1}^q\bQ_j$ where the $\bQ_j$ are equal to the $\bP_n$ (for which $p_n\neq0$), each of them being repeated $p_n$ times. In our argument, we do not need to know the exact value of the number of particles of $\bQ_j$, which we call $n_j$.  
Note that $\sum_{j=1}^q n_j=Nq$ where $N=|C|$.

Then we consider $q$ disjoint copies of the cube $C$ as in the statement, which we call $C_1,...,C_q$ and we build a particular canonical state with $Nq$ particles, living in the union $C'=\cup_{j=1}^qC_j$. Let us denote by $\bQ_j^{C_k}$ the probability measure $\bQ_j$ placed into the cube $C_k$. Then our canonical state consists in placing each of the states $\bQ_j$ in one of the $q$ cubes $C_k$ and then looking at all the possible permutations:
$$\bQ:=\frac{1}{q!}\sum_{\sigma\in\gS_q}\bQ_{\sigma(1)}^{C_1}\otimes \bQ_{\sigma(2)}^{C_2}\otimes\cdots\otimes \bQ_{\sigma(q)}^{C_q}.$$
It can be verified that the restriction of $\bQ$ into each cube $C_j$ is precisely $\bP$, shifted into that cube.\footnote{See Step 3 of the proof of Theorem~\ref{thm:rho_scaled} below for a precise definition of the restriction.} In other words, $\bQ$ is a canonical state living over the big set $\cup_{k=1}^qC_k$ such that each local restriction is equal to the original grand canonical measure. In particular, the density of $\bQ$ is
$$\rho_{\bQ}=\1_{C'}=\sum_{k=1}^q\1_{C_k}.$$
In the following we also denote by $\rho_j^{C_k}$ the density of $\bQ_j^{C_k}$, hence
$$\sum_{j=1}^q\rho_j^{C_k}=q\1_{C_k}.$$
The Coulomb energy of $\bQ$ is, similarly as in the proof of Lemma~\ref{lem:subadditivity},
\begin{align*}
C(\bQ)=& \frac{1}{q!}\sum_{\sigma\in\gS_q}\left(\sum_{j=1}^qC(\bQ_j)+\sum_{j\neq k}D\big(\rho_{\sigma(j)}^{C_j},\rho_{\sigma(k)}^{C_k}\big)\right)\\
=&q\,C(\bP)+\frac{1}{q(q-1)}\sum_{j\neq k}\sum_{\ell\neq m}D\big(\rho_{\ell}^{C_j},\rho_{m}^{C_k}\big)\\
=&q\,C(\bP)+\frac{q}{q-1}\sum_{j\neq k}D\big(\1_{C_j},\1_{C_k}\big)-\frac{1}{q(q-1)}\sum_{j\neq k}\sum_{\ell}D\big(\rho_{\ell}^{C_j},\rho_{\ell}^{C_k}\big)\\
=&q\,C(\bP)+\frac{q}{q-1}D\big(\rho_\bQ,\rho_\bQ\big)-\frac{q^2}{q-1}D\big(\1_{C},\1_{C}\big)\\
&\qquad -\frac{1}{q(q-1)}\sum_{j\neq k}\sum_{\ell}D\big(\rho_{\ell}^{C_j},\rho_{\ell}^{C_k}\big)\\
\leq&q\,\Big(C(\bP)-D(\1_C,\1_C)\Big)+\frac{q}{q-1}D\big(\rho_\bQ,\rho_\bQ\big).
\end{align*}
Therefore we have shown that
$$C(\bQ)-D\big(\rho_\bQ,\rho_\bQ\big)\leq q\,\Big(C(\bP)-D(\1_C,\1_C)\Big)+\frac{1}{q-1}D\big(\rho_\bQ,\rho_\bQ\big).$$
Dividing by $q|C|$ gives~\eqref{eq:estim_large_cube_GC}.
\end{proof}

\medskip

\noindent\textbf{Step 3. Proof of the lower bound~\eqref{eq:cube_GC_C} for any cube.}

\begin{lemma}[Canonical lower bound for cubes]
Let $C$ be any cube of integer volume. Then
\begin{equation}
 \frac{E_{\rm GC}(C)}{|C|}\geq e_{\rm UEG}
\label{eq:cube_GC_C_lem}
\end{equation}
where $e_{\rm UEG}$ is the canonical energy defined in Theorem~\ref{thm:UEG_thermo_limit}.
\end{lemma}

\begin{proof}
For a probability of the special form $\bP=\oplus_{n\geq0}\frac{p_n}{q} \bP_n$, we can arbitrarily increase $q$ while keeping $\bP$ fixed by multiplying $p_n$ and $q$ by the same number $k$. The set $C'_k$ in Lemma~\ref{lem:compare_GC_C} is the union of $qk$ cubes which we can pack such as to form a domain of diameter proportional to $(qk)^{1/d}|C|^{1/d}$. By Theorem~\ref{thm:UEG_thermo_limit} we have as $k\to\ii$
$$\lim_{k\to\ii}\frac{E(C'_k)}{|C'_k|}=e_{\rm UEG}.$$
On the other hand we have
$$\frac{D\big(\1_{C'_k},\1_{C'_k}\big)}{(qk-1)|C'_k|}=O\left(\frac{|C'_k|^{1-\frac{s}d}}{qk}\right)=O\left(\frac{|C|^{1-\frac{s}d}}{(qk)^{\frac{s}d}}\right)$$
which tends to 0 as $k\to\ii$. Thus~\eqref{eq:estim_large_cube_GC} in Lemma~\ref{lem:compare_GC_C} shows that
\begin{equation}
 \frac{C(\bP)-D(\1_C,\1_C)}{|C|}\geq e_{\rm UEG}
 \label{eq:lower_bound_fractional_state_GC}
\end{equation}
for any $\bP=\oplus_{n\geq0}\frac{p_n}{q} \bP_n$ and any cube $C$ of integer volume $N$.

Next we use a density argument to deduce the same property for any $\bP=\oplus_{n\geq0}\lambda_n\bP_n$. For any fixed $\eps$ and $M$, we can find $p_n$ and $q$ such that $\lambda_n-\eps\leq p_n/q\leq \lambda_n$, for $n=0,...,M$. Let
$$\rho_{\eps,M}=\sum_{n=0}^M\left(\lambda_n-\frac{p_n}{q}\right)\rho_{\bP_n}+\sum_{n\geq M+1}\lambda_n\rho_{\bP_n}=\1_C-\sum_{n=0}^M\frac{p_n}{q}\rho_{\bP_n}$$
and note that
$$\int_{\R^d}\rho_{\eps,M}=N-\sum_{n=0}^Mn\frac{p_n}{q}$$ 
is a rational number. Of course, $\int_{\R^d}\rho_{\eps,M}\to0$ when $\eps\to0$ and $M\to\ii$.
Define then the tensor product
$$\bP_{\eps,M}=\bP_{\eps,M}^{(1)}\otimes \bP_{\eps,M}^{(2)}$$
with
$$\bP_{\eps,M}^{(1)}=\left(1-\sum_{n=1}^M\frac{p_n}{q}\right)\oplus\bigoplus_{n=1}^M\frac{p_n}{q}\bP_n\oplus0$$
and
$$\bP_{\eps,M}^{(2)}= \left(1-\int_{\R^d}\rho_{\eps,M}\right)\oplus\left(\int_{\R^d}\rho_{\eps,M}\right)\frac{\rho_{\eps,M}}{\int_{\R^d}\rho_{\eps,M}}\oplus0.$$
The probability $\bP_{\eps,M}$ has only rational coefficients and at most $M+1$ particles. Its density is
$$\rho_{\bP_{\eps,M}}=\rho_{\bP_{\eps,M}^{(1)}}+\rho_{\bP_{\eps,M}^{(2)}}=\sum_{n=1}^M\frac{p_n}{q}\rho_{\bP_n}+\rho_{\eps,M}=\1_C.$$
Thus by~\eqref{eq:lower_bound_fractional_state_GC} we have
$$\frac{C(\bP_{\eps,M})-D(\1_C,\1_C)}{|C|}\geq e_{\rm UEG}.$$
Passing to the limit $\eps\to0$ and $M\to\ii$ we deduce that 
$$\frac{C(\bP)-D(\1_C,\1_C)}{|C|}\geq e_{\rm UEG},$$
as we wanted.
\end{proof}

This completes the proof of Theorem~\ref{thm:UEG_thermo_limit_GC}.
\end{proof}

Repeating the above proof for a tile different from a cube, we can obtain the following result.

\begin{corollary}\label{cor:tetrahedra_GC}
Let $\Omega$ be a parallelepiped, a tetrahedron or any other convex polyhedron that generates a tiling of $\R^d$. Assume also that $|\Omega|$ is an integer. Then
\begin{equation}
 \frac{E_{\rm GC}(\Omega)}{|\Omega|}\geq e_{\rm UEG}
\label{eq:cube_GC_tetrahedron_lem}
\end{equation}
where $e_{\rm UEG}$ is the canonical energy defined in Theorem~\ref{thm:UEG_thermo_limit}.
\end{corollary}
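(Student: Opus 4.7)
The plan is to replay the entire proof of Theorem~\ref{thm:UEG_thermo_limit_GC} with the tile $\Omega$ in place of the cube $C$. Nothing in that proof was specific to cubes beyond two facts: first, the computation in Lemma~\ref{lem:compare_GC_C} only used that the domain was being cloned into $q$ disjoint isometric copies; second, the resulting $qk$-fold union was Fisher-regular so the thermodynamic limit $e_{\rm UEG}$ applied. Both features are available here.

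First I would restate Lemma~\ref{lem:compare_GC_C} with $C$ replaced by our tile $\Omega$: for any $\bP=\bigoplus_{n\geq0}\frac{p_n}{q}\bP_n$ with $\rho_\bP=\1_\Omega$ and any union $\Omega'=\bigcup_{k=1}^q\Omega_k$ of $q$ disjoint isometric copies of $\Omega$, I claim
\begin{equation*}
 \frac{C(\bP)-D(\1_\Omega,\1_\Omega)}{|\Omega|}\geq\frac{E(\Omega')}{|\Omega'|}-\frac{D(\1_{\Omega'},\1_{\Omega'})}{(q-1)|\Omega'|}.
\end{equation*}
The argument is a verbatim transcription of the cube proof: build the symmetrized canonical state $\bQ=(q!)^{-1}\sum_{\sigma\in\gS_q}\bQ_{\sigma(1)}^{\Omega_1}\otimes\cdots\otimes\bQ_{\sigma(q)}^{\Omega_q}$ where $\bQ_j^{\Omega_k}$ is $\bQ_j$ transported into $\Omega_k$ by the rigid motion sending $\Omega$ to $\Omega_k$, and expand $C(\bQ)$. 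The algebraic manipulation of direct terms goes through word for word because it only uses that each copy has density $\1_{\Omega_k}$ and that distinct copies are disjoint, never the shape.

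Second, I need to exhibit a sequence of disjoint isometric packings of $qk$ copies of $\Omega$ whose unions $\Omega'_k$ are Fisher-regular. This is precisely where the tiling hypothesis is used: as in the proof of Theorem~\ref{thm:lower_bound_sets}, let $\Gamma$ be the discrete group generating the tiling and set
\begin{equation*}
 \Omega'_k:=\bigcup_{g\,\Omega\cap B(0,R_k)\neq\emptyset} g\cdot\overline{\Omega},
\end{equation*}
with $R_k$ chosen so that this union contains (at least) $qk$ tiles, discarding the excess if necessary. By Prop.~2 of~\cite{HaiLewSol_1-09} (already invoked in Theorem~\ref{thm:lower_bound_sets}), $\Omega'_k$ satisfies Fisher's regularity condition with $\eta$ independent of $k$. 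Theorem~\ref{thm:UEG_thermo_limit} then gives $E(\Omega'_k)/|\Omega'_k|\to e_{\rm UEG}$, while the Hardy--Littlewood--Sobolev bound yields $D(\1_{\Omega'_k},\1_{\Omega'_k})=O(|\Omega'_k|^{2-s/d})$, so that the error term behaves as $O((qk)^{-s/d})\to 0$. Inserting these into the adapted Lemma~\ref{lem:compare_GC_C} produces $(C(\bP)-D(\1_\Omega,\1_\Omega))/|\Omega|\geq e_{\rm UEG}$ for every rational-weight $\bP$ with density $\1_\Omega$.

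Finally the density argument of Step~3 in the proof of Theorem~\ref{thm:UEG_thermo_limit_GC} extends without modification, since it relied solely on approximating general coefficients $\lambda_n$ by rationals $p_n/q$ and correcting by a one-particle component whose density absorbs the defect in $\1_\Omega$; it never referred to the shape of the underlying set. Letting $\varepsilon\to 0$ and $M\to\infty$ yields the desired inequality $E_{\rm GC}(\Omega)/|\Omega|\geq e_{\rm UEG}$. The only place that required any thought is the packing step above; for a cube the $qk$-fold union was trivially a nice cuboid, whereas here I genuinely need the tiling hypothesis to obtain a Fisher-regular host domain on which Theorem~\ref{thm:UEG_thermo_limit} can be applied.
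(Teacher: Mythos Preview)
Your proposal is correct and follows exactly the route the paper intends: the paper's own proof is the single sentence ``Repeating the above proof for a tile different from a cube, we can obtain the following result,'' and you have spelled out precisely what that repetition entails---adapting Lemma~\ref{lem:compare_GC_C} to isometric copies of $\Omega$, invoking the tiling (via the construction already used in Theorem~\ref{thm:lower_bound_sets} and Prop.~2 of~\cite{HaiLewSol_1-09}) to pack $qk$ copies into a Fisher-regular domain, and carrying over the rational-approximation density argument unchanged. The only point that deserves a moment's care is ensuring the packed domain $\Omega'_k$ contains \emph{exactly} $qk$ tiles while remaining connected and Fisher-regular; since you need only remove at most $q-1$ boundary tiles from the ball-union (a bounded modification that affects neither regularity nor connectivity for large $k$), this is harmless.
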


\section{Limit for slowly varying densities}\label{sec:limit_slowly_varying_rho}

In this section we look at the special case of a slowly varying density, in the Coulomb case
$$s=1\quad \text{and} \quad d=3.$$
Namely, we take $\rho_N(x)=\rho(x/N^{1/3})$ for a given $\rho$ with $\int_{\R^3}\rho\in\N$ and we prove that the limit $N\to\ii$ of the corresponding indirect energy is the uniform electron gas energy. This type of scaled density was used in several recent computations~\cite{SeiGorSav-07,RasSeiGor-11,SeiVucGor-16}. 

\begin{theorem}[Limit for scaled densities]\label{thm:rho_scaled}
Take $s=1$ in dimension $d=3$. Let $\rho\geq0$ be any continuous density on $\R^3$ such that $\int_{\R^3}\rho\in\mathbb{N}$ and $\rho\in\ell^1(L^\ii)$, which means that 
$$\sum_{k\in \Z^3}\max_{k+[0,1)^3}\rho<\ii.$$
Then we have
\begin{equation}
\boxed{\lim_{N\to\ii} \frac{E\big(\rho(x/N^{1/3})\big)}{N}= \lim_{N\to\ii} \frac{E\big(N\rho(x)\big)}{N^{4/3}}=e_{\rm UEG}\int_{\R^3}\rho(x)^{4/3}\,dx}
\label{eq:limit_slowly_varying_rho}
\end{equation}
where $e_{\rm UEG}$ is the constant defined in Theorem~\ref{thm:UEG_thermo_limit}.
\end{theorem}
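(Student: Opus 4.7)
The two equalities in~\eqref{eq:limit_slowly_varying_rho} are equivalent by Coulomb scaling. Given any $N$-particle probability $\bP$ of density $\rho(\cdot/L)$, the rescaled probability $\tilde\bP(x_1,\dots,x_N):=L^{3N}\bP(Lx_1,\dots,Lx_N)$ has density $L^3\rho$, and by the homogeneity of $|x-y|^{-1}$ one finds $L\bigl[C(\bP)-D(\rho(\cdot/L),\rho(\cdot/L))\bigr] = C(\tilde\bP)-D(L^3\rho,L^3\rho)$. Optimizing gives $L\,E(\rho(\cdot/L))=E(L^3\rho)$; taking $L=N^{1/3}$ matches the two halves of~\eqref{eq:limit_slowly_varying_rho}, so it suffices to prove $N^{-1}E(\rho(\cdot/N^{1/3}))\to e_{\rm UEG}\int_{\R^3}\rho^{4/3}$. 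The plan is to tile $\R^3$ with cubes $C_j$ of side length $\ell=\ell(N)$ satisfying $1\ll \ell \ll N^{1/3}$. Writing $\rho_j$ for the average of $\rho(\cdot/N^{1/3})$ over $C_j$, the fact that $\rho(\cdot/N^{1/3})$ varies on the scale $N^{1/3}\gg\ell$ makes $\sum_j\rho_j^{4/3}\ell^3$ a Riemann sum converging to $\int \rho(x/N^{1/3})^{4/3}\,dx = N\int\rho^{4/3}$.

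\medskip

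\textbf{Upper bound.} I would apply grand-canonical subadditivity~\eqref{eq:subadditivity_GC} to the decomposition $\rho(\cdot/N^{1/3})=\sum_j\rho(\cdot/N^{1/3})\1_{C_j}$. On each cube, a perturbative estimate based on the Lieb-Oxford bound~\eqref{eq:Lieb-Oxford_GC} and the Hardy-Littlewood-Sobolev inequality controls the change in $E_{\rm GC}$ between the true piece and its constant approximation $\rho_j\1_{C_j}$, producing an $o(\ell^3)$ error uniformly in $j$ thanks to the continuity of $\rho$. Theorem~\ref{thm:UEG_thermo_limit_GC} then yields $E_{\rm GC}(\rho_j\1_{C_j})=\rho_j^{4/3}\,e_{\rm UEG}\,\ell^3+o(\ell^3)$, and summing and dividing by $N$ produces the claimed Riemann sum. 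To upgrade a grand-canonical trial state to a canonical $N$-particle trial state (admissible because $N\int\rho\in\N$), I would concentrate the particle-number distribution in each cube near its mean and mix so as to match the global integer constraint, absorbing the rearrangement into a sub-leading correction.

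\medskip

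\textbf{Lower bound and main obstacle.} The lower bound is the main obstacle and relies on the Graf-Schenker inequality~\cite{GraSch-95}. For a simplex (or cube) $\Delta$ of side length $\ell$, averaging over placements $g\in\R^3\rtimes SO(3)$ of $g\Delta$ yields an inequality of the form
\[ \pscal{\sum_{i<j}\frac{1}{|x_i-x_j|}}_\bP - D(\rho_\bP,\rho_\bP) \;\ge\; \frac{1}{|\Delta|}\int dg\;\Bigl[\text{indirect energy of $\bP$ restricted to }g\Delta\Bigr] \;-\; \text{(boundary error)}, \]
where the boundary error per particle is $O(\ell^{-1})$. Applied to a near-minimizer of $E(\rho(\cdot/N^{1/3}))$, combined with Corollary~\ref{cor:tetrahedra_GC} and Theorem~\ref{thm:UEG_thermo_limit_GC} on each $g\Delta$, together with the same continuity estimate to replace $\rho(\cdot/N^{1/3})$ by the local constant $\rho_j$ inside each tile, this gives $N^{-1}E(\rho(\cdot/N^{1/3}))\ge e_{\rm UEG}\int\rho^{4/3}-o(1)$ after letting $N\to\ii$ and then $\ell\to\ii$. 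The hard part is precisely this step: Graf-Schenker is specific to the three-dimensional Coulomb potential ($s=1$, $d=3$) and reflects screening, and one must simultaneously control the $O(\ell^{-1})$ boundary error, the fluctuations of the particle count inside each tile $g\Delta$ (which is what makes the grand-canonical framework of Section~\ref{sec:grand-canonical} indispensable), and the slow variation of $\rho$ across tiles of size $\ell$.
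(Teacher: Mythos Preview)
Your overall architecture matches the paper's: tiling for the upper bound and the Graf--Schenker inequality for the lower bound, with the grand-canonical framework of Section~\ref{sec:grand-canonical} used to absorb particle-number fluctuations in the localized tiles. The lower-bound sketch is essentially correct and coincides with the paper's Step~3.

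The upper bound, however, has a genuine gap. Grand-canonical subadditivity on $\rho_N=\sum_j\rho_N\1_{C_j}$ only yields $E_{\rm GC}(\rho_N)\le\sum_j E_{\rm GC}(\rho_N\1_{C_j})$, and since $E_{\rm GC}\le E$ this does \emph{not} bound $E(\rho_N)$ from above. Your proposed fix---concentrating the particle numbers in each cube and mixing to hit the global integer $N$---is not a routine step; making it work would require controlling the indirect energy under such redistributions, which you do not do. The paper avoids this entirely by a simple trick: instead of the cube average $\rho_j$, take $m_j=\min_{C_j}\rho_N$ and round $m_j|C_j|$ down to the nearest integer via a factor $1-\eps_j$. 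Then $\rho_N\ge\sum_j m_j(1-\eps_j)\1_{C_j}$ pointwise, each summand has integer mass, and \emph{canonical} subadditivity (Lemma~\ref{lem:subadditivity}) gives $E(\rho_N)\le\sum_j E\big(m_j(1-\eps_j)\1_{C_j}\big)$ directly, since $E$ of the nonnegative remainder is negative. This also eliminates your ``perturbative estimate'': no comparison between $E_{\rm GC}(\rho_N\1_{C_j})$ and $E_{\rm GC}(\rho_j\1_{C_j})$ is needed. The analogous device in the lower bound is to take $M(g,\ell)=\max_{g\ell\bDelta}\rho_N$ and round \emph{up}, so that $\rho_N\1_{g\ell\bDelta}\le(1+\eps)M\1_{g\ell\bDelta}$ and subadditivity gives $E_{\rm GC}(\rho_N\1_{g\ell\bDelta})\ge E_{\rm GC}\big((1+\eps)M\1_{g\ell\bDelta}\big)$, after which Corollary~\ref{cor:tetrahedra_GC} applies. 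The min/max-and-round device is the missing idea that makes both halves clean.
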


The intuition behind the theorem is that $\rho_N(x)=\rho(x/N^{1/3})$ becomes almost constant locally since its derivative behaves as $N^{-1/3}$. Although the indirect energy is not local, the correlations are weak for slowly varying densities and the limit is the ``local density approximation'' of the indirect energy.

It has been numerically observed that $N\mapsto E(\rho_N)/N$ is decreasing for many choices of $\rho$~\cite{SeiGorSav-07,RasSeiGor-11,SeiVucGor-16}. If we could prove that for any $\rho$, $N\mapsto E(\rho_N)/N$ is indeed decreasing, then we would immediately conclude that the best Lieb-Oxford constant is $-e_{\rm UEG}$, settling thereby a longstanding conjecture.

Our assumption that $\rho$ is continuous can be weakened, for instance by only requiring that $\rho$ is piecewise continuous (with smooth discontinuity surfaces). Our proof requires to be able to approximate $\rho$ from below and above by step functions, and we therefore cannot treat an arbitrary function in $L^1\cap L^{4/3}$. We essentially need that $\rho$ is Riemann-integrable.

The result of Theorem~\ref{thm:rho_scaled} should be compared with recent works of Sandier, Serfaty and co-workers~\cite{SanSer-15,Serfaty-14,SanSer-14a,RouSer-16,RotSer-15,PetSer-15} on the first-order correction to the mean-field limit, for any $\max(0,d-2)\leq s<d$ in dimension $d\geq1$. In those works the potential $V$ is fixed to be $V(x/N^{1/s})$ (or equivalently $NV(x)$) and there is no constraint on the density. The main result of those works is that 
\begin{multline}
\inf_{x_1,...,x_N\in \R^d}\bigg\{\sum_{j=1}^NV\left(\frac{x_j}{N^{1/s}}\right)+\sum_{1\leq j<k\leq N}\frac{1}{|x_j-x_k|^s}\bigg\}\\
=N\min_{\substack{\rho\geq0\\ \int_{\R^d}\rho=1}}\bigg\{\int_{\R^d}V\rho+D(\rho,\rho)\bigg\}+N^{\frac{s}{d}}\,e_{\rm Jel}\int_{\R^d}\overline\rho(x)^{1+\frac{s}{d}}\,dx+o\left(N^{\frac{s}{d}}\right),
\label{eq:mean_field_Sandier_Serfaty}
\end{multline}
where $\overline\rho$ is the unique minimizer to the minimum on the right and $e_{\rm Jel}\leq e_{\rm UEG}$ is the Jellium energy. We see from~\eqref{eq:limit_slowly_varying_rho} and~\eqref{eq:mean_field_Sandier_Serfaty} that the Jellium model arises when the potential is fixed (and the density is optimized), whereas the UEG arises when the density is fixed. Whether $e_{\rm UEG}$ is equal to $e_{\rm Jel}$ or not is an important question in DFT.

In order to allow for a better comparison, it would be interesting to extend our limit~\eqref{eq:limit_slowly_varying_rho} to all $0<s<d$ in any dimension (in dimension $d=1$, this has recently been done in~\cite{DiMarino-17}).

\begin{proof}
As usual we have to prove a lower and an upper bound.

\medskip

\noindent\textbf{Step 1. A simple approximation lemma.} 
The following is an elementary result about the approximation of continuous functions by step functions.

\begin{lemma}\label{lem:approx_step_fn}
Let $f\geq0$ be a continuous function in $\ell^1(L^\ii)$. For every $\epsilon>0$, consider a tiling of the full space $\R^3=\cup_j \overline{D_j}$ with pairwise disjoint polyhedral domains such that ${\rm diam}(D_j)\leq\epsilon$. Define the approximations
\begin{equation}
 f_\epsilon^-(x):=\sum_j \big(\min_{D_j}f\big)\1_{D_j}(x),\qquad f_\epsilon^+(x):=\sum_j \big(\max_{D_j}f\big)\1_{D_j}(x). 
 \label{eq:limit_tiling}
\end{equation}
Then $f_\epsilon^\pm\to f$ strongly in $L^1(\R^3)\cap L^\ii(\R^3)$. 

Similarly, we have for any fixed open bounded set $\Omega$ and any $1\leq p<\ii$
\begin{equation}
\lim_{\eps\to0}\int_{\R^3}\left(\min_{x+\eps\Omega}f\right)^pdx=\lim_{\eps\to0}\int_{\R^3}\left(\max_{x+\eps\Omega}f\right)^pdx=\int_{\R^3}f(x)^p\,dx.
\label{eq:limit_continuous_tiling}
\end{equation}
\end{lemma}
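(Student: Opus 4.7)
The plan is a standard dominated-convergence argument, with the hypothesis $f\in\ell^1(L^\ii)$ supplying a single $\eps$-independent dominant.

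First I would observe that $\sum_{k\in\Z^3}\max_{k+[0,1)^3}f<\ii$ forces $\max_{k+[0,1)^3}f\to 0$ as $|k|\to\ii$, so $f$ vanishes at infinity. Being also continuous on $\R^3$, $f$ is therefore uniformly continuous with some modulus $\omega(\eps)\to 0$. Since the tile $D_j(x)$ containing $x$ lies in $B(x,\eps)$, this gives $|f_\eps^\pm(x)-f(x)|\leq \omega(\eps)$ for every interior point $x$, which is already the $L^\ii$ convergence (the boundary of the tiling has measure zero). It also gives the pointwise limits $\min_{x+\eps\Omega}f,\max_{x+\eps\Omega}f\to f(x)$ as $\eps\to 0$ for every fixed bounded $\Omega$.

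To handle the $L^1$ and $L^p$ convergence I would construct a uniform dominant. Set
$$M(x):=\sup_{|y-x|\leq 2}f(y),$$
and observe that for $x\in Q_k:=k+[0,1)^3$ the ball $B(x,2)$ meets only the cubes $Q_{k'}$ with $|k-k'|_\ii\leq 3$, so $M(x)\leq \sum_{|k-k'|_\ii\leq 3}\max_{Q_{k'}}f$. Summing over $k\in\Z^3$ yields $\int_{\R^3}M\leq 7^3\sum_{k}\max_{Q_k}f<\ii$; since also $M\leq\norm{f}_{L^\ii}<\ii$, we have $M\in L^1(\R^3)\cap L^\ii(\R^3)\subset L^p(\R^3)$ for every $1\leq p\leq\ii$. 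For $\eps\leq 1$, each tile $D_j(x)$ has diameter $\leq 1$, so $D_j(x)\subset B(x,1)\subset B(x,2)$ and hence $f_\eps^+(x)\leq M(x)$; for a fixed bounded $\Omega$, the same bound $\max_{x+\eps\Omega}f\leq M(x)$ holds once $\eps$ is small enough that $\eps\Omega\subset B(0,2)$.

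Combining the pointwise convergences with these uniform bounds, dominated convergence delivers both conclusions: $|f_\eps^\pm-f|\leq 2M\in L^1$ with pointwise convergence to $0$ gives $f_\eps^\pm\to f$ in $L^1(\R^3)$, and $(\max_{x+\eps\Omega}f)^p,(\min_{x+\eps\Omega}f)^p\leq M^p\in L^1$ with pointwise convergence to $f^p$ gives~\eqref{eq:limit_continuous_tiling}. I do not expect any real obstacle; the only point requiring care is verifying that the $\ell^1(L^\ii)$ hypothesis is exactly what makes the local-supremum function $M$ integrable, providing an $\eps$-independent majorant that is oblivious to the specific tiling chosen.
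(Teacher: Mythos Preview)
Your proposal is correct and follows essentially the same approach as the paper: uniform continuity of $f$ (from $f\to 0$ at infinity) gives the $L^\infty$ convergence and the pointwise limits, while the $\ell^1(L^\infty)$ hypothesis furnishes an $\eps$-independent integrable dominant so that dominated convergence handles the $L^1$/$L^p$ parts. The only cosmetic difference is that the paper groups the tiles by the unit cubes $k+[0,1)^3$ and applies a discrete dominated convergence over $k\in\Z^3$ to $\int(f_\eps^+-f_\eps^-)$, whereas you build the explicit majorant $M(x)=\sup_{|y-x|\leq 2}f(y)$ and apply the continuous version; both encode the same estimate.
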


The limit~\eqref{eq:limit_continuous_tiling} is similar to~\eqref{eq:limit_tiling} in that it can be interpreted as a kind of continuous tiling with the small domain $\eps\Omega$.

\begin{proof}
Since $f\in \ell^1(L^\ii)$ then it must tend to 0 at infinity and it is therefore uniformly continuous on $\R^3$. This implies that $f_\epsilon^\pm\to f$ uniformly on $\R^3$. Then we have
\begin{align*}
\int f_\eps^+-f^-_\eps&=\sum_j |D_j|\Big(\max_{D_j}f-\min_{D_j}f\Big)\\
&=\sum_{k\in\Z^3}\underbrace{\sum_{D_j\cap (k+[0,1)^3)\neq\emptyset} |D_j|\Big(\max_{D_j}f-\min_{D_j}f\Big)}_{\leq 2\|f\|_{L^\ii(k+[0,1)^3)}}. 
\end{align*}
This tends to zero, by the dominated convergence theorem. The argument is similar for~\eqref{eq:limit_continuous_tiling}.
\end{proof}

There is a similar result when $f$ is only piecewise continuous (without the uniform convergence).

\medskip

\noindent\textbf{Step 2. Upper bound.} 
Let $\ell$ be an integer such that $1\ll\ell\ll N^{1/3}$ and $\R^3=\cup_j C_j$ be a tiling made of cubes of side length $\ell$. In each cube $C_j$, let 
$$m_j=\min_{C_j} \rho_N$$
be the minimum value of $\rho_N(x)=\rho(x/N^{1/3})$. Take also
$$\eps_j=1-\frac{\lfloor m_j|C_j|\rfloor}{m_j|C_j|}\leq \min\left(1,\frac{1}{m_j|C_j|}\right)$$
such that $m_j(1-\epsilon_j)|C_j|=\lfloor m_j|C_j|\rfloor$ is an integer.
Then
$$\rho_N\geq \sum_jm_j(1-\varepsilon_j)\1_{C_j}.$$
By subadditivity and using $(1-\eps)^{4/3}\geq 1-4\eps/3$, we obtain 
\begin{align*}
E(\rho_N)\leq &\sum_j E\big(m_j(1-\eps_j)\1_{C_j}\big)\\
=&\sum_j m_j^{\frac43}(1-\eps_j)^{\frac43}|C_j|\, \frac{E(C'_j)}{|C'_j|}\\
\leq&\sum_j (1-\eps_j)^{4/3}m_j^{\frac43}|C_j|\left( \frac{E(C'_j)}{|C'_j|}-e_{\rm UEG}\right)+e_{\rm UEG}\sum_j (1-\eps_j)^{4/3}m_j^{\frac43}|C_j|\\
\leq&\sum_j (1-\eps_j)^{4/3}m_j^{\frac43}|C_j|\left( \frac{E(C'_j)}{|C'_j|}-e_{\rm UEG}\right)-\frac43 e_{\rm UEG}\sum_j m_j^{\frac43}\eps_j|C_j|\\
&+N\,e_{\rm UEG}\int_{\R^3}\rho^{\frac43}+e_{\rm UEG}\left(\sum_j m_j^{\frac43}|C_j|-\int_{\R^3}\rho_N^{\frac43}\right),
\end{align*}
where $C'_j=m_j^{1/3}(1-\eps_j)^{1/3}C_j$. 

We estimate the error terms as follows. After scaling by $N^{1/3}$ we find by Lemma~\ref{lem:approx_step_fn}
\begin{equation}
\int_{\R^3}\rho_N^{\frac43}-\sum_j m_j^{\frac43}|C_j| =N\int_{\R^3}\left(\rho^{\frac43}-(\rho_{\ell/N^{1/3}}^-)^{\frac43}\right)=o(N)
\label{eq:scaling_N}
\end{equation}
where $\rho_{\ell/N^{1/3}}^-$ is defined as in~\eqref{eq:limit_tiling} with the tiling $D_j=N^{-1/3}C_j$.
For the second error term we write\footnote{Here and everywhere else, $c>0$ denotes a constant that may change from line to line.}
\begin{align*}
\sum_j \eps_j m_j^{\frac43}|C_j|&\leq  c\eta\sum_{\eps_j\leq\eta} m_j^{\frac43}|C_j|+\sum_{\eps_j\geq\eta} m_j^{\frac43}|C_j|\\
 &\leq  c\eta\sum_{\eps_j\leq\eta} m_j^{\frac43}|C_j|+\frac{1}{\ell\eta^{1/3}}\sum_{\eps_j\geq\eta} m_j|C_j|,
\end{align*}
since $\eps_j\geq\eta$ implies $m_j\leq \ell^{-3}\eta^{-1}$. Using again Lemma~\ref{lem:approx_step_fn} as in~\eqref{eq:scaling_N} gives that the two sums grow linearly in $N$, hence
$$\sum_j \eps_jm_j^{\frac43}|C_j|\leq cN\left(\eta+\frac{1}{\ell\eta^{1/3}}\right)\leq cN\ell^{-\frac34}.$$
Similarly, since $E(C'_j)|C'_j|^{-1}$ is uniformly bounded by~\eqref{eq:LO}, we have
\begin{align*}
&\sum_j m_j^{\frac43}(1-\eps_j)^{4/3}|C_j|\left( \frac{E(C'_j)}{|C'_j|}-e_{\rm UEG}\right)\\
&\quad \leq c\sum_{m_j(1-\epsilon_j)\leq\eta} m_j^{\frac43}(1-\eps_j)^{4/3}|C_j|+\sum_{m_j(1-\epsilon_j)\geq\eta} m_j^{\frac43}|C_j|\left( \frac{E(C'_j)}{|C'_j|}-e_{\rm UEG}\right)\\
&\quad \leq cN\eta^{1/3}+\sum_{m_j(1-\epsilon_j)\geq\eta} m_j^{\frac43}|C_j|\left( \frac{E(C'_j)}{|C'_j|}-e_{\rm UEG}\right).
\end{align*}
In the second sum, $|C'_j|=m_j(1-\eps_j)\ell^3\geq \ell^3\eta$. So 
$$\frac{E(C'_j)}{|C'_j|}\to e_{\rm UEG}$$
as long as $\eta$ is chosen such that $\ell^3\eta\to\ii$. From the dominated convergence theorem and~\eqref{eq:scaling_N}, it follows that 
$$\sum_{m_j(1-\epsilon_j)\geq\eta} m_j^{\frac43}|C_j|\left( \frac{E(C'_j)}{|C'_j|}-e_{\rm UEG}\right)=o(N).$$
So taking $\eta\to0$ with $\eta\ell^3\to\ii$, we have proved that
$$E(\rho_N)\leq N\,e_{\rm UEG}\int_{\R^3}\rho^{\frac43}+o(N).$$

\medskip

\noindent\textbf{Step 3. Lower bound.} So far our argument was very general and works exactly the same for any $0<s<d$. For the lower bound we use the Graf-Schenker inequality~\cite{GraSch-95,HaiLewSol_1-09,HaiLewSol_2-09}, which enables to decouple Coulomb subsystems using a tiling made of tetrahedra and averaging over translations and rotations of the tiling. This inequality is very specific to the 3D Coulomb case and it is a powerful tool to use screening effects.

In order to go further, we need the concept of localized classical states~\cite{Lewin-11,FouLewSol-15}. If we have a canonical symmetric $N$-particle density $\bP$, we define its localization $\bP_{|A}$ to a set $A$ by the requirement that all its $k$-particle densities are equal to $\rho^{(k)}_{\bP_{|A}}=(1_A)^{\otimes k}\rho^{(k)}_{\bP}$, namely, those are localized in the usual way. Except when all the $N$ particles are always inside or outside of $A$, the localized state $\bP_{|A}$ must be a grand-canonical state, since the number of particles in $A$ fluctuates. More precisely, $\bP_{|A}$ is the sum of the $n+1$ probabilities defined by
\begin{equation}
 \begin{cases}
  \bP_{|A,0}=\int_{(\R^3\setminus A)^N}d\bP,\\
  \bP_{|A,n}(x_1,...,x_n)=\1_{A^{\otimes n}}(x_1,...,x_n){N\choose n}\int_{(\R^3\setminus A)^{N-n}}d\bP(x_1,...,x_n,\cdot).
 \end{cases}
\label{eq:def_localized}
\end{equation}
That the localization of a canonical state is always a grand-canonical state is our main motivation for having considered the grand-canonical UEG in Section~\ref{sec:grand-canonical}. It is actually possible to define the localization $\bP_\chi$ for any function $|\chi|\leq1$ and not only for characteristic functions. It suffices to replace everywhere $\1_A$ by $\chi^2$ and $\1_{\R^3\setminus A}$ by $1-\chi^2$. This will be used later in the quantum case, where smooth localization functions are mandatory.

In our setting, the Graf-Schenker inequality says that the full indirect energy can be bounded from below by the average of the energies of the localized states in a tetrahedron, which is rotated in all directions and translated over the whole of $\R^3$. This is the same as taking a tiling made of simplices and averaging over translations and rotations of this tiling. 

\begin{lemma}[Graf-Schenker inequality for the exchange-correlation energy]\label{lem:Graf-Schenker}
Let $s=1$ and $d=3$. Let $\bDelta\subset\R^3$ be a tetrahedron. There exists a constant $c>0$ such that for every $\ell>0$ and every $N$-particle symmetric probability $\bP$, we have
\begin{multline}
C(\bP)-D(\rho_{\bP},\rho_{\bP})\\
\geq \frac1{|\ell\bDelta|}\int_{\R^3\times SO(3)}\Big\{C\big(\bP_{|g\ell\bDelta}\big)-D(\rho_{\bP}\1_{g\ell\bDelta},\rho_\bP\1_{g\ell\bDelta})\Big\}\,dg-\frac{c}{\ell}\int_{\R^3}\rho_\bP
\label{eq:lower_bd_Graf_Schenker}
\end{multline}
where $\bP_{|g\ell\bDelta}$ is the (grand-canonical) restriction of $\bP$ to the subset $g\ell\bDelta$.
\end{lemma}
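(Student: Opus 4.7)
My plan follows the approach of~\cite{GraSch-95,HaiLewSol_1-09}. The first step is to rewrite both sides of~\eqref{eq:lower_bd_Graf_Schenker} in terms of the two-particle correlation function $\gamma_\bP(x,y):=\rho^{(2)}_\bP(x,y)-\rho_\bP(x)\rho_\bP(y)$ of $\bP$, where $\rho^{(2)}_\bP$ is the symmetric two-particle density. From the definitions of $C(\bP)$ and $D$ one has $C(\bP)-D(\rho_\bP,\rho_\bP)=\tfrac12\iint \gamma_\bP(x,y)|x-y|^{-1}\,dxdy$, and since the grand-canonical restriction $\bP_{|A}$ has by definition two-particle density $\1_A(x)\1_A(y)\rho^{(2)}_\bP(x,y)$, the integrand on the right-hand side of~\eqref{eq:lower_bd_Graf_Schenker} equals $\tfrac12\iint \1_{g\ell\bDelta}(x)\1_{g\ell\bDelta}(y)\gamma_\bP(x,y)|x-y|^{-1}\,dxdy$.

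Averaging over $g\in\R^3\rtimes SO(3)$, translation and rotation invariance of $dg$ give
\[
\frac{1}{|\ell\bDelta|}\int_{\R^3\times SO(3)}\1_{g\ell\bDelta}(x)\1_{g\ell\bDelta}(y)\,dg \;=\; F_\ell(x-y),
\]
a radial function of the form $F_\ell(z)=F_1(z/\ell)$, supported in a ball of radius $\lesssim \ell$, with $F_\ell(0)=1$ and Lipschitz at the origin. The inequality~\eqref{eq:lower_bd_Graf_Schenker} therefore reduces to the single bound
\[
\tfrac12\iint h_\ell(x-y)\,\gamma_\bP(x,y)\,dx\,dy \;\geq\; -\frac{c}{\ell}\int_{\R^3}\rho_\bP,
\]
for the bounded continuous kernel $h_\ell(z):=(1-F_\ell(z))/|z|$ with $h_\ell(0)=O(1/\ell)$.

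The crucial analytical input, and the reason the result is restricted to the 3D Coulomb case, is the Graf-Schenker inequality of~\cite{GraSch-95}: for a regular tetrahedron $\bDelta$ there exists $c>0$ such that $\hat{h_\ell}(k)\geq -c/\ell$ for every $k\in\R^3$, i.e.\ $h_\ell+(c/\ell)\delta_0$ is positive-definite as a distribution. To feed our probabilistic object into this, I would decompose $\gamma_\bP$ configuration-wise: for each $\omega=(x_1,\dots,x_N)$ the neutral signed measure $\nu_\omega:=\sum_i\delta_{x_i}-\rho_\bP$ satisfies, as distributions, $\gamma_\bP(x,y)=\int d\bP(\omega)\,d\nu_\omega(x)\otimes d\nu_\omega(y)-\rho_\bP(x)\delta_0(x-y)$. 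Integrating against $h_\ell$ and exchanging expectation and integral, the term $-\rho_\bP\,\delta_0(x-y)$ produces exactly the finite contribution $-h_\ell(0)\int\rho_\bP=O(N/\ell)$, while the positive-definiteness of $h_\ell+(c/\ell)\delta_0$ (after a standard regularization of the $\delta$-masses in $\nu_\omega$) yields $\iint h_\ell\,d\nu_\omega\,d\nu_\omega\geq -cN/\ell$ pointwise in $\omega$; combining the two gives the claimed bound.

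The main obstacle is the Fourier estimate $\hat{h_\ell}(k)\geq -c/\ell$, which relies delicately both on the specific tetrahedral averaging and on cancellations special to the Coulomb kernel in $\R^3$; I would take it as a black box from~\cite{GraSch-95}. The only other mildly technical point is the mollification of the Dirac masses in $\nu_\omega$ needed to rigorously apply a distributional positive-definiteness statement: this is standard, and the self-interactions it creates are absorbed exactly by the $(c/\ell)\delta_0$ tolerance of Graf-Schenker together with the $-h_\ell(0)\int\rho_\bP$ correction, which explains why the error term in~\eqref{eq:lower_bd_Graf_Schenker} comes out precisely of order $1/\ell$ and no worse.
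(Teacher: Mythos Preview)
Your proposal is correct and follows essentially the same route as the paper: both reduce the inequality to the positive-definiteness of the kernel $w_\ell(z)=(1-F_\ell(z))/|z|$ from~\cite{GraSch-95}, applied to the signed measure $\nu_\omega=\sum_j\delta_{x_j}-\rho_\bP$ for each configuration, with the self-energy producing the $O(N/\ell)$ error. Two minor remarks: first, the paper works directly with the pointwise inequality~\eqref{eq:Graf_Schenker_use} rather than passing through the two-point correlation $\gamma_\bP$ and then decomposing it back configuration-wise, which is a detour; second, you quote the Graf--Schenker Fourier estimate as $\hat h_\ell\geq -c/\ell$, but the actual result in~\cite{GraSch-95} is the stronger statement $\hat w_\ell\geq 0$, which removes the need for your ``$(c/\ell)\delta_0$ tolerance'' and makes the mollification step unnecessary---the only $1/\ell$ error then comes from the diagonal value $w_\ell(0)=-h_1'(0)/\ell$.
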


\begin{proof}
Graf and Schenker have proved in~\cite{GraSch-95} that the potential 
$$w_\ell(x)=\frac{1-h_\ell(x)}{|x|}$$
has positive Fourier transform, where 
\begin{align*}
h_\ell(x-y)&=\frac1{|\ell\bDelta|}\int_{SO(3)}\1_{\ell\bDelta}\ast\1_{-\ell\bDelta}(Rx-Ry)\,dR\\
&=\frac1{|\ell\bDelta|}\int_{SO(3)}\int_{\R^3}\1_{R^{-1}\ell \bDelta+z}(y)\1_{R^{-1}\ell \bDelta+z}(x)\,dz\,dR=h_1\left(\frac{x-y}{\ell}\right) 
\end{align*}
and with $\bDelta$ a tetrahedron. Note that $h_\ell(0)=1$ and that $h_\ell'(0)=h_1'(0)/\ell$. 
From this we deduce that for any $\rho$,
\begin{align}
0&\leq \frac12\int_{\R^3}\int_{\R^3}w_\ell(x-y)\left(\sum_{j=1}^N\delta_{x_j}(x)-\rho(x)\right)\left(\sum_{j=1}^N\delta_{x_j}(y)-\rho(y)\right)\,dx\,dy \nn \\
&=\sum_{1\leq j<k\leq N}w_\ell(x_j-x_k)+\frac{Nh_1'(0)}{2\ell}-2\sum_{j=1}^ND_{w_\ell}(\rho,\delta_{x_j})+D_{w_\ell}(\rho,\rho),\label{eq:Graf_Schenker_use}
\end{align}
since $D_{w_\ell}(f,g)=1/2\int_{\R^3}\int_{\R^3}w_\ell(x-y)f(x)g(y)\,dx\,dy$ is positive-definite.
Taking $\rho=\rho_{\bP}$ and integrating against $\bP$, we get 
$$\int_{\R^{3N}}\sum_{1\leq j<k\leq N}w_\ell(x_j-x_k)d\mathbb{P}(x_1,...,x_N)\geq D_{w_\ell}(\rho_{\bP},\rho_{\bP})-\frac{cN}{\ell}.$$
Inserting the definition of $w_\ell$, this can be stated in the form~\eqref{eq:lower_bd_Graf_Schenker}.
\end{proof}

Applying~\eqref{eq:lower_bd_Graf_Schenker} to a probability $\bP$ such that $\rho_\bP=\rho_N=\rho(\cdot/N^{1/3})$, we find 
\begin{align*}
E(\rho_N)\geq \frac1{|\ell\bDelta|}\int_{\R^3\times SO(3)}E_{\rm GC}(\rho_N\1_{g\ell\bDelta})\,dg-c\frac{N}{\ell}.
\end{align*}
For each tetrahedron $g\ell\bDelta$ we denote by $M(g,\ell)=\max_{g\ell\bDelta} \rho_N$ and we take $\eps(g,\ell)\in [0,\min(1,M(g,\ell)^{-1}|\ell\bDelta|^{-1}))$ to ensure that $(1+\eps(g,\ell))M(g,\ell)|\ell\bDelta|$ is the next integer after $M(g,\ell)|\ell\bDelta|$. Then we have by Corollary~\ref{cor:tetrahedra_GC}
\begin{align*}
E_{\rm GC}(\rho_N\1_{g\ell\bDelta})&\geq E_{\rm GC}\big((1+\eps(g,\ell))M(g,\ell)\1_{g\ell\bDelta}\big)\\
&\geq (1+\eps(g,\ell))^{4/3}M(g,\ell)^{4/3}|\ell\bDelta|\, e_{\rm UEG}\\
&\geq (1+2\eps(g,\ell))M(g,\ell)^{4/3}|\ell\bDelta|\, e_{\rm UEG}.
\end{align*}
Thus
$$E(\rho_N)\geq e_{\rm UEG}\int_{\R^3\times SO(3)}(1+2\eps(g,\ell))\max_{g\ell\bDelta}\rho_N^{4/3}\,dg-c\frac{N}{\ell}.$$
We have
\begin{align*}
\frac1N\int_{\R^3\times SO(3)}\left(\max_{g\ell\bDelta}\rho_N^{4/3}\right)\,dg&=\int_{\R^3\times SO(3)}\left(\max_{g\ell N^{-1/3}\bDelta}\rho^{4/3}\right)\,dg\\
&=\int_{\R^3}\int_{SO(3)}\left(\max_{R\ell N^{-1/3}\bDelta+z}\rho^{4/3}\right)\,dR\,dz\\
&\underset{\frac{\ell}{N^{1/3}}\to0}{\longrightarrow}\int_{\R^3}\rho^{4/3}(z)\,dz
\end{align*}
as we want, by the dominated convergence theorem (it is also possible to rewrite the integral over $g$ as an average over translations and rotations of one tiling made of tetrahedra~\cite{GraSch-95,HaiLewSol_1-09,HaiLewSol_2-09} and then to apply Lemma~\ref{lem:approx_step_fn} for this tiling). The term with $\eps(g,\ell)$ is treated as before by writing
\begin{align*}
\int_{\R^3\times SO(3)}\eps(g,\ell)\max_{g\ell\bDelta}\rho_N^{4/3}\,dg&\leq \eta\int_{\eps(g,\ell)\leq\eta}\max_{g\ell\bDelta}\rho_N^{4/3}\,dg+\int_{\eps(g,\ell)\geq\eta}\max_{g\ell\bDelta}\rho_N^{4/3}\,dg\\
&\leq c\eta N+\frac{1}{\ell\eta^{1/3}}\int_{\eps(g,\ell)\geq\eta}\max_{g\ell\bDelta}\rho_N\,dg\leq c\frac{N}{\ell^{3/4}}.
\end{align*}
This concludes the proof of Theorem~\ref{thm:rho_scaled}.
\end{proof} 

\section{Extension to the quantum case}\label{sec:quantum}

In this last section we discuss the quantum case. Of course we cannot employ sharp densities and we have to restrict ourselves to regular densities. A theorem of Harriman~\cite{Harriman-81} and Lieb~\cite{Lieb-83b} says that the set of densities $\rho$ which come from a quantum state with finite kinetic energy is exactly composed of the functions $\rho\geq0$ such that $\sqrt\rho\in H^1(\R^3)$. So we have to work under these assumptions. 

For simplicity we only define the grand canonical UEG, but we expect that the exact same results hold in the canonical setting. We also restrict ourselves to the physical case $s=1$ and $d=3$.

For $\rho\geq0$ with $\sqrt\rho\in H^1(\R^3)$, we define the grand canonical quantum energy by
\begin{multline}
E_{\hbar}(\rho):=\\\inf_{\substack{\Gamma_n=\Gamma_n^*\geq0\\ \Gamma_n\text{ antisymmetric}\\ \sum_{n=0}^\ii\tr(\Gamma_n)=1\\ \sum_{n=1}^\ii \rho_{\Gamma_n}=\rho}} \Bigg\{\sum_{n=1}^\ii \tr_{L^2_a((\R^3\times\{1,...,q\})^n,\C)}\Bigg(-\hbar^2\sum_{j=1}^n\Delta_{x_j}+\sum_{1\leq j<k\leq n}\frac{1}{|x_j-x_k|}\Bigg)\Gamma_n\Bigg\}\\-\frac12\int_{\R^3}\int_{\R^3}\frac{\rho(x)\rho(y)}{|x-y|}dx\,dy.
\label{eq:def_E_quantum}
\end{multline}
Here $L^2_a((\R^3\times \{1,...,q\})^n,\C)$ is the space of antisymmetric square-integrable functions on $(\R^3\times \{1,...,q\})^n$, with $q$ spin states (for electrons $q=2$). The density of $\Gamma_n$ is defined by
\begin{multline*}
\rho_{\Gamma_n}(x)=n\times\\
\times\!\!\!\sum_{\sigma_1,...,\sigma_n\in\{1,...,q\}}\int_{\R^{3(n-1)}}
\Gamma_n(x,\sigma_1,x_2,...,x_n,\sigma_n;x,\sigma_1,x_2,...,x_n,\sigma_n)dx_2\cdots dx_n 
\end{multline*}
where $\Gamma_n(X;Y)$ is the kernel of the trace-class operator $\Gamma_n$. This kernel is such that 
\begin{align*}
&\Gamma_n(x_{\tau(1)},\sigma_{\tau(1)},...,x_{\tau(N)},\sigma_{\tau(N)}\;;\;x'_1,\sigma'_1,...,x'_N,\sigma'_N)\\
&\qquad=\Gamma_n(x_1,\sigma_1,...,x_N,\sigma_N\;;\;x'_{\tau(1)},\sigma'_{\tau(1)}...,x'_{\tau(N)},\sigma'_{\tau(N)})\\
&\qquad =\eps(\tau)\;\Gamma_n(x_1,\sigma_1,...,x_N,\sigma_N\;;\;x'_1,\sigma'_1...,x'_N,\sigma'_N)
\end{align*}
for every permutation $\tau\in\gS^N$ with signature $\eps(\tau)\in\{\pm1\}$.
The exchange-correlation energy is defined in chemistry by subtracting a kinetic energy term, which we do not do here. Hence our energy $E_\hbar(\rho)$ contains all of the kinetic energy for the given $\rho$. 

There are several possibilities to define the quantum uniform electron gas, which should all lead to the same answer. We could for instance work in a domain with Neumann boundary conditions and impose that $\rho$ be exactly constant over this domain. Instead we prefer to impose Dirichlet boundary conditions. More precisely, we ask that $\rho\equiv1$ inside $\Omega$ and that $\rho\equiv0$ outside, where the inside and outside are defined by looking at the points which are at a distance $\ell\ll|\Omega|^{1/3}$ from the boundary $\partial\Omega$, such that the transition region has a negligible volume compared to $|\Omega|$. In the transition region, we only impose that $0\leq\rho\leq1$. Although we expect that the limit will be the same whatever $\rho$ does in this region, it is convenient to look at the worst case, namely, to minimize the energy over all possible such $\rho$.

\begin{theorem}[Quantum Uniform Electron Gas]\label{thm:UEG_thermo_limit_quantum}
Let $\rho_0>0$, $\hbar>0$, $s=1$ and $d=3$. Let $\{\Omega_N\}\subset\R^3$ be a sequence of bounded connected domains such that 
\begin{itemize}
 \item $|\Omega_N|\to\ii$;
 \item $\Omega_N$ has an $\eta$--regular boundary for all $N$, for some $\eta$ which is independent of $N$.
\end{itemize}
Let $\ell_N\to\ii$ be any sequence such that $\ell_N/|\Omega_N|^{1/3}\to0$ and define the inner and outer approximations of $\Omega_N$ by 
$$\Omega_N^-:=\left\{x\in\Omega_N\ :\ {\rm d}(x,\partial\Omega_N)\geq \ell_N\right\},$$
$$\Omega_N^+:=\Omega_N\cup \left\{x\in\R^3\ :\ {\rm d}(x,\partial\Omega_N)\leq \ell_N\right\}.$$
Then the following thermodynamic limit exists
\begin{equation}
 \lim_{N\to\ii}\left(\inf_{\substack{\sqrt{\rho}\in H^1(\R^3)\\ \rho_0\1_{\Omega_N^-}\leq \rho\leq \rho_0\1_{\Omega_N^+} }}\frac{E_\hbar(\rho)}{|\Omega_N|}\right)
=\rho_0^{4/3}\,e_{\rm UEG}\big(\hbar^2\rho_0^{1/3}\big) 
 \label{eq:thermodynamic_limit_quantum}
\end{equation}
where the function $e_{\rm UEG}(\lambda)$ is independent of the sequence $\{\Omega_N\}$ and of $\ell_N$. In addition $\lambda\mapsto e_{\rm UEG}(\lambda)$ is a concave increasing function of $\lambda\in\R^+$ which satisfies
\begin{equation}
 \lim_{\lambda\to0}e_{\rm UEG}(\lambda)=e_{\rm UEG}, 
 \label{eq:limit_quantum_classical_small_rho}
\end{equation}
the classical energy of the UEG defined in Theorem~\ref{thm:UEG_thermo_limit}, and
\begin{equation}
e_{\rm UEG}(\lambda)=c_{\rm TF}\,\lambda-c_{\rm D}+o(1)_{\lambda\to\ii}, 
\label{eq:limit_quantum_classical_large_rho}
\end{equation}
where
$$c_{\rm TF}=\frac3{5}\left(\frac{6\pi^2}{q}\right)^{\tfrac{2}{3}}\quad\text{and}\quad c_{\rm D}=\frac34\left(\frac{6}{q\pi}\right)^{\tfrac{1}{3}}$$
are the Thomas-Fermi and Dirac constants, with $q$ the number of spin states.
\end{theorem}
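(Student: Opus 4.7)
The plan is to split the theorem into four pieces and attack them in turn.

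First, a scaling argument reduces the whole statement to $\rho_0=1$ and isolates $\lambda$ as the only parameter. Under $x_j = \rho_0^{-1/3}\,y_j$, the kinetic operator scales by $\rho_0^{2/3}$ and the Coulomb potential by $\rho_0^{1/3}$; writing out how $E_\hbar(\rho_0\1_{\Omega_N})$ and $|\Omega_N|$ transform shows that the ratio equals $\rho_0^{4/3}$ times the corresponding ratio at unit density with $\hbar^2$ replaced by $\hbar^2\rho_0^{1/3}=\lambda$. So it suffices to work at $\rho_0=1$ and study the dependence on $\lambda$. Concavity and monotonicity in $\lambda$ come directly from the definition \eqref{eq:def_E_quantum}: $E_\hbar(\rho)$ is an infimum over a convex set of functions of $\hbar^2$ that are affine and nondecreasing (since the kinetic part $\hbar^2\sum_n\tr\bigl(\sum_j-\Delta_{x_j}\bigr)\Gamma_n$ is nonnegative). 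These properties persist under the further infimum over $\rho$ and under passage to the thermodynamic limit.

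The thermodynamic limit itself is the main difficulty, since subadditivity fails in the quantum setting. I would follow the method of \cite{HaiLewSol_1-09,HaiLewSol_2-09}, combining the Graf-Schenker inequality of Lemma~\ref{lem:Graf-Schenker} on the Coulomb energy with an IMS-type smooth localization on the kinetic energy. Concretely, for a quadratic partition of unity $\{\chi_g^2\}$ adapted to the tetrahedral tiling $\{g\ell\bDelta\}$ the kinetic operator satisfies $-\Delta=\sum_g\chi_g(-\Delta)\chi_g-\sum_g|\nabla\chi_g|^2$, giving a localization error $O(\hbar^2/\ell^2)$ per particle, while Graf-Schenker produces a Coulomb error $O(1/\ell)$ per particle. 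Together they decouple the problem into grand-canonical energies on the individual tetrahedra, whose side-length $\ell_N\to\infty$ lets $e_{\rm UEG}(\lambda)$ be recognized in the lower bound. For the upper bound, I would tile $\Omega_N^-$ by cubes of side $\ell_N$, build a trial grand-canonical state as the antisymmetrized tensor product of near-optimal states on each cube, and fill $\Omega_N^+\setminus\Omega_N^-$ by any admissible state of bounded energy per volume; by $\eta$-regularity this transition region has volume $o(|\Omega_N|)$, and the cross-tile Coulomb interactions are controlled by Newton-type arguments. Independence of $\{\Omega_N\}$ and $\ell_N$ follows by the standard cube-bracketing argument already used in the proof of Theorem~\ref{thm:UEG_thermo_limit}.

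For the semiclassical limit $\lambda\to0$, the inequality $e_{\rm UEG}(\lambda)\geq e_{\rm UEG}$ is automatic for every $\lambda\geq 0$: the diagonal $\Gamma_n(X;X)$ of any admissible quantum state is a classical symmetric probability with the same one-body density and the same Coulomb expectation (as $|x_j-x_k|^{-1}$ is a multiplication operator), and the kinetic energy is nonnegative, whence $E_\hbar(\rho)\geq E_{\rm GC}(\rho)$, and Theorem~\ref{thm:UEG_thermo_limit_GC} gives the bound after the thermodynamic limit. For the matching upper bound I would take an almost optimal classical grand-canonical state for $e_{\rm UEG}$ on a large box and smear each configuration by antisymmetrized Gaussian coherent states of fixed width $\sigma$: the resulting quantum density is the classical one convolved with $|g_\sigma|^2$, which after a small correction in the transition region meets the constraint. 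The kinetic cost per particle is $O(\hbar^2/\sigma^2)\to0$, while the smearing introduces a Coulomb error of order $1/\sigma$ per pair that can be made arbitrarily small by choosing $\sigma$ large first. The delicate technical point, handled by a standard coherent-state construction, is to antisymmetrize without spoiling the prescribed density profile.

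Finally, the high-density asymptotic $\lambda\to\infty$ is essentially a corollary of the sharp kinetic Lieb-Oxford bound of Graf-Solovej \cite{GraSol-94}, which, applied at unit density on $\Omega_N$, shows that every admissible quantum state has energy at least $(c_{\rm TF}\lambda-c_{\rm D})|\Omega_N|-o(|\Omega_N|)$ in the large-$\hbar$ regime. Dividing by $|\Omega_N|$ and passing to the thermodynamic limit yields $e_{\rm UEG}(\lambda)\geq c_{\rm TF}\lambda-c_{\rm D}+o(1)$. The matching upper bound is provided by the free Fermi sea at Fermi momentum $k_F=(6\pi^2/q)^{1/3}$ restricted to $\Omega_N$ (with a Hartree-Fock correction near the boundary): its kinetic energy per volume equals exactly $c_{\rm TF}\hbar^2$, and the classical Dirac exchange calculation gives indirect Coulomb energy $-c_{\rm D}+o(1)$ per volume, with boundary contributions of lower order by $\eta$-regularity. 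The main obstacles of the whole proof are concentrated in the quantum thermodynamic limit of the second paragraph and in the coherent-state construction for the $\lambda\to0$ upper bound.
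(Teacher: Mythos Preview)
Your proposal has two genuine gaps, both at places the paper itself flags as the real obstacles.

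\textbf{Thermodynamic limit, upper bound.} Your lower bound via Graf--Schenker plus IMS is the paper's route, but your upper bound is a subadditivity argument: tile $\Omega_N^-$ by cubes and tensor near-optimal states. Near-optimal quantum states on a cube $C_j$ must have $\sqrt{\rho_j}\in H^1$, so $\rho_j$ is never a sharp indicator. Either adjacent $\rho_j$ overlap (then antisymmetrization is nontrivial and the summed density may exceed $1$), or the supports are disjoint (then $\sum_j\rho_j<1$ on the internal faces, violating $\rho\geq\1_{\Omega_N^-}$). The paper says explicitly that approximate subadditivity ``would require to deal with overlapping quantum states, which is not easy'' and bypasses it. The point you are missing is that for simplices the smeared Graf--Schenker inequality gives \emph{both} directions at once: trivially $u_N\leq v(L_N)$ by choosing the specific trial density $\1_{L_N\bDelta}\ast\eta$, while Graf--Schenker applied to any admissible $\Gamma$ localizes to small simplices $g\ell\bDelta$ lying well inside, on which $\rho_\Gamma\equiv1$, so the localized density is \emph{exactly} $\1_{g\ell\bDelta}\ast\eta$ and the localized energy is $\geq v(\ell)|\ell\bDelta|$. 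This yields $\liminf u_N\geq\limsup_\ell v(\ell)$ and closes the loop with no tiling trial state at all. General domains are then handled by the big-simplex method of \cite{HaiLewSol_1-09}, where the infimum over admissible $\rho$ replaces the missing (A4); the cube-bracketing of Theorem~\ref{thm:UEG_thermo_limit} that you invoke relied on the classical subadditivity you no longer have.

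\textbf{Semiclassical upper bound.} Your sketch has the scales inverted and glosses over the hard step. Taking $\sigma$ large makes smeared particles overlap \emph{more}, so ``antisymmetrize without spoiling the density'' gets harder, not easier; and an optimal classical state could a priori have particles arbitrarily close, so no fixed $\sigma$ yields orthogonal one-body orbitals. The paper takes the smearing width $\delta$ \emph{small} and relies on three nontrivial ingredients: the result of \cite{ButChaPas-17} that the optimal classical minimizer on a cube $C_n$ has a strictly positive minimal interparticle distance $\delta'(n)$; a corridor of unit cubes between the $k^3$ copies of $C_n$ (filled by an averaged lattice) guaranteeing that at most \emph{two} smeared particles can ever overlap; and the explicit Harriman--Lieb phase to orthogonalize each such pair with kinetic cost bounded uniformly by $c/\delta^2$. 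The limits go $k\to\infty$, then $\hbar\to0$, then $\delta\to0$, then $n\to\infty$. This is not a standard coherent-state construction.
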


The main difficulty in the quantum case is that the subadditivity property~\eqref{eq:subadditivity} does not hold anymore. Although we expect a similar property with small error terms, proving it would require to deal with overlapping quantum states, which is not easy. Our proof of Theorem~\ref{thm:UEG_thermo_limit_quantum} will therefore bypass this difficulty and instead rely on the technique introduced in~\cite{HaiLewSol_1-09,HaiLewSol_2-09} to deal with ``rigid'' Coulomb quantum systems, based on the Graf-Schenker inequality.

\begin{remark}
When $\Omega_N=N^{1/3}\bDelta$ where $\bDelta$ is a tetrahedron, we will show that 
\begin{equation}
 \lim_{N\to\ii}\frac{E_\hbar(\rho_0\1_{\Omega_N}\ast \chi)}{|\Omega_N|}
=\rho_0^{4/3}\,e_{\rm UEG}\big(\hbar^2\rho_0^{1/3}\big) 
 \label{eq:thermodynamic_limit_quantum_simplices}
\end{equation}
for every fixed $\chi\geq0$ of compact support with $\int\chi=1$ and $\int_{\R^3}|\nabla\sqrt\chi|^2<\ii$. Since 
$$\rho_0\1_{\Omega_N^-}\leq \rho_0\1_{\Omega_N}\ast \chi\leq \rho_0\1_{\Omega_N^+}$$ 
as soon as $\ell_N$ is much larger than the size of the support of $\chi$, this is an upper bound to~\eqref{eq:thermodynamic_limit_quantum}. We expect that~\eqref{eq:thermodynamic_limit_quantum_simplices} holds for a general scaled set $\Omega_N$, but our proof does not provide this limit.
\end{remark}

\begin{proof}
By scaling we can assume $\rho_0=1$ throughout the proof. For simplicity of notation, we also assume that $q=1$. 

\medskip

\noindent\textbf{Step 1. Preliminary bounds.}
We start by showing that for any smooth-enough $\rho$ which is equal to one in a neighborhood of a regular domain $\Omega$, the energy $E_\hbar(\rho)$ is bounded above by a constant times $|\Omega|$. For this we have to construct a trial state having this density $\rho$ and a kinetic energy of the order of $\Omega$. This might be involved in the canonical case, but is easy in the grand-canonical case where we can resort to quasi-free states.

\begin{lemma}[A priori bounds]
Let $0\leq \rho\leq 1$ be an arbitrary function such that $\sqrt{\rho}\in H^1(\R^3)$. Then 
\begin{equation}
\hbar^2\int_{\R^3}|\nabla\sqrt{\rho}|^2+E_0(\rho)\leq E_\hbar(\rho)\leq \hbar^2\int_{\R^3}|\nabla\sqrt{\rho}|^2+\hbar^{2}c_{\rm TF}\int_{\R^3}\rho.
\label{eq:apriori_bounds}
\end{equation}
\end{lemma}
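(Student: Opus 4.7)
\noindent\emph{Proof plan.} The two inequalities are independent and will be treated separately.

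For the lower bound, given any admissible grand-canonical state $\{\lambda_n,\Gamma_n\}_n$ with density $\rho$ in \eqref{eq:def_E_quantum}, we form the total one-body reduced density matrix $\gamma_{\rm tot}:=\sum_{n\geq 1}\gamma_n^{(1)}$, which is trace class (with trace $\int\rho<\infty$) and has density $\rho_{\gamma_{\rm tot}}=\rho$. The total kinetic expectation appearing in \eqref{eq:def_E_quantum} is exactly $\hbar^2\tr(-\Delta\gamma_{\rm tot})$, and the Hoffmann--Ostenhof inequality applied to $\gamma_{\rm tot}$ yields
\begin{equation*}
\hbar^2\tr(-\Delta\gamma_{\rm tot})\geq \hbar^2\int_{\R^3}|\nabla\sqrt\rho|^2\,dx.
\end{equation*}
The remaining (interaction minus $D(\rho,\rho)$) contribution is bounded below by $E_0(\rho)$ by the very definition of $E_0$ as the $\hbar=0$ version of \eqref{eq:def_E_quantum}. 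Adding the two bounds and passing to the infimum over $\Gamma$ produces the claimed lower bound.

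For the upper bound, the plan is to exhibit a single trial fermionic quasi-free state on Fock space. Let $\gamma_0:=\chi_{\{|p|\leq p_F\}}(-i\nabla)$ be the translation-invariant Fermi sea of $L^2(\R^3;\C^q)$ with constant density $\rho_{\gamma_0}\equiv 1$; the Fermi momentum is then $p_F=(6\pi^2/q)^{1/3}$ and the kinetic energy density of $\gamma_0$ equals exactly $c_{\rm TF}$. Define
\begin{equation*}
\gamma:=M_{\sqrt\rho}\,\gamma_0\,M_{\sqrt\rho},
\end{equation*}
where $M_{\sqrt\rho}$ denotes multiplication by $\sqrt\rho$. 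The hypothesis $0\leq\rho\leq 1$ guarantees $0\leq \gamma\leq M_\rho\leq 1$, $\tr\gamma=\int\rho<\infty$, and $\rho_\gamma=\rho$, so that $\gamma$ is the one-body density matrix of a well-defined fermionic quasi-free state $\Gamma$. A short computation, using that $\gamma_0(x,y)=K(x-y)$ with $K$ real and even---so that $\nabla K(0)=0$ kills the cross terms mixing $\nabla\sqrt\rho$ with $\nabla\gamma_0$ in $\nabla_x\cdot\nabla_y\gamma(x,y)\big|_{y=x}$---then produces the clean identity
\begin{equation*}
\tr(-\Delta\gamma)=\int_{\R^3}|\nabla\sqrt\rho|^2\,dx+c_{\rm TF}\int_{\R^3}\rho\,dx.
\end{equation*}
The interaction expectation of the quasi-free state $\Gamma$ equals $D(\rho,\rho)-\tfrac12\iint|\gamma(x,y)|^2|x-y|^{-1}\,dx\,dy\leq D(\rho,\rho)$, so its indirect energy is non-positive and the stated upper bound follows.

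The main obstacle lies entirely in the upper bound, and is where the hypothesis $\rho\leq 1$ enters in an essential way: the Pauli constraint $\gamma\leq 1$---needed for $\gamma$ to be the one-body density matrix of a genuine fermionic quasi-free state---holds precisely because $\gamma\leq M_\rho\leq 1$. If $\rho$ were allowed to exceed $1$ locally, as in the high-density regime relevant to \eqref{eq:limit_quantum_classical_large_rho}, $\gamma_0$ would have to be replaced by a locally rescaled Fermi sea, and the clean linear estimate $c_{\rm TF}\int\rho$ would degrade to the Thomas--Fermi estimate $c_{\rm TF}\int\rho^{5/3}$ (plus gradient corrections). The lower bound, by contrast, carries no real difficulty: $E_0(\rho)$ being defined as an infimum over the very same class of states, its appearance in the bound is tautological once the kinetic term has been separated off by Hoffmann--Ostenhof.
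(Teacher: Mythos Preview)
Your proposal is correct and matches the paper's proof essentially line for line: the lower bound via Hoffmann--Ostenhof plus the tautological bound by the classical $E_0(\rho)$, and the upper bound via the quasi-free trial state with one-body density matrix $\gamma=\sqrt{\rho}\,\1(-\Delta\leq p_F^2)\,\sqrt{\rho}$, using $0\leq\rho\leq1$ to ensure $0\leq\gamma\leq1$ and computing $\tr(-\Delta\gamma)=\int|\nabla\sqrt{\rho}|^2+c_{\rm TF}\int\rho$ exactly as you do. Your additional commentary on why the hypothesis $\rho\leq1$ is essential for the Pauli constraint is accurate and helpful.
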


\begin{proof}
For $\sqrt{\rho}\in H^1(\R^3)$ with $0\leq \rho\leq1$, we can use as trial state the unique quasi-free state $(\Gamma_n)_{n\geq0}$ on Fock space that has the one-particle density matrix
$$\gamma=\sqrt{\rho}\; \1(-\Delta\leq 5c_{\rm TF}/3)\,\sqrt{\rho},$$
see~\cite{BacLieSol-94}. Here $\sqrt{\rho}$ is understood in the sense of multiplication operators. Due to the assumption that $0\leq\rho\leq1$, we have $0\leq\gamma\leq1$ in the sense of operators, as is required for fermions. 
In terms of kernels the previous definition can be written as
$$\gamma(x,y)=(2\pi)^{-3/2}\sqrt{\rho(x)}\, f(x-y)\sqrt{\rho(y)},$$
where $\widehat{f}$ is the characteristic function of the ball of radius $\sqrt{5c_{\rm TF}/3}$, with $c_{\rm TF}$ chosen such that $(2\pi)^{-3/2}f(0)=1$, hence $\rho_\gamma(x)=\gamma(x,x)=\rho(x)$.
The indirect energy of this state is
\begin{multline}
\hbar^2\tr(-\Delta)\gamma-\frac12\int_{\R^3}\int_{\R^3}\frac{|\gamma(x,y)|^2}{|x-y|}dx\,dy\\
=\hbar^2\int_{\R^3}|\nabla\sqrt{\rho}|^2+\hbar^{2}c_{\rm TF}\int_{\R^3}\rho-\frac1{2(2\pi)^3}\int_{\R^3}\int_{\R^3}\frac{\rho(x)\rho(y)|f(x-y)|^2}{|x-y|}dx\,dy.
\label{eq:energy_quasi_free}
\end{multline}
Therefore, we have
$$E_\hbar(\rho)\leq \hbar^2\int_{\R^3}|\nabla\sqrt{\rho}|^2+\hbar^{2}c_{\rm TF}\int_{\R^3}\rho.$$
For the lower bound we use the Hoffman-Ostenhof inequality~\cite{Hof-77} for the kinetic energy and the fact that the Coulomb indirect energy can be bounded from below by the classical energy $E_0(\rho)$. This gives 
\begin{equation}
E_\hbar(\rho)\geq \hbar^2\int_{\R^3}|\nabla\sqrt{\rho}|^2+E_0(\rho)
\label{eq:bound_below_classical}
\end{equation}
as claimed. 
\end{proof}

\begin{remark}\label{rmk:upper_bound_large_rho}
Let us take $\rho_N=\1_{\Omega_N}\ast\chi$ for a fixed function $\chi\in C^\ii_c(\R^3,\R^+)$ with $\int_{\R^3}\chi=1$ and $\int_{\R^3}|\nabla\sqrt\chi|^2<\ii$. Then $\rho_N$ is equal to $1$ on the inner approximation $\Omega_N^-$ and 0 outside of $\Omega_N^+$, which are defined as in the statement of Theorem~\ref{thm:UEG_thermo_limit_quantum}. In this case we even have
$$\frac1{2(2\pi)^3}\int_{\R^3}\int_{\R^3}\frac{\rho_N(x)\rho_N(y)|f(x-y)|^2}{|x-y|}dx\,dy=|\Omega_N|c_{\rm D}+o(|\Omega_N|).$$
In addition, $\nabla\rho_N=\1_{\Omega_N}\ast \nabla\chi$ is bounded by
$$|\nabla\rho_N|=|\1_{\Omega_N}\ast \nabla\chi|\leq 2\sqrt{\rho_N}\sqrt{\1_{\Omega_N}\ast |\nabla\sqrt{\chi}|^2}\leq 2\sqrt{\rho_N}\left(\int_{\R^3}|\nabla\sqrt\chi|^2\right)^{1/2}$$
and has its support in $\Omega_N^+\setminus\Omega_N^-$, a set which has a volume negligible compared with $|\Omega_N|$ due to the regularity of the set $\Omega_N$. So we get
$$\int_{\R^3}|\nabla\sqrt{\rho_N}|^2=\int_{\R^3}\frac{|\nabla\rho_N|^2}{4\rho_N}=o(|\Omega_N|)$$
and therefore find that
\begin{equation}
E_\hbar(\rho_N)\leq \left(\hbar^{2}c_{\rm TF}-c_{\rm D}\right)|\Omega_N|+o(|\Omega_N|).
\label{eq:bound_above_TFD}
\end{equation}
After passing to the limit $N\to\ii$, we get 
\begin{equation}
\limsup_{N\to\ii}\frac{E_\hbar(\1_{\Omega_N}\ast\chi)}{|\Omega_N|}\leq \hbar^{2}c_{\rm TF}-c_{\rm D}
\end{equation}
which is the upper bound in~\eqref{eq:limit_quantum_classical_large_rho}.
\end{remark}

\medskip

\noindent\textbf{Step 2. Limit for simplices.} We now use the smeared version of the Graf-Schenker inequality in order to prove the convergence of the energy per unit volume, in the special case of simplices. 

\begin{lemma}[Smeared Graf-Schenker inequality for the exchange-correlation energy]\label{lem:smeared_Graf_Schenker}
Let $\bDelta$ be a tetrahedron and $\eta\in C^\ii_c(\R^3,\R^+)$ be a fixed function such that $\int_{\R^3}\eta=1$ and $\int_{\R^3}|\nabla\sqrt\eta|^2<\ii$. Then there exists a constant $\kappa$ such that for every $\ell\geq2\kappa$ and every $N$-particle symmetric probability $\bP$, we have
\begin{multline}
C(\bP)-D(\rho_{\bP},\rho_{\bP})\\
\geq \frac{1-\kappa/\ell}{|\ell\bDelta|}\int_{\R^3\times SO(3)}\left\{C\big(\bP_{|\chi_{g\ell\bDelta}}\big)-D(\rho_{\bP}\chi_{g\ell\bDelta}^2,\rho_\bP\chi_{g\ell\bDelta}^2)\right\}\,dg\\-\kappa\frac{N}{\ell}(1+\|\rho_{\bP}\|_{L^\ii(\R^3)})
\label{eq:lower_bd_Graf_Schenker_smooth}
\end{multline}
where $\bP_{|\chi_{g\ell\bDelta}}$ is the (grand-canonical) restriction of $\bP$ associated with the localization function $\chi_{g\ell\bDelta}:=\sqrt{\1_{g\ell\bDelta}\ast\eta}$.
\end{lemma}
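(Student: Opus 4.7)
The plan is to follow the structure of the sharp Graf-Schenker proof (Lemma~\ref{lem:Graf-Schenker}) with smooth cutoffs, as in~\cite{HaiLewSol_1-09,HaiLewSol_2-09}. The central new object is the smeared kernel
\[
\tilde h_\ell(x-y):=\frac{1}{|\ell\bDelta|}\int_{\R^3\times SO(3)}\chi^2_{g\ell\bDelta}(x)\,\chi^2_{g\ell\bDelta}(y)\,dg,
\]
which, by a change of variables parallel to the one in the proof of Lemma~\ref{lem:Graf-Schenker}, factorizes as $\tilde h_\ell=h_\ell\ast(\eta\ast\eta^-)$ with $\eta^-(u):=\eta(-u)$. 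Consequently $\tilde h_\ell$ is radial, $C^\infty$, of positive type with $\widehat{\tilde h_\ell}=|\widehat\eta|^2\widehat{h_\ell}$, and satisfies $\tilde h_\ell(0)=\|\1_{\ell\bDelta}\ast\eta\|_{L^2}^2/|\ell\bDelta|=1-O(1/\ell)$, the deficit arising from the boundary layer of $\ell\bDelta$ of width $\mathrm{diam}(\mathrm{supp}\,\eta)$, which has relative volume $O(1/\ell)$ by the regularity of the simplex.

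Writing the pointwise identity $|x_j-x_k|^{-1}=\tilde h_\ell(x_j-x_k)|x_j-x_k|^{-1}+\tilde w_\ell(x_j-x_k)$ with $\tilde w_\ell(x):=(1-\tilde h_\ell(x))/|x|$, summing over pairs, and integrating against $\bP$ reproduces on one side the smearedly-localized energy $|\ell\bDelta|^{-1}\int [C(\bP_{|\chi_{g\ell\bDelta}})-D(\rho_\bP\chi^2_{g\ell\bDelta},\rho_\bP\chi^2_{g\ell\bDelta})]dg$, and on the other leaves a remainder $\int\sum_{j<k}\tilde w_\ell(x_j-x_k)\,d\bP-D_{\tilde w_\ell}(\rho_\bP,\rho_\bP)$ to be bounded via positive-definiteness of $\tilde w_\ell$. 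Since $|\widehat\eta|\leq 1$ pointwise one has $\widehat{\tilde h_\ell/|\cdot|}\leq\widehat{h_\ell/|\cdot|}$ (convolution with the nonnegative $\widehat{1/|\cdot|}$ preserving the inequality), whence $\widehat{\tilde w_\ell}\geq\widehat{w_\ell}\geq 0$ by the Graf-Schenker theorem, so $\tilde w_\ell$ is of positive type in the distributional sense.

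The new difficulty, absent from Lemma~\ref{lem:Graf-Schenker}, is that $\tilde w_\ell(0)=+\infty$: the deficit $c_\ell:=1-\tilde h_\ell(0)=O(1/\ell)$ produces a genuine Coulomb singularity $c_\ell/|x|$ at the origin, so the direct positive-type estimate $-\tfrac{N}{2}\tilde w_\ell(0)$ diverges. I would circumvent this by splitting
\[
\tilde w_\ell(x)=\frac{c_\ell}{|x|}+\tilde w'_\ell(x),\qquad \tilde w'_\ell(x):=\frac{\tilde h_\ell(0)-\tilde h_\ell(x)}{|x|},
\]
where $\tilde w'_\ell(0)=0$ because $\tilde h_\ell$ is $C^\infty$ and radial so $\nabla\tilde h_\ell(0)=0$ and $\tilde h_\ell(0)-\tilde h_\ell(x)=O(|x|^2)$. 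The singular piece contributes $c_\ell[C(\bP)-D(\rho_\bP,\rho_\bP)]$ to the remainder which, moved to the left-hand side, produces precisely the prefactor $1-\kappa/\ell$ of the statement. The bounded piece $\tilde w'_\ell$ is then controlled by a positive-definiteness argument analogous to \eqref{eq:Graf_Schenker_use}, combined with a crude Lieb-Oxford-type upper bound of the form $|C(\bP_{|\chi})-D(\rho_\bP\chi^2,\rho_\bP\chi^2)|\leq C(1+\|\rho_\bP\|_\infty)\int\rho_\bP\chi^2$ on the smearedly-localized energy, which is needed to pass from the naturally-arising prefactor $(1-c_\ell)^{-1}$ to the announced $(1-\kappa/\ell)$; this conversion is the source of the $\|\rho_\bP\|_\infty$ dependence in the final error $\kappa N\ell^{-1}(1+\|\rho_\bP\|_\infty)$. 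The main technical obstacle is to verify that $\tilde w'_\ell$ is (approximately) of positive type in the appropriate sense, which requires an additional Fourier-space argument not directly reducible to the sharp Graf-Schenker theorem, and to ensure that all the $O(1/\ell)$ estimates hold uniformly in $\ell\geq 2\kappa$.
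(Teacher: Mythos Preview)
Your decomposition $\tilde w_\ell=c_\ell/|x|+\tilde w'_\ell$ with $\tilde w'_\ell(x)=(\tilde h_\ell(0)-\tilde h_\ell(x))/|x|$ is natural, and your observation that $\tilde w_\ell$ itself is of positive type (because $|\hat\eta|^2\leq 1$ reduces it to the sharp Graf--Schenker inequality) is correct. The difficulty you flag, however, is not a mere technicality: $\tilde w'_\ell$ is generally \emph{not} of positive type. In Fourier space the question is whether $\nu\ast|k|^{-2}\leq |k|^{-2}$ for the probability measure $\nu=\widehat{\tilde h_\ell}/\tilde h_\ell(0)$, and this fails already for simple radial probability measures (for instance the uniform measure on a sphere, where the convolution diverges on that sphere). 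So the ``additional Fourier-space argument'' you allude to cannot be a small perturbation of the sharp case; it is essentially the full content of \cite[Lemma~6]{GraSch-95}, which the paper invokes directly rather than reproving.

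The paper's route is quite different from yours. It quotes \cite[Lemma~6]{GraSch-95}, which states that
\[
\frac{1}{|x|}-\Big(1-\frac{\kappa}{\ell}\Big)\frac{\tilde h_\ell(x)}{|x|}-\frac{\kappa}{\ell}\,\frac{1}{|x|(1+|x|)}
\]
has nonnegative Fourier transform. Running the Onsager-type argument with this potential produces the prefactor $(1-\kappa/\ell)$ in front of the localized energy \emph{directly}, together with an extra remainder $\tfrac{\kappa}{\ell}\big[\langle\sum_{j<k}W(x_j-x_k)\rangle_\bP-D_W(\rho_\bP,\rho_\bP)\big]$ with $W(x)=|x|^{-1}(1+|x|)^{-1}$. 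This remainder is then bounded below by replacing $W$ with the Yukawa potential $Y(x)=e^{-\sqrt2|x|}/|x|$: one checks that $W-Y\geq 0$ has nonnegative Fourier transform (via the integral representation $W(x)=\int_0^\infty e^{-t}\,e^{-t|x|}/|x|\,dt$ and Jensen's inequality), which reduces matters to the short-range $Y$, and then $D_Y(\rho_\bP,\rho_\bP)\leq\tfrac12\|\rho_\bP\|_{L^\infty}N\int Y$. This is where the $\|\rho_\bP\|_{L^\infty}$ actually enters --- not through a Lieb--Oxford upper bound on the localized pieces as you propose.
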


\begin{proof}
Lemma 6 in~\cite{GraSch-95} tells us that for any radial function $\eta\in C^\ii_c(\R^3,\R^+)$ with $\int_{\R^3}\eta=1$, there exists a constant $\kappa$ such that the potential 
$$\tilde w_\ell(x)=\frac1{|x|}-\left(1-\frac{\kappa}\ell\right)\frac{\tilde h_\ell(x)}{|x|}-\frac{\kappa}{\ell}\frac{1}{|x|\left(1+|x|\right)}$$
has positive Fourier transform for all $\ell>2\kappa$, where 
\begin{align*}
\tilde h_\ell(x-y)&=\frac1{|\ell\bDelta|}\int_{SO(3)}\big(\1_{\ell\bDelta}\ast\eta\big)\ast\big(\1_{-\ell\bDelta}\ast\eta\big)(Rx-Ry)\,dR\\
&=\frac1{|\ell\bDelta|}\int_{SO(3)}\int_{\R^3}\big(\1_{R^{-1}\ell \bDelta+z}\ast\eta\big)(y)\big(\1_{R^{-1}\ell \bDelta+z}\ast\eta\big)(x)\,dz\,dR
\end{align*}
and with $\bDelta$ a tetrahedron. Similarly as in Lemma~\ref{lem:Graf-Schenker}, we find that
\begin{align}
&C(\bP)-D(\rho_{\bP},\rho_{\bP})\nn\\
&\qquad\geq \frac{1-\kappa/\ell}{|\ell\bDelta|}\int_{\R^3\times SO(3)}\left\{C\big(\bP_{|\chi_{g\ell\bDelta}}\big)-D(\rho_{\bP}\chi_{g\ell\bDelta}^2,\rho_\bP\chi_{g\ell\bDelta}^2)\right\}\,dg-c\frac{N}{\ell}\nn\\
&\qquad\qquad +\frac{\kappa}\ell \left(\pscal{\sum_{1\leq j<k\leq N}W(x_j-x_k)}_\bP-D_W(\rho_\bP,\rho_\bP)\right)
\label{eq:lower_bd_Graf_Schenker_quantum}
\end{align}
where $\chi_{g\ell\bDelta}=\sqrt{\1_{g\ell\bDelta}\ast\eta}$ and $W(x)=|x|^{-1}(1+|x|)^{-1}$. In order to estimate the second term from below, we could use a part of the kinetic energy as in~\cite{ConLieYau-88,GraSch-95}. Here the situation is easier since we can use the additional information that $\rho$ is bounded. Our strategy is to replace $W$ by the short range potential $Y(x)=e^{-\sqrt{2}|x|}/|x|$ in a lower bound and then use that 
$$D_Y(\rho_\bP,\rho_\bP)\leq \frac12\|\rho\|_{L^\ii(\R^3)}N\int_{\R^3}Y.$$
For the lower bound we remark that $W-Y$ is positive and has positive Fourier transform. Indeed, writing 
$$\frac{1}{|x|(1+|x|)}=\int_0^\ii \frac{e^{-t(1+|x|)}}{|x|}\,dt=\int_0^\ii \frac{e^{-t|x|}}{|x|}e^{-t}\,dt,$$
we see that 
$$\widehat{W}(k)-\widehat{Y}(k)=\sqrt{\frac2\pi}\left(\int_0^\ii \frac{e^{-t}}{|k|^2+t^2}\,dt-\frac1{|k|^2+2}\right),$$
which is positive by Jensen's inequality. 
So, using that $\lim_{r\to0}(W(r)-Y(r))=\sqrt{2}$ and arguing again as in the proof of Lemma~\ref{lem:Graf-Schenker} we conclude that 
\begin{align*}
&\pscal{\sum_{1\leq j<k\leq N}W(x_j-x_k)}_\bP-D_W(\rho_\bP,\rho_\bP)\\
&\qquad\qquad \geq \pscal{\sum_{1\leq j<k\leq N}Y(x_j-x_k)}_\bP-D_Y(\rho_\bP,\rho_\bP)-\frac{N}{\sqrt2}\\
&\qquad\qquad \geq -\left(\frac{1}{\sqrt2}+\frac12\|\rho\|_{L^\ii(\R^3)}\int_{\R^3}Y\right)N.
\end{align*}
\end{proof}

Now we are able to prove the existence of the thermodynamic limit for simplices. 

\begin{lemma}[Thermodynamic limit for simplices]\label{lem:limit_simplices}
Let $\bDelta\subset\R^3$ be any simplex containing $0$ and $\eta\in C^\ii_c(\R^3,\R^+)$ be a radial function with $\int_{\R^3}\eta=1$ and $\int_{\R^3}|\nabla\sqrt\eta|^2<\ii$. Then the limits
\begin{multline}
 \lim_{\substack{L_N\to\ii\\ \frac{\ell_N}{L_N}\to0}}\left(\inf_{\substack{\sqrt{\rho}\in H^1(\R^3)\\ \1_{(L_N-\ell_N)\bDelta}\leq \rho\leq \1_{(L_N+\ell_N)\bDelta} }}\frac{E_\hbar(\rho)}{(L_N)^3|\bDelta|}\right)\\
 =\lim_{L_N\to\ii}\frac{E_\hbar(\1_{L_N\bDelta}\ast\eta)}{(L_N)^3|\bDelta|}
=\,e_{\rm UEG}\big(\hbar^2\big) 
 \label{eq:thermodynamic_limit_quantum_tetrahedron}
\end{multline}
exist and are independent of the simplex $\bDelta$, of $\eta$ and of the sequences $L_N,\ell_N\to\ii$. 
\end{lemma}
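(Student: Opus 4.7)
The strategy, following the Hainzl--Lewin--Solovej method for rigid Coulomb quantum systems invoked just before the Lemma, is to replace the missing quantum subadditivity by the smeared Graf--Schenker inequality of Lemma~\ref{lem:smeared_Graf_Schenker} combined with an IMS-type localization of the kinetic energy. Introduce
\[
  e^-(L) := \inf_{\substack{\sqrt\rho\in H^1(\R^3)\\ \1_{(L-\ell)\bDelta}\le\rho\le\1_{(L+\ell)\bDelta}}}\frac{E_\hbar(\rho)}{L^3|\bDelta|}, \qquad e^+(L):=\frac{E_\hbar(\1_{L\bDelta}\ast\eta)}{L^3|\bDelta|},
\]
where $\ell=\ell(L)$ is any sequence with $1\ll\ell\ll L$. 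Both quantities are bounded above by $\hbar^2 c_{\rm TF}-c_{\rm D}+o(1)$ via Remark~\ref{rmk:upper_bound_large_rho}, and bounded below by a constant through~\eqref{eq:bound_below_classical} and the classical Lieb--Oxford inequality. Trivially $e^-(L)\le e^+(L)$.

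The key lower bound goes as follows. Pick a near-optimal trial state $\bP$ for $e^-(L)$ and an intermediate scale $\ell_1$ with $1\ll\ell_1\ll L$. Apply Lemma~\ref{lem:smeared_Graf_Schenker} to the interaction part and the IMS identity
\[
  -\Delta \;\ge\; \frac{1}{|\ell_1\bDelta|}\int_{\R^3\times SO(3)}\chi_{g\ell_1\bDelta}(-\Delta)\chi_{g\ell_1\bDelta}\,dg \;-\; \frac{1}{|\ell_1\bDelta|}\int_{\R^3\times SO(3)}|\nabla\chi_{g\ell_1\bDelta}|^2\,dg
\]
to the kinetic part; the identity holds since $|\ell_1\bDelta|^{-1}\int\chi_{g\ell_1\bDelta}^2\,dg\equiv1$ for $\chi_{g\ell_1\bDelta}=\sqrt{\1_{g\ell_1\bDelta}\ast\eta}$, and the gradient correction integrated per unit volume is $O(\ell_1^{-2})$. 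Since the density is pinched by $1$, the term $\kappa N(1+\|\rho\|_\infty)/\ell_1$ in~\eqref{eq:lower_bd_Graf_Schenker_smooth} is bounded by $CL^3/\ell_1$. Altogether
\[
  E_\hbar(\rho) \;\ge\; \frac{1-\kappa/\ell_1}{|\ell_1\bDelta|}\int_{\R^3\times SO(3)} E_\hbar\bigl(\rho\,\chi_{g\ell_1\bDelta}^2\bigr)\,dg \;-\; \frac{C\,L^3}{\ell_1},
\]
where on the right $E_\hbar$ is evaluated on the grand-canonical localization of $\bP$ (this uses the grand-canonical extension sketched in Section~\ref{sec:grand-canonical}, which is why we restricted to grand-canonical states from the outset).

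For $g$ such that $g\ell_1\bDelta+\mathrm{supp}\,\eta$ sits inside $(L-\ell)\bDelta$, the localized density $\rho\,\chi_{g\ell_1\bDelta}^2=\1_{g\ell_1\bDelta}\ast\eta$ is, after translation/rotation, admissible for the scale-$\ell_1$ problem, provided the transition-layer thickness in $e^-(\ell_1)$ is chosen to exceed the diameter of $\mathrm{supp}\,\eta$. Hence the contribution is $\ge \ell_1^3|\bDelta|\,e^-(\ell_1)$. The complementary set of $g$'s has measure $o(L^3|\bDelta|)$ by the $\eta$-regularity of $\bDelta$, and its contribution is controlled by the uniform a priori bound on $E_\hbar$ per unit volume. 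This gives
\[
  e^-(L) \;\ge\; (1-\kappa/\ell_1)\,e^-(\ell_1) \;-\; C/\ell_1 \;+\; o_{L\to\infty}(1).
\]
Letting $L\to\infty$ and then $\ell_1\to\infty$ yields $\liminf_L e^-(L)\ge\limsup_{\ell_1}e^-(\ell_1)$, so the limit exists. Applying the very same argument to the admissible density $\1_{L\bDelta}\ast\eta$ gives $e^+(L)\le e^-(\ell_1)+o(1)$, so $\lim e^-(L)=\lim e^+(L)$.

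Independence of the limit from $\bDelta$ and $\eta$ follows by the same comparison: given pairs $(\bDelta,\eta)$ and $(\bDelta',\eta')$, run Graf--Schenker based on $\bDelta'$-tiles to lower-bound the $\bDelta$-energies, and conversely; after the thermodynamic limit the two values agree, and we call the common value $e_{\rm UEG}(\hbar^2)$. The main obstacle will be the bookkeeping in the inductive step: making sure that the IMS kinetic-localization error, the Graf--Schenker remainder, and the boundary terms all combine while the localized density remains genuinely admissible on the smaller scale. The boundedness $\rho\le 1$ built into the UEG admissibility is crucial here, both to absorb the $\|\rho\|_\infty$ factor in~\eqref{eq:lower_bd_Graf_Schenker_smooth} and to handle the kinetic IMS correction uniformly.
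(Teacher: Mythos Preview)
Your approach is essentially the paper's: combine the smeared Graf--Schenker inequality with the IMS decomposition of the kinetic energy to localize the quantum indirect energy into simplices at an intermediate scale $\ell_1$, then compare scales. There is, however, a bookkeeping slip in how you close the argument.

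When $g\ell_1\bDelta+\mathrm{supp}\,\eta$ lies inside $(L-\ell)\bDelta$, the localized density is \emph{exactly} $\1_{g\ell_1\bDelta}\ast\eta$, so the localized indirect energy satisfies $\cE_\hbar(\Gamma_{|\chi_{g\ell_1\bDelta}})\geq E_\hbar(\1_{g\ell_1\bDelta}\ast\eta)=e^+(\ell_1)\,\ell_1^3|\bDelta|$, not merely $\geq e^-(\ell_1)\,\ell_1^3|\bDelta|$. Using $e^+(\ell_1)$ there (this is precisely what the paper does with its $v(\ell)$) gives
\[
 e^-(L)\;\geq\;(1-\kappa/\ell_1)\bigl(1-c\,(\ell_1+\ell)/L\bigr)\,e^+(\ell_1)\;-\;c\,(\ell_1+\ell)/L,
\]
hence $\liminf_L e^-(L)\geq \limsup_{\ell_1}e^+(\ell_1)$. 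Together with the trivial $e^-(L)\leq e^+(L)$ this immediately forces all four $\liminf/\limsup$ to coincide, and the independence from $\bDelta,\eta$ follows by the cross-comparison you describe.

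By contrast, your weaker inequality $e^-(L)\geq(1-\kappa/\ell_1)\,e^-(\ell_1)-C/\ell_1+o(1)$ only yields existence of $\lim e^-$, and your claimed step ``applying the very same argument to $\1_{L\bDelta}\ast\eta$ gives $e^+(L)\le e^-(\ell_1)+o(1)$'' does not follow: the Graf--Schenker argument is a \emph{lower} bound and cannot produce an upper bound on $e^+(L)$. As written, the link between $\lim e^-$ and $\lim e^+$ is therefore a genuine gap. The fix is the one-line sharpening above---replace $e^-(\ell_1)$ by $e^+(\ell_1)$ in the local lower bound---after which your argument coincides with the paper's.
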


\begin{proof}
We use the same notation as in Lemma~\ref{lem:smeared_Graf_Schenker} and its proof. 
From the IMS formula, we have on $\R^{3N}$
$$-\sum_{j=1}^N\Delta_{x_j}= \frac{1}{|g\ell\bDelta|}\int_{\R^3\times SO(3)}\chi_{g\ell\bDelta}\left(-\sum_{j=1}^N\Delta_{x_j}\right)\chi_{g\ell\bDelta}-\underbrace{N\frac{\int_{\R^3}|\nabla\chi_{\ell\bDelta}|^2}{|\ell\bDelta|}}_{=O(N/\ell^2)}$$
(see~\cite[Lemma~7]{GraSch-95} and~\cite[Eq.~(30)]{HaiLewSol_2-09}). We then need the notion of quantum localized states $\Gamma_{|\chi_{g\ell\bDelta}}$ which is similar to~\eqref{eq:def_localized} (with a partial trace instead of an integral) and is recalled for instance in~\cite{HaiLewSol_1-09,Lewin-11}.
Using~\eqref{eq:lower_bd_Graf_Schenker_smooth} we find for any grand-canonical state $\Gamma$ with $0\leq \rho_\Gamma\leq1$
\begin{equation}
 \cE_\hbar(\Gamma)\geq \frac{1-\kappa/\ell}{|\ell\bDelta|}\int_{\R^3\times SO(3)}\cE_\hbar(\Gamma_{|\chi_{g\ell\bDelta}})-\frac{c}{\ell}\int_{\R^3}\rho_\Gamma. 
 \label{eq:lower_bound_simplices_quantum}
\end{equation}
Here we have introduced for shortness the quantum indirect energy
\begin{multline*}
\cE_\hbar(\Gamma)=\sum_{n=1}^\ii \tr_{L^2_a((\R^3\times\{1,...,q\})^n)}\Bigg(-\hbar^2\sum_{j=1}^n\Delta_{x_j}+\sum_{1\leq j<k\leq n}\frac{1}{|x_j-x_k|}\Bigg)\Gamma_n\\-\frac12\int_{\R^3}\int_{\R^3}\frac{\rho_\Gamma(x)\rho_\Gamma(y)}{|x-y|}dx\,dy
\end{multline*}
of any grand-canonical state $\Gamma$.

Using~\eqref{eq:lower_bound_simplices_quantum}, it is not difficult to see that the two limits in~\eqref{eq:thermodynamic_limit_quantum_tetrahedron} exist and coincide. Indeed, let us introduce 
$$u_N=\inf_{\substack{\sqrt{\rho}\in H^1(\R^3)\\ \1_{(L_N-\ell_N)\bDelta}\leq \rho\leq \1_{(L_N+\ell_N)\bDelta} }}\frac{E_\hbar(\rho)}{(L_N)^3|\bDelta|}\quad\text{and}\quad v(\ell)=\frac{E_\hbar(\1_{\ell\bDelta}\ast\eta)}{\ell^3|\bDelta|}.$$
Since $\eta$ is fixed we have $u_N\leq v(L_N)$ for $N$ large enough. Let now $\Gamma$ be any state satisfying the constraints in the definition of $u_N$, that is, 
$$\1_{(L_N-\ell_N)\bDelta}\leq \rho_{\Gamma}\leq \1_{(L_N+\ell_N)\bDelta}.$$
Let then $\ell\ll L_N$. The set of all the translations and rotations $g\in \R^3\times SO(3)$ such that $g\ell\bDelta+{\rm supp}(\chi)\subset \1_{(L_N-\ell_N)\bDelta}$ has a measure of the order of $|L_N\bDelta|$. More precisely, the set of all the $g$ such that $\rho_\Gamma$ is not one or 0 on the support of $\chi_{g\ell\bDelta}$ has a measure bounded by a constant times $(\ell+\ell_N)(L_N)^2$. For a $g$ such that $\rho_\Gamma\equiv1$ on the support of $\chi_{g\ell\bDelta}$ we can use the rotation and translation invariance of $E_\hbar$ to infer
$$\cE_\hbar(\Gamma_{|\chi_{g\ell\bDelta}})\geq E_\hbar(\1_{g\ell\bDelta}\ast\eta)=E_\hbar(\1_{\ell\bDelta}\ast\eta)= v(\ell)|\ell\bDelta|,$$
since the density of the localized state is by definition $\rho_{\Gamma_{|\chi_{g\ell\bDelta}}}=\rho_\Gamma\chi_{g\ell\bDelta}^2=\1_{g\ell\bDelta}\ast\eta$.
For all the other $g$ for which $\rho_\Gamma\neq0$, we can simply use~\eqref{eq:apriori_bounds} and the Lieb-Oxford inequality, which tells us that 
$$\cE_\hbar(\Gamma_{|\chi_{g\ell\bDelta}})\geq-c|\ell\bDelta|.$$
In total we get the lower bound
$$\frac{\cE_\hbar(\Gamma)}{(L_N)^3|\bDelta|}\geq \left(1-\frac{\kappa}{\ell}\right)\left(1-c\frac{\ell+\ell_N}{L_N}\right)v(\ell)-c\frac{\ell+\ell_N}{L_N}.$$
Minimizing over all $\Gamma$, we get 
\begin{equation}
u_N\geq \left(1-\frac{\kappa}{\ell}\right)\left(1-c\frac{\ell+\ell_N}{L_N}\right)v(\ell)-c\frac{\ell+\ell_N}{L_N}.
\label{eq:relation_u_v}
\end{equation}
By~\eqref{eq:apriori_bounds} we know that $u_N$ and $v(\ell)$ are uniformly bounded. The inequality~\eqref{eq:relation_u_v} tells us that
$$\liminf_{N\to\ii}u_N\geq \limsup_{\ell\to\ii}v(\ell)$$
and since $u_N\leq v(L_N)$ the two sequences have the same limit $e_{\rm UEG}(\hbar^2)$.
\end{proof}

\medskip

\noindent\textbf{Step 3. Limit for an arbitrary sequence of domains.} Next we prove that for any domain $\Omega_N$ satisfying the assumptions of the statement, the limit exists and is the same as for simplices. 

The lower bound is proved in exactly the same way as for simplices. Using~\eqref{eq:lower_bound_simplices_quantum} and the assumptions on the regularity of $\Omega_N$, we find a lower bound similar to~\eqref{eq:relation_u_v},
\begin{equation}
\inf_{\substack{\sqrt{\rho}\in H^1(\R^3)\\ \1_{\Omega_N^-}\leq \rho\leq \1_{\Omega_N^+} }}\frac{E_\hbar(\rho)}{|\Omega_N|}\geq \left(1-\frac{\kappa}{\ell}\right)\left(1-c\frac{\ell+\ell_N}{L_N}\right)v(\ell)-c\frac{\ell+\ell_N}{L_N},
\label{eq:lower_bound_Omega_N}
\end{equation}
where we recall that 
$$v(\ell)=\frac{E_\hbar(\1_{\ell\bDelta}\ast\eta)}{\ell^3|\bDelta|}$$
is the quantum energy of a simplex, smeared-out with the fixed function $\eta$. Passing to the limit $N\to\ii$ and then $\ell\to\ii$ using Theorem~\ref{lem:limit_simplices} gives the lower bound
\begin{equation}
\liminf_{N\to\ii}\inf_{\substack{\sqrt{\rho}\in H^1(\R^3)\\ \1_{\Omega_N^-}\leq \rho\leq \1_{\Omega_N^+} }}\frac{E_\hbar(\rho)}{|\Omega_N|}\geq  \,e_{\rm UEG}\big(\hbar^2\big) .
\end{equation}

The upper bound is more complicated. The method introduced in~\cite{HaiLewSol_1-09} works here but, unfortunately, we cannot directly apply the abstract theorem proved in~\cite{HaiLewSol_1-09}, because the assumption (A4) of~\cite{HaiLewSol_1-09} is not obviously verified in our situation (the assumption (A4) essentially requires that the energy be subadditive up to small errors). So, instead we follow the proof of~\cite[p.~475--483]{HaiLewSol_1-09} line-by-line and bypass (A4) at the only place where it was used in~\cite{HaiLewSol_1-09}. For shortness we only explain the difference without providing all the details of the proof. 

Similarly to what we have done in the proof of Theorem~\ref{thm:UEG_thermo_limit}, the idea is to use one big simplex $\bDelta'_N\supset \Omega_N$ of volume proportional to $\Omega_N$ together with a tiling of small simplices of side length $\ell\ll \ell_N\ll L_N$. The first step is to replace $\Omega_N$ by its inner approximation $A_N$ which is the union of all the simplices that are inside $\Omega_N$. More precisely, this amounts to replacing the optimal $\rho_N$ satisfying the constraint $\1_{\Omega_N^-}\leq \rho_N\leq \1_{\Omega_N^+}$ by $(\1_{A_N}\ast\eta)\rho_N$.
In~\cite[Eq.~(42)]{HaiLewSol_1-09} the property (A4) was used to deduce that
$$E_\hbar(\Omega_N)\leq E_\hbar(A_N)+o(|\Omega_N|).$$
Here the new density $\rho_N(\1_{A_N}\ast\eta)$ satisfies the constraint 
$$\1_{\Omega_N^-}\leq (\1_{A_N}\ast\eta)\rho_N\leq \1_{\Omega_N^+}$$ 
because $\1_{A_N}\ast\eta$ takes values in $(0,1)$ only at a distance from the boundary of $\Omega_N$ proportional to $\ell\ll\ell_N$. The definition of $\Omega_N$ with the minimum over all the densities satisfying $\1_{\Omega_N^-}\leq \rho\leq \1_{\Omega_N^+}$ implies immediately that the energy goes up:
$$E_\hbar(\rho_N)\leq E_\hbar(\1_{A_N}\ast\eta).$$
The rest of the proof then follows that of~\cite{HaiLewSol_1-09} \emph{mutatis mutandis}.

\medskip

Since the quantum energy $\cE_\hbar$ is linear and increasing in $\hbar^2$, the minimum $E_\hbar(\rho)$ is concave non-decreasing in $\hbar^2$. By passing to the pointwise limit we obtain that the limit $e_{\rm UEG}(\lambda)$ is concave non-decreasing in $\lambda=\hbar^2$. It remains to show the limits~\eqref{eq:limit_quantum_classical_small_rho} and~\eqref{eq:limit_quantum_classical_large_rho} of $e_{\rm UEG}(\lambda)$ at small and large $\lambda$.

\medskip

\noindent\textbf{Step 4. Limit \eqref{eq:limit_quantum_classical_large_rho} as $\lambda\to\ii$.}  From Remark~\ref{rmk:upper_bound_large_rho}, we have 
$$e_{\rm UEG}\big(\lambda)\leq c_{\rm TF}\lambda-c_{\rm D}.$$
In order to prove the lower bound, we consider a large simplex $L\bDelta$ and $\Gamma$ a quantum state minimizing $E_\hbar(\1_{L\bDelta}\ast\eta)$. Then we can write
\begin{multline}
\cE_\hbar(\Gamma)=\sum_{n=1}^\ii \tr_{L^2_a((\R^3\times\{1,...,q\})^n)}\Bigg(-\hbar^2\sum_{j=1}^n\Delta_{x_j}+\sum_{1\leq j<k\leq n}\frac{1}{|x_j-x_k|}\Bigg)\Gamma_n\\-\int_{L\bDelta}\int_{\R^3}\frac{\rho_\Gamma(y)}{|x-y|}\,dx\,dy+\frac12\int_{L\bDelta}\int_{L\bDelta}\frac{dx\,dy}{|x-y|}\\
-\frac12\int_{\R^3}\int_{\R^3}\frac{\big(\rho-\1_{L\bDelta}\big)(x)\big(\rho-\1_{L\bDelta}\big)(y)}{|x-y|}dx\,dy.
\label{eq:decomp_Jellium}
\end{multline}
The last term is proportional to
\begin{multline*}
L^5\int_{\R^3}\frac{|\widehat{\1_{\bDelta}}(k)|^2|1-(2\pi)^{3/2}\widehat{\eta}(k/L)\big|^2}{|k|^2}\,dk\\
=L^3\int_{\R^3}|\widehat{\1_{\bDelta}}(k)|^2\bigg|\frac{k}{|k|}\cdot\underbrace{\int_{\R^3}x\,\eta(x)\,dx}_{=0}\bigg|^2\,dk+o(L^3)
\end{multline*}
and hence disappears in the thermodynamic limit.
Now if we forget the constraint that $\rho_\Gamma=\1_{L\bDelta}\ast\eta$ and take the thermodynamic limit, we get the Jellium energy $e_{\rm Jel}(\lambda)$ which was studied in~\cite{LieNar-75,GraSol-94}:
\begin{multline*}
e_{\rm Jel}(\lambda)\\=\lim_{L\to\ii}\frac{1}{|L\bDelta|}\inf_{\Gamma}\bigg\{\sum_{n=1}^\ii \tr_{L^2_a((\R^3\times\{1,...,q\})^n)}\Bigg(-\hbar^2\sum_{j=1}^n\Delta_{x_j}+\sum_{1\leq j<k\leq n}\frac{1}{|x_j-x_k|}\Bigg)\Gamma_n\\
-\int_{L\bDelta}\int_{\R^3}\frac{\rho_\Gamma(y)}{|x-y|}\,dx\,dy+\frac12\int_{L\bDelta}\int_{L\bDelta}\frac{dx\,dy}{|x-y|}\bigg\}.
\end{multline*}
Coming back to~\eqref{eq:decomp_Jellium} we get
$$e_{\rm UEG}(\lambda)\geq e_{\rm Jel}(\lambda).$$ 
Graf and Solovej have proved in~\cite{GraSol-94} that
$$e_{\rm Jel}(\lambda)\geq c_{\rm TF}\lambda-c_{\rm D}+O(\lambda^{-1/5+\eps}).$$
Strictly speaking,~\cite{GraSol-94} deals with the canonical case but the proof works exactly the same for the grand-canonical energy.
Hence we immediately obtain~\eqref{eq:limit_quantum_classical_large_rho}.

\medskip

\noindent\textbf{Step 5. Limit \eqref{eq:limit_quantum_classical_small_rho} as $\lambda\to0$.}
Next we turn to the proof of~\eqref{eq:limit_quantum_classical_small_rho}, for which we only have to derive the upper bound, since obviously $E_\hbar\geq E_0$. Let $C$ be the unit cube and $C_n=nC$ be the cube of volume $n^3$. Let $\bP$ be an $N$-particle probability such that $\rho_\bP=\1_{C_n}$ and 
$$C(\bP)-\frac12\int_{C_n}\int_{C_n}\frac{dx\,dy}{|x-y|}=E(\1_{C_n}).$$
It is proved in~\cite{ButChaPas-17} that there exists a $\delta'=\delta'(n)>0$ such that $\bP$ is supported on the set where all the particles stay at a distance $\delta'$ from each other, that is,  $|x_j-x_k|\geq\delta'$ for $j\neq k$, $\bP$-almost surely. Although we expect that $\delta'$ is independent of $n$, the argument in~\cite{ButChaPas-17} only gives $\delta'(n)>c/n^4$. This is sufficient for our proof since $n$ will be fixed until the end of the argument.

The idea of the proof is to place $k^3$ copies of this cube in order to build a much larger cube  of volume $k^3n^3$, and then to construct a quantum state by replacing each pointwise particle located at $x_j$ by a quantum one having density $\chi^2(\cdot-x_j)$, where $\chi$ is a smooth function with compact support. Unfortunately, some of the particles can get close to the particles of another cube when they approach the boundary, and the overlap of the functions create some normalization issues. The particles can form clusters of at most 8 particles, when they are in a corner of a cube. It is possible to orthogonalize the overlapping quantum states by using the recent smooth extension of the Hobby-Rice theorem proved in~\cite{LazLie-13,Rutherfoord-13}, which was motivated by the representability of currents in density functional theory~\cite{LieSch-13}. But so far there does not exist any estimate on the resulting kinetic energy.

In order to bypass this difficulty, we insert a layer of unit cubes between the different copies of $C_n$, as displayed in Figure~\ref{fig:trial_state_UEG}, and form a slightly larger cube $C'_k$ of volume $(nk+k+1)^3$. We call $p\simeq n^2k^3$ the number of such unit cubes and $r_1,...,r_p$ their centers. In these cubes we place the particles on a subset of the cubic lattice and average over the positions of this lattice as was done in~\cite{LewLie-15}. In other words, we use the strongly correlated $p$-particle probability density
$$d\bQ(y_1,...,y_p)=\int_{[-1/2,1/2]^3}\delta_{r_1+\tau}(y_1)\cdots \delta_{r_p+\tau}(y_p)\,d\tau$$
which has the constant density $\rho_{\bQ}=\sum_{j=1}^p\1_{r_j+[-1/2,1/2]^3}$. Finally, we denote by $\bP'_k$ the tensor product of $\bQ$ and of the $k^3$ independent copies of $\bP$. With this construction we have gained that the clusters can never contain more than two particles at a distance $\leq\delta'$ from each other, instead of 8. This allows us to use the simpler orthonormalization procedure of~\cite{Harriman-81,Lieb-83b}. Since the volume occupied by the corridors is small compared to the overall volume, this will only generate a small error in our estimate.

\begin{figure}[h]
\includegraphics[width=10cm]{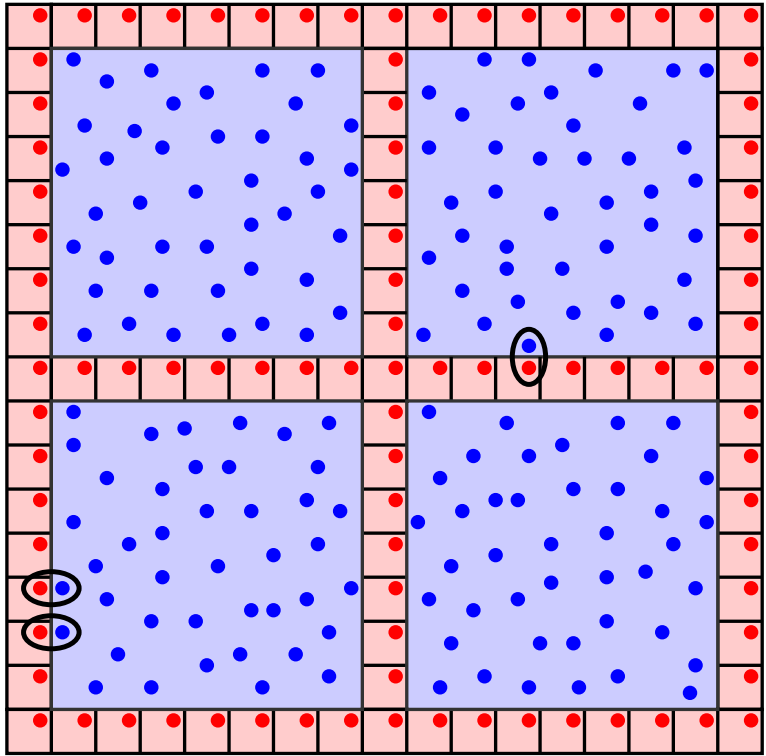}
\caption{The trial state used to show the convergence $e_{\rm UEG}(\lambda)\to e_{\rm UEG}$ as $\lambda\to0$ (semi-classical limit). \label{fig:trial_state_UEG}}
\end{figure}

Let us choose for instance $\chi=\sqrt{\frac{105}{32\pi}}(1-|x|^2)_+$, with the constant chosen such that $\int_{\R^3}\chi^2=1$. Let $\chi_\delta(x)=\delta^{-3/2}\chi(x/\delta)$ where $\delta$ is any fixed number such that $0<\delta\leq \min(\delta',1)/3$. For all positions $x_1,...,x_N$ of the $N$ particles, the functions $\chi_\delta(\cdot-x_i)$ are orthogonal to each other except when the supports of the $\chi_\delta(\cdot-x_j)$ overlap. We make them orthogonal by using the method of Harriman~\cite{Harriman-81} and Lieb~\cite{Lieb-83b}. From the preceding discussion we can assume that we only have two such functions $\chi_\delta(\cdot-x_1)$ and $\chi_\delta(\cdot-x_2)$ with $|x_1-x_2|\leq\delta$. After a rotation and a translation we may assume that $x_1=0$ and $x_2=|x_2|e_1$. We then take
$$f^{(1)}_{x_1,x_2}(y)=\chi_\delta(y-x_1)=\chi_\delta(y)$$
and
$$f^{(2)}_{x_1,x_2}(y)=\chi_\delta(y-x_2)\exp\left(2i\pi\frac{\dps\int_{(-\ii;y^{(1)}]\times\R^2}\chi_\delta(z)\chi_\delta(z-x_2)\,dz}{\dps\int_{\R^3}\chi_\delta(z)\chi_\delta(z-x_2)\,dz}\right).$$
It can be checked that $\int f^{(1)}_{x_1,x_2}f^{(2)}_{x_1,x_2}=0$.
In order to estimate the kinetic energy, we introduce the function
\begin{align*}
\eta(t)&=\int_{\R^2}\chi_\delta(te_1+u)\chi_\delta(te_1+u-x_2)\,du \\
&=\frac{105}{16\delta}\int_0^\ii\left(1-\left(\frac{t}{\delta}\right)^2-r^2\right)_+\left(1-\left(\frac{t-|x_2|}{\delta}\right)^2-r^2\right)_+r\,dr.
\end{align*}
A calculation gives
$$\int_\R\eta(t)\,dt\underset{|x_2|\to 2\delta^-}{\sim}\frac{315}{512}\left(2-\frac{|x_2|}{\delta}\right)^4.$$
Next we observe that 
$$\int_{\R^3}|\nabla f_{x_1,x_2}^{(2)}|^2\leq 2\int_{\R^3}|\nabla\chi_\delta|^2+8\pi^2\frac{\dps\int_{\R^3} \eta(z^{(1)})^2\chi_\delta(z-x_2)^2\,dz}{\dps\left(\int_{\R}\eta(t)\,dt\right)^2}.$$
and
\begin{multline*}
\int_{\R^3} \eta(z^{(1)})^2\chi_\delta(z-x_2)^2\,dz= \frac{c}\delta \int_\R\eta(t)^2\left(1-\left(\frac{t-|x_2|}{\delta}\right)^2\right)^3_+\,dt\\
\leq\frac{c\left(2-\frac{|x_2|}{\delta}\right)_+^{10}}{\delta^2}. 
\end{multline*}
From this we conclude that $\int_{\R^3}|\nabla f_{x_1,x_2}^{(2)}|^2$ is uniformly bounded with respect to $x_1$ and $x_2$, by a constant times $\delta^{-2}$. 

For all positions $X=(x_1,...,x_N)\in\R^{3N}$ of the $N$ particles, we use the previous construction for all the pairs of particles which are at a distance $\leq\delta$ and denote by $f_X^{(1)},...,f_X^{(N)}$ the corresponding functions. Those are now orthogonal and we can define the Slater determinant
$$\Psi_{X}(y_1,...,y_N)=\frac{1}{\sqrt{N!}}\det\big(f^{(i)}_{X}(y_j)\big).$$
This function satisfies
$$\rho_{|\Psi_{X}|^2}=\sum_{j=1}^N\chi_\delta(y-x_j)^2,$$
and 
\begin{align*}
&\pscal{\Psi_{X},\left(-\hbar^2\sum_{j=1}^N\Delta_{y_j}+\sum_{1\leq j<k\leq N}\frac{1}{|y_j-y_k|}\right)\Psi_{X}}\\
&\qquad=\hbar^2\sum_{j=1}^N\int_{\R^3}|\nabla f_{X}^{(j)}|^2 +\frac12 D\left(\sum_{j=1}^N\chi_\delta(y-x_j)^2,\sum_{j=1}^N\chi_\delta(y-x_j)^2\right)\\
&\qquad\qquad\qquad-\frac{1}{2}\int_{\R^3}\int_{\R^3}\frac{\left|\sum_{j=1}^Nf_{X}^{(j)}(y)\overline{f_{X}^{(j)}(z)}\right|^2}{|y-z|}
\,dy\,dz\\
&\qquad=\hbar^2\sum_{j=1}^N\int_{\R^3}|\nabla f_{X}^{(j)}|^2 +\sum_{1\leq j< k\leq N}D\left(\chi_\delta(\cdot-x_j)^2,\chi_\delta(\cdot-x_k)^2\right)\\
&\qquad\qquad\qquad-\sum_{1\leq j< k\leq N}D\left(f^{(j)}_X\overline{f^{(k)}_X},f^{(j)}_X\overline{f^{(k)}_X}\right).
\end{align*}
Since $D$ is a positive quadratic form, we have
$$D\left(f^{(j)}_X\overline{f^{(k)}_X},f^{(j)}_X\overline{f^{(k)}_X}\right)\geq0$$
for every $j<k$. Since $\chi$ is radial, we have 
$$\int_{\R^3}\frac{\chi_\delta(y)^2}{|y-z|}\,dy=\int_{\R^3}\frac{\chi_\delta(y)^2}{\max(|y|,|z|)}\,dy\leq \frac{\int_{\R^3}\chi_\delta^2}{|z|},$$
by Newton's theorem, and hence
\begin{multline*}
D\left(\chi_\delta(\cdot-x_j)^2,\chi_\delta(\cdot-x_k)^2\right)\\=\int_{\R^3}\int_{\R^3}\frac{\chi_\delta(y-x_j)^2\chi_\delta(z-x_k)^2}{|y-z|}\,dy\,dz \leq \frac{1}{|x_j-x_k|}. 
\end{multline*}
Using that the kinetic energy is bounded uniformly with respect to $x_1,...,x_N$ and that we have of the order of $n^2k^3=N/n$ little cubes, we obtain
\begin{multline*}
\pscal{\Psi_{X},\left(-\hbar^2\sum_{j=1}^N\Delta_{y_j}+\sum_{1\leq j<k\leq N}\frac{1}{|y_j-y_k|}\right)\Psi_{X}}\\
\leq cN\hbar^2\left(1+\frac{1}{\delta^2n}\right)+\sum_{1\leq j<k\leq N}\frac{1}{|x_j-x_k|}.
\end{multline*}
We finally introduce the corresponding quantum state
$$\Gamma=\int_{\R^{3N}}|\Psi_{x_1,...,x_N}\rangle\langle\Psi_{x_1,...,x_N}|\,d\bP'_k(x_1,...,x_N)$$
which has the density $\rho_\Gamma=\rho_{\bP'_k}\ast\chi_\delta^2=\1_{C'_k}\ast\chi_\delta^2$ and the indirect energy
\begin{align*}
\cE_\hbar(\Gamma)&\leq cN\hbar^2\left(1+\frac{1}{\delta^2n}\right)+C(\bP'_k)-D\big(\1_{C'_k}\ast\chi_\delta^2,\1_{C'_k}\ast\chi_\delta^2\big) \\
&= cN\hbar^2\left(1+\frac{1}{\delta^2n}\right)+k^3\;E(\1_{C_n})+E(\bQ)\\
&\qquad\qquad+D(\1_{C'_k},\1_{C'_k})-D\big(\1_{C'_k}\ast\chi_\delta^2,\1_{C'_k}\ast\chi_\delta^2\big).
\end{align*}
The last term is proportional to
$$N^{5/3}\int_{\R^3}\frac{|\widehat{\1_C}(p)|^2}{|p|^2}\left(1-(2\pi)^3|\widehat{\chi_\delta^2}(p/N^{1/3})|^2\right)\,dp=\frac{N}{3}\int_{\R^3}|x|^2\chi_\delta(x)^2\,dx+o(N).$$
Also, we have for a universal constant $c$
$$E(\bQ)=\sum_{1\leq j<k\leq p}\left(\frac{1}{|x|}-\frac{1}{|x|}\ast\1_{C}\ast\1_{C}\right)(r_j-r_k)-\frac{p}{2}D(\1_{C},\1_{C})\leq c n^2k^3$$
since
$$\frac{1}{|x|}-\frac{1}{|x|}\ast\1_{C}\ast\1_{C}=O_{|x|\to\ii}\left(\frac{1}{|x|^4}\right).$$
Taking the limit $k\to\ii$ first we find
\begin{equation}
e_{\rm UEG}(\hbar^2)\leq c\hbar^2\left(1+\frac{1}{\delta^2n}\right)+\frac{c}{n}+\frac{E(\1_{C_n})}{n^3}+\frac{\delta^2}3 \int_{\R^3}|x|^2\chi(x)^2\,dx.
\label{eq:final_estimate_UEG_semi-classical}
\end{equation}
Taking now the limit $\hbar\to0$ we obtain
$$\limsup_{\hbar\to0}e_{\rm UEG}(\hbar^2)\leq \frac{c}{n}+\frac{E(\1_{C_n})}{n^3}+\frac{\delta^2}3 \int_{\R^3}|x|^2\chi_\delta(x)^2\,dx.$$
Here we have to take $\delta\to0$ first (recall that $\kappa$ does not depend on $\delta$, but $\delta\leq\min(\delta'(n),1)/3$), and then $n\to\ii$. We find
$$\limsup_{\hbar\to0}e_{\rm UEG}(\hbar^2)\leq e_{\rm UEG}$$
as we wanted.
\end{proof}


\end{document}